%% file: main.tex
\documentclass[11pt]{article}

\usepackage{geometry}
\usepackage{graphicx}
\usepackage{array}
\usepackage{caption}
\usepackage[authordate-trad,noibid,backend=biber,natbib]{biblatex-chicago}
\usepackage{amsmath}
\usepackage{amssymb}
\usepackage{amsthm}
\usepackage[nopatch]{microtype}
\usepackage{booktabs}
\usepackage{longtable}
\usepackage{pdflscape}
\usepackage{placeins}
\usepackage{tikz}                                  % design flowchart, Figure 2
\usetikzlibrary{shapes, arrows, arrows.meta, calc}
\usepackage[hidelinks]{hyperref} % hidelinks: no colored boxes around every
\usepackage{xurl}      % lets \href/\url break at any character instead of
\renewcommand\appendix{\par
  \setcounter{section}{0}
  \setcounter{subsection}{0}
  \gdef\thesection{Appendix \arabic{section}}
}

\theoremstyle{plain}
\newtheorem{assumption}{Assumption}
\newtheorem{prop}{Proposition}
\newtheorem{lem}{Lemma}

\theoremstyle{remark}
\newtheorem{rem}{Remark}

\newcommand{\advice}[1]{%
  \par\smallskip
  \noindent{\sffamily\itshape #1.}\enspace\ignorespaces
}

\usepackage{xcolor}

\title{Analyzing Within-Subject Experiments: Identification, Testing, and
Sensitivity\footnote{Authors are listed in alphabetical order.}}

\author{%
  Shiyao Liu\thanks{China Center for Economic Research, Institute of
  South-South Cooperation and Development, National School of Development,
  Peking University, Beijing, China; Governance Lab, Massachusetts Institute
  of Technology, Cambridge, Massachusetts, USA; shiyaoliu@nsd.pku.edu.cn}
  \and Junni Zhang\thanks{China Center for Economic Research, National
  School of Development and Center for Statistical Science, Peking
  University, Beijing, China}
}
\date{}

\begin{document}

%% Literature-review macros (LitReviewZeta*, LitReviewAcrossPaperDelta*) are
%% generated by simu/litreview_appendix_table.R and used in both Section 4
%% (covariance-plausibility paragraph) and Section 5 (Table 4 summary), so
%% they must load before Section 4, not just before Table 4's \input.
\input{tables/litreview_sensitivity_macros}

\maketitle

\begin{abstract}
Recent work encourages political scientists to move from post-only toward within-subject designs for improved precision from repeated measurements. We formalize a potential-outcomes framework for two-period within-subject designs that allows for unequal allocation and heterogeneous treatment and carryover effects. We characterize the pooled estimator and evaluate the carryover test used to justify pooling. We find: first, pooling identifies the average treatment effect only when the gap in the average carryover effects is zero across the two treatment sequences. The unit-clustered standard error for the pooled estimator is identical to its design-based counterpart. Second, under mild conditions, the carryover test has strictly less power than the average-treatment-effect test with post-only data. The resulting two-step procedure, which pools only after a nonrejected test, produces confidence intervals that typically undercover. When the gap is zero, undercoverage occurs if and only if pooling is more efficient than post-only analysis, precisely when the within-subject design is worthwhile. When the gap is nonzero, undercoverage is typical unless the gap or sample size is large. Third, we derive a sensitivity analysis and find published conclusions robust to plausible carryover gaps. We therefore endorse within-subject designs but recommend justifying a zero carryover gap substantively and reporting sensitivity to departures.
\end{abstract}

\noindent\textbf{Keywords:} within-subject design, carryover effects, potential outcomes, sensitivity analysis

\section{Introduction}

Researchers in experimental social science work with limited budgets and treatment effect sizes that are often far smaller than those typical of clinical trials, making efficiency a central concern. They have increasingly been advised to adopt repeated-measures
designs, in which each respondent contributes more than one outcome
measurement \parencite{clifford2021increasing}. Two variants are
commonly recommended: in a pre-post design, respondents report the
outcome both before and after treatment. In a within-subject design,
respondents are sequentially exposed to two opposing treatment conditions
and the outcome is measured after each.

This advice has gained traction. \textcite{JORDAN_OLLERENSHAW_TREXLER_2026} identify 83 peer-reviewed studies that cited \textcite{clifford2021increasing} to justify a repeated-measures design in the first four years since its publication, including 33, roughly 40\%, that adopted the within-subject variant. Yet the conventional post-only design, where respondents are exposed to a single treatment, remains the disciplinary default \parencite{diaz2026balancing}. This difference is puzzling because a within-subject design often adds a second measurement for the same respondents at modest cost. Its first round remains a valid post-only experiment, allowing researchers to preserve that analysis while gaining the option to use data from both periods.

The main obstacle is carryover: first-period treatment may change how respondents react to second-period treatment \parencite{willan1986carryover}. Figure~\ref{fig:Replication-of-Study} illustrates why a carryover test does not resolve this concern. The left panel reproduces the post-only and pooled estimates from Study 1 of \textcite{clifford2021increasing}, where the carryover test does not reject a zero carryover gap. In the right panel, we subtract 0.20 scale points from the second-period outcome for respondents assigned treatment and then control, to fabricate a carryover effect, but nothing for the reverse sequence. The same test again fails to reject a zero carryover gap.

\begin{figure}
\centering \includegraphics[width=0.47\textwidth]{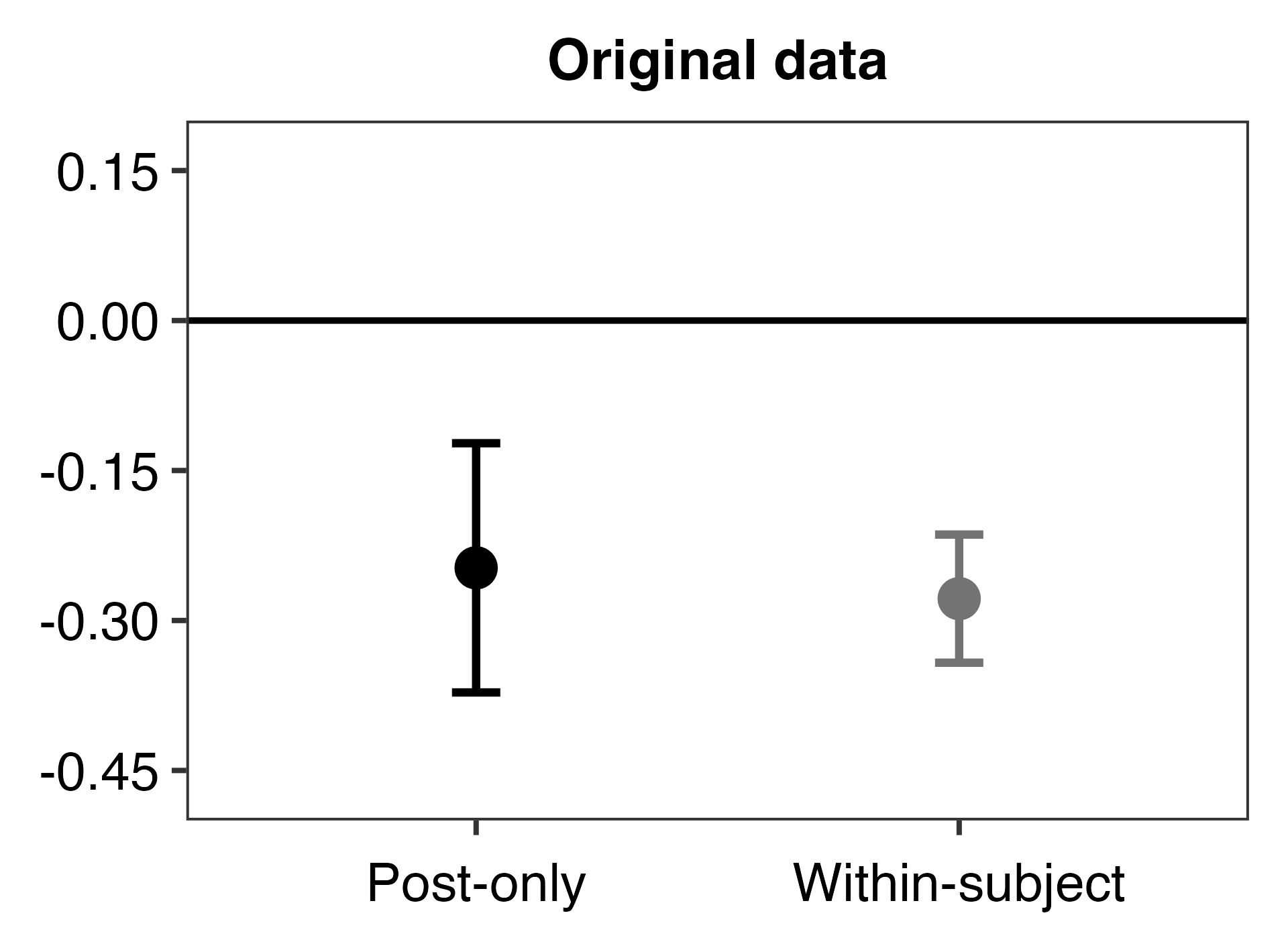}\hfill
\includegraphics[width=0.47\textwidth]{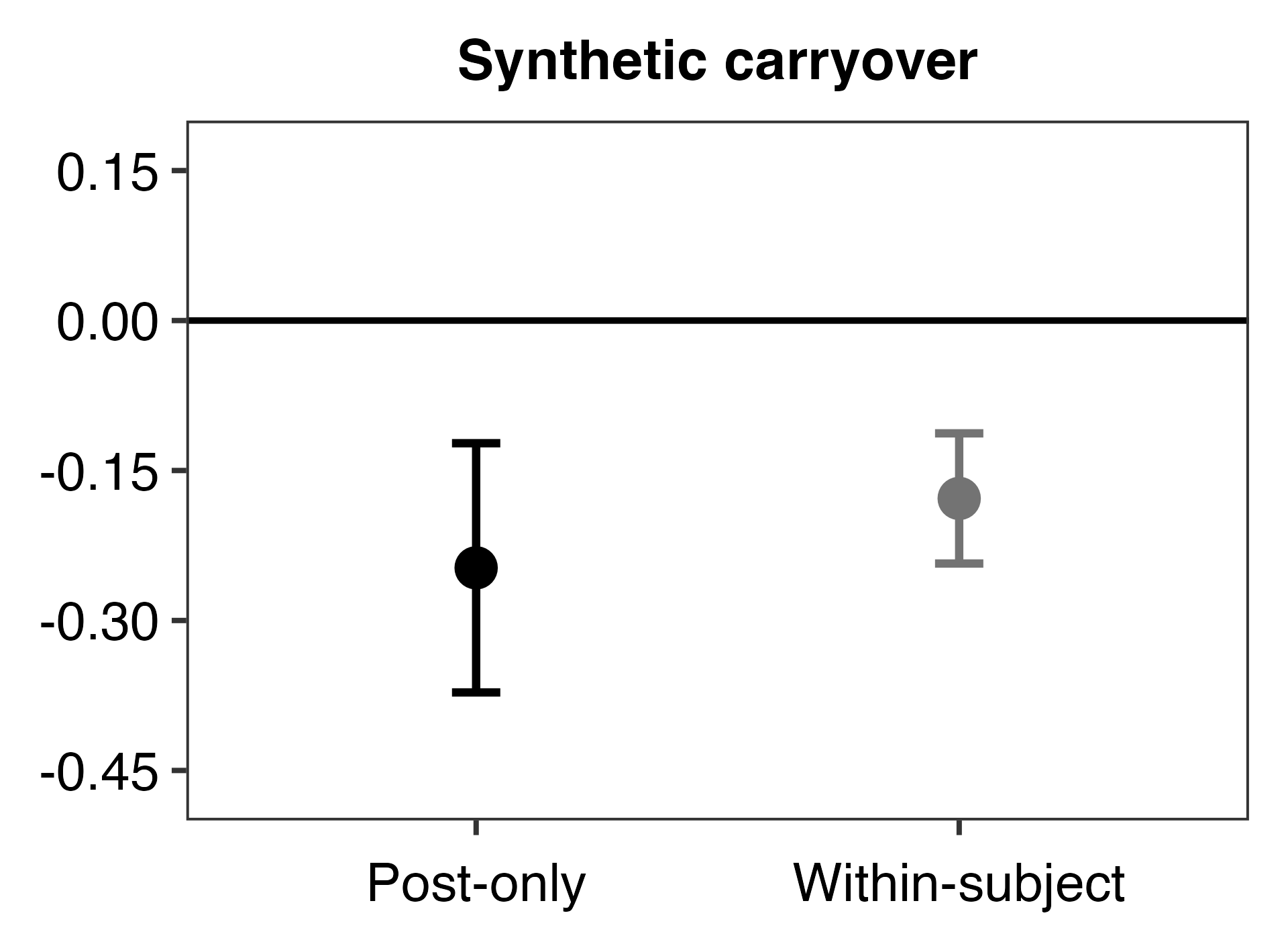}

\caption{Post-only and pooled estimates from Study 1 of \textcite{clifford2021increasing} using the original data (left) and after introducing a synthetic difference in average carryover effects between treatment sequences (right).}\label{fig:Replication-of-Study}
\end{figure}

An analyst therefore pools in both panels, although only the original data, definitely not the synthetic data, may satisfy the identifying condition. In the synthetic data, the test fails to detect the difference in average carryover effects, and the pooled estimator remains biased for the average treatment effect even in large samples. Consistent with this concern, \textcite{JORDAN_OLLERENSHAW_TREXLER_2026} indeed find that repeated-measures estimates are attenuated relative to post-only estimates in substantially larger experiments.

We formalize these concerns. Our first contribution clarifies what pooling requires. In a general potential-outcomes framework for the two-period within-subject design that allows for unequal allocation and heterogeneous treatment and carryover effects, pooling identifies the average treatment effect when the
gap in the average carryover effects is zero across the two treatment sequence. This condition does not require carryover to be zero for every unit, or even on average within either sequence. In fact, it requires only that average carryover be equal across sequences. \textcite{shi2024behavioral} derive the same identifying condition under unequal allocation and heterogeneous carryover effects. We extend the design-based analysis and also study when common regression specifications reproduce the design-based point estimate and how their clustered standard errors compare with the design-based standard error.

Our second contribution evaluates the carryover test and the two-step procedure that pools only when the carryover test fails to reject. Under mild conditions, the carryover test is less powerful than the corresponding average-treatment-effect test with the post-only data. Estimator selection based on the carryover test also produces confidence intervals that typically undercover. Extending \textcite{freeman1989performance} beyond an additive normal model with equal allocation, we show that, when the carryover gap is zero, undercoverage occurs precisely when pooling is more efficient than post-only analysis, the setting in which the within-subject design is most valuable. With a fixed nonzero carryover gap, undercoverage is typical for parameter values common in social science: as the sample size grows, the pooled interval initially preferred by the carryover test narrows around a biased value (the sum of treatment effect and half of the carryover gap), and coverage recovers only when the carryover test reliably selects the post-only analysis.

Our third contribution is a sensitivity analysis that asks how large the carryover gap must be to overturn a conclusion based on pooling. Applying it to published within-subject studies suggests that many conclusions are robust to plausible gaps.

Taken together, our results support the case for within-subject designs while changing how researchers should analyze them. The post hoc carryover test should not determine whether to pool. Researchers should instead justify equality of average carryover across sequences as the identifying assumption, explain why any departure should be substantively small, use the appropriate regression specifications, and report sensitivity to larger departures.

\section{What Does Pooling Estimate in a Within-Subject Design?}\label{sec:Within-Subject-Design-in}

We use Study 1 of \textcite{clifford2021increasing} \parencite{clifford2021data} to illustrate a typical
within-subject design. The experiment revisits the \textcite{smith1987we}
question-wording study: respondents report their support for government
spending on a three-point scale, with the target described as either
``welfare'' or ``assistance to the poor.''

For this application, we define ``welfare'' as the treatment condition, $D=1$, and ``assistance to the poor'' as the control condition, $D=0$. We use “control” only as a label: here, it refers to the alternative wording rather than the absence of an intervention. More generally, $D=1$ can denote any focal condition and $D=0$ its comparator.

Let $n$ denote the total number of respondents. As Figure~\ref{fig:flowchart} illustrates, the experiment randomly
assigns $n_1$ respondents to sequence $(1,0)$, in which they receive the
treatment in Period 1 and the control in Period 2. The remaining
$n_2=n-n_1$ respondents follow sequence $(0,1)$, receiving the control first
and the treatment second. The outcome is measured immediately after each
exposure. The first measurement alone forms a valid post-only experiment.
Using both measurements may improve precision, but it changes what the
resulting estimator identifies when first-period exposure carries over into
the second period.

\begin{figure}[h]
\centering
\begin{tikzpicture}[
    > = Stealth,
    every node/.style={rectangle, draw, minimum width=1.8cm, minimum height=0.6cm, align=center},
]
\node (s1) at (0, 0.8) [minimum width=2cm] {G1: $n_1$ units};
\node (s2) at (0,-0.8) [minimum width=2cm] {G2: $n_2$ units};

\node (g11) at (2.5, 0.8) {$D=1$};
\node (g21) at (2.5,-0.8) {$D=0$};
\node (m1)  at (4.5, 0)   {Measure 1};

\node (g12) at (6.5, 0.8) {$D=0$};
\node (g22) at (6.5,-0.8) {$D=1$};
\node (m2)  at (8.5, 0)   {Measure 2};

\draw[dashed] ($(g11.north west)+(-0.3,0.3)$) rectangle ($(g21.south east)+(0.3,-0.3)$);
\node[draw=none] at (2.5, -1.8) {$\hat{\mu}_1$};

\draw[dashed] ($(g12.north west)+(-0.3,0.3)$) rectangle ($(g22.south east)+(0.3,-0.3)$);
\node[draw=none] at (6.5, -1.8) {$\hat{\mu}_2$};

\draw[-, thick] (s1)--(g11);
\draw[-, thick] (s2)--(g21);
\draw[-, thick] (g11)--(m1);
\draw[-, thick] (g21)--(m1);
\draw[-, thick] (m1)--(g12);
\draw[-, thick] (m1)--(g22);
\draw[-, thick] (g12)--(m2);
\draw[-, thick] (g22)--(m2);
\end{tikzpicture}

\caption{The two treatment sequences in a within-subject design}\label{fig:flowchart}
\end{figure}

\paragraph*{Potential Outcomes}

Let $D_{it}\in\{0,1\}$ denote unit $i$'s treatment status in period $t$. There are only two possible treatment sequences: $(D_{i1},D_{i2})=(1,0)$ and $(D_{i1},D_{i2})=(0,1)$.
Under the Stable Unit Treatment Value Assumption (SUTVA;
Assumption~\ref{as:sutva} in \ref{app:assumptions}), write the potential
outcome for unit $i$ in period 
$t$, given that it receives $d_1$ in the first period and $d_2$ in the second, as
\begin{equation}
Y_{it}(d_1,d_2)
=
\pi_{it}+d_t\tau_i+(t-1)\delta_{i,d_1}.
\label{eq:saturated-model}
\end{equation}
Here, $\pi_{it}$ is the baseline outcome, which may vary across units
and periods; $\tau_i$ is the unit treatment effect for $i$ 
relative to the control condition; and $\delta_{i,d_1}$ is unit $i$'s
second-period carryover after receiving $d_1$ in the first period.
Treatment and carryover effects may both vary across units, and
carryover may differ between treatment sequences. Because carryover arises
from prior exposure, it enters only the second-period outcome. Table~\ref{tab:Saturated-Model-of} summarizes the potential outcomes for the two treatment sequences.

\begin{table}[!h]
\begin{centering}
\begin{tabular}{ccc}
\hline
 & Period 1 & Period 2\tabularnewline
\hline
Potential Sequence $(0,1)$ & $Y_{i1}(0,1)=\pi_{i1}$ & $Y_{i2}(0,1)=\pi_{i2}+\tau_i+\delta_{i,0}$\tabularnewline
Potential Sequence $(1,0)$ & $Y_{i1}(1,0)=\pi_{i1}+\tau_i$ & $Y_{i2}(1,0)=\pi_{i2}+\delta_{i,1}$\tabularnewline
\hline
\end{tabular}
\par\end{centering}
\caption{Potential outcomes under the two treatment sequences}\label{tab:Saturated-Model-of}
\end{table}

\paragraph*{What the Estimators Identify}

Let $Y_{it}=Y_{it}(D_{i1},D_{i2})$ denote the observed outcome for unit $i$ in period $t$. Let $\tau\equiv\mathbb E[\tau_i]$ denote the average treatment effect.
Random assignment (Assumption 2 in Appendix 1.1) implies that the first-period difference in average value of $Y_{i1}$ between treatment and control identifies $\tau$. The second-period contrast in average value of $Y_{i2}$ may identify
something different because it also contains the carryover gap between sequences. Define that gap as
\[
\delta
\equiv
\mathbb E[\delta_{i,0}]
-
\mathbb E[\delta_{i,1}].
\]
From Table~\ref{tab:Saturated-Model-of}, the population contrasts in the two periods can be summarized as
\[
\begin{aligned}
\mu_1
&\equiv
\mathbb E[Y_{i1}\mid (D_{i1},D_{i2})=(1,0)]
-
\mathbb E[Y_{i1}\mid (D_{i1},D_{i2})=(0,1)]
=\tau,\\
\mu_2
&\equiv
\mathbb E[Y_{i2}\mid (D_{i1},D_{i2})=(0,1)]
-
\mathbb E[Y_{i2}\mid (D_{i1},D_{i2})=(1,0)]
=\tau+\delta.
\end{aligned}
\]
The pooled population contrast is
\begin{equation}
\mu_{co}\equiv
\frac{1}{2}(\mu_1+\mu_2)
=\tau+\frac{\delta}{2}.
\label{eq:pooled-contrast}
\end{equation}
Thus, using only the first period reproduces a post-only analysis. Using only
the second period adds the full carryover gap. Pooling both periods adds
one-half of that gap. We reserve formal statements and identification proofs in
\ref{app:identification}.

Pooling identifies the average treatment effect if and only if $\delta=0$. Importantly, this condition does not rule out carryover: it requires only that average carryover be equal across the two treatment sequences. Carryover may vary across units, and both sequence averages may be nonzero, provided that $\mathbb E[\delta_{i,0}]=\mathbb E[\delta_{i,1}]$.

For each unit, define the potential treated-minus control contrast as
\[
\Delta_i(1,0)=Y_{i1}(1,0)-Y_{i2}(1,0),
\qquad
\Delta_i(0,1)=Y_{i2}(0,1)-Y_{i1}(0,1).
\]
The corresponding observed treated-minus-control contrast is
\[
\Delta_i
\equiv
\begin{cases}
Y_{i1}-Y_{i2}, & (D_{i1},D_{i2})=(1,0),\\
Y_{i2}-Y_{i1}, & (D_{i1},D_{i2})=(0,1).
\end{cases}
\]
We can also write $\mu_{co}$ as
\[
\mu_{co}=\frac{1}{2}(\mathbb E[\Delta_i\mid (D_{i1},D_{i2})=(1,0)]+\mathbb E[\Delta_i\mid (D_{i1},D_{i2})=(0,1)]
\]
Therefore pooling averages within-unit contrasts from two independent sequence groups.

\paragraph*{Estimation and Asymptotic Behavior}

Let $\hat\mu_1$, $\hat\mu_2$ and $\hat\mu_{co}$ denote the sample analogs of $\mu_1$, $\mu_2$ and $\mu_{co}$. Define the variance scales
\[
\begin{aligned}
\sigma_1^2
&\equiv
\frac{n}{n_1}\mathbb V\!\left\{Y_{i1}(1,0)\right\}
+
\frac{n}{n_2}\mathbb V\!\left\{Y_{i1}(0,1)\right\},\\
\sigma_2^2
&\equiv
\frac{n}{n_2}\mathbb V\!\left\{Y_{i2}(0,1)\right\}
+
\frac{n}{n_1}\mathbb V\!\left\{Y_{i2}(1,0)\right\},\\
\sigma_{co}^2
&\equiv
\frac{n}{n_1}\mathbb V\!\left\{\Delta_i(1,0)\right\}
+
\frac{n}{n_2}\mathbb V\!\left\{\Delta_i(0,1)\right\},
\end{aligned}
\]
where $\mathbb V$ denotes population variances.

Under Assumptions~\ref{as:sutva}--\ref{as:regularity} (\ref{app:assumptions}),
\[
\begin{aligned}
\frac{\sqrt n(\hat\mu_1-\mu_1)}{\sigma_1}
&\overset{d}{\longrightarrow}\mathcal N(0,1),\\
\frac{\sqrt n(\hat\mu_2-\mu_2)}{\sigma_2}
&\overset{d}{\longrightarrow}\mathcal N(0,1),\\
\frac{2\sqrt n(\hat\mu_{co}-\mu_{co})}{\sigma_{co}}
&\overset{d}{\longrightarrow}\mathcal N(0,1).
\end{aligned}
\]

We only sketch the proof for the pooled estimator here and provide full derivations in~\ref{app:sampling}. Random assignment partitions respondents into two disjoint sequence groups, $(1,0)$ and $(0,1)$. The pooled estimator is one-half the sum of the groups’ mean values of $\Delta_i$. Applying the central limit theorem within each group establishes asymptotic normality, while independence across groups allows their variances to be added.

The design-based standard errors follow directly from sample analogs of the
population variances. Each component of $\sigma_1^2$, $\sigma_2^2$, and $\sigma_{co}^2$ is a variance of an observed outcome or within-unit contrast in one of the two sequence groups. All three are
therefore consistently estimable by replacing the population moments with
their corresponding sample moments. \ref{app:sampling} gives the
details.

\paragraph*{When Pooling Improves Precision} Now we compare the efficiency of the pooled estimator ($\hat\mu_{co}$) and the post-only estimator ($\hat\mu_1$), under zero carryover gap, i.e. $\delta=0$. Pooling is more efficient when
\[
\sigma_{co}^2<4\sigma_1^2.
\]

Expanding the variance of the within-unit differences ($\Delta_i(1,0)$ and $\Delta_i(0,1)$), we obtain
\[
\sigma_{co}^2=\sigma_1^2+\sigma_2^2-2\sigma_{12},
\]
where $\sigma_{12}$ combines the cross-period covariances of the two sequence
groups,
\[
\sigma_{12}
\equiv
\frac{n}{n_1}
\operatorname{Cov}\!\left\{Y_{i1}(1,0),Y_{i2}(1,0)\right\}
+
\frac{n}{n_2}
\operatorname{Cov}\!\left\{Y_{i1}(0,1),Y_{i2}(0,1)\right\}.
\]
Therefore, pooling is more efficient when
\[
\sigma_{12}>\frac{1}{2}(\sigma_2^2-3\sigma_1^2).
\]

Positive cross-period covariance lowers $\sigma_{co}^2$ and therefore makes
this condition more likely to hold. By contrast, unusually noisy
second-period outcomes or weak or negative cross-period covariance can
eliminate the precision gain. Moreover, $\sigma_2^2<\sigma_1^2$ is sufficient
for pooling to be more efficient regardless of the cross-period covariance,
since $|\sigma_{12}|\le\sigma_1\sigma_2$
(Proposition~\ref{prop:pooling-sufficient} in \ref{app:sampling}).

In most social-science applications, respondents are believed to behave
similarly across periods, giving positive cross-period covariance, and
Period-2 outcomes are usually comparable in magnitude to Period-1 outcomes.
Both favor pooling: among the papers with significant pooled estimates
reviewed in Section~\ref{sec:Sensitivity-Analysis}
(Table~\ref{tab:litreview-sensitivity-summary}), pooling is more efficient
in all.

Further, this efficiency condition has another important inferential consequence.
Section~\ref{sec:testing-carryover} shows that, when $\delta=0$,
it is exactly the condition under which confidence intervals reported after a
post hoc carryover test undercover.
Before turning to that result, the next section connects the design-based estimators and standard errors to regression specifications researchers
commonly use.

\section{Regression Implementation of Within-Subject Estimators}\label{sec:regression-implementation}

A single within-subject sample supplies both the Period-1 estimator
$\hat\mu_1$ and the pooled estimator $\hat\mu_{co}$. This section explains
how standard regressions can implement these estimators and how their point and
variance estimates relate to the design-based quantities in Section
\ref{sec:Within-Subject-Design-in}.

\paragraph{Research Specifications} Researchers often implement pooling with a treatment-only regression,
\[
y_{ij}=\beta_{0}+\beta_{1}D_{ij}+e_{ij}.
\]
A treatment-only OLS model leaves the within-unit dependence unrestricted when
inference is clustered by unit. A treatment-only random-effects model instead
posits
$e_{ij}=u_{i}+\varepsilon_{ij}$, where $u_{i}$ is a unit-specific
random intercept and $\varepsilon_{ij}$ is idiosyncratic, period-specific
noise. Thus the two approaches impose the same mean specification
but differ in their model for the errors. 

The treatment-only coefficient is a size-weighted average of the two
sequence-specific treated-minus-control contrasts,
$\hat\beta_{1}=\frac{n_{1}}{n}\bar{\Delta}_{(1,0)}+\frac{n_{2}}{n}\bar{\Delta}_{(0,1)}$, where $\bar{\Delta}_{(1,0)}$ and $\bar{\Delta}_{(0,1)}$ are the sample means of $\Delta_i$ for units in treatment sequences (1,0) and (0,1) (\ref{app:bare-pooled}). Under equal allocation, i.e.,  $n_1= n_2$, as in
\textcite{clifford2021increasing} and \gentextcite{JORDAN_OLLERENSHAW_TREXLER_2026},
the treatment-only coefficient equals $\hat\mu_{co}$, and the cluster-robust
variance estimator is weakly conservative relative to the design-based variance.

With unequal allocation, i.e.,  $n_1\ne n_2$, the treatment-only
coefficient therefore recovers $\mu_{co}$ only
under an additional restriction that is difficult to justify\footnote{Formally,
$(n_{1}-n_{2})\mathbb{E}[\pi_{i1}-\pi_{i2}]+n_{2}\mathbb{E}[\delta_{i,0}]-n_{1}\mathbb{E}[\delta_{i,1}]=0$;
see Equation~\eqref{eq:bare-pooled-restriction} in \ref{app:bare-pooled} for
the derivation. This restriction mixes the difference in the two sequence-specific average baseline outcomes with
the two sequence-specific average carryover effects, making it more
difficult to defend than the identifying condition $\delta=0$ in Section
\ref{sec:Within-Subject-Design-in}.}. An allocation-robust implementation is to use a period-interacted regression that includes the period dummy and its interaction with the treatment indicator,
\begin{equation}
y_{ij}=\beta_{0}+\beta_{1}D_{ij}+\beta_{2}I\left(j=2\right)+\beta_{3}D_{ij}I\left(j=2\right)+e_{ij},
\label{eq:period-interacted}
\end{equation}
and to use standard errors clustered at the unit level.

With the above regression specification, our findings are as follows. First, the connections between the coefficient estimates and the treatment effect estimators are:
\[
\hat\mu_1=\hat\beta_1,\quad \hat\mu_2=\hat\beta_1+\hat\beta_3,\quad
\hat\mu_{co}=\hat\beta_1+\tfrac12\hat\beta_3.
\]
Second, regardless of allocation balance, the unadjusted cluster-robust sandwich variance estimators (with clustering at the unit level) equal the design-based variances. Third, the OLS and random-effects regression models have the same coefficient estimates and the same unadjusted cluster-robust sandwich variance estimators. Software-specific small-sample corrections may apply to the variance estimators.

We reserve analytical details in \ref{app:bare-pooled} and \ref{app:saturated}, and use the within-subject arm of \gentextcite{clifford2021increasing} Study 1 as an illustration.

\paragraph{Illustration of Pooled Regressions}

Table \ref{tab:Comparison-of-Sample} compares design-based estimation, the paired $t$-test (labeled (0)), four pooled regression specifications (treatment-only OLS,
treatment-only random effects, period-interacted OLS, and
period-interacted random effects, labeled (1)-(4)), and an OLS that uses only period-1 data (labeled (5)). The comparison is based on two versions of
the within-subject arm of Study 1 in \citet{clifford2021increasing}.

Panel A uses the original data after dropping one unit observed
in only one period, leaving a nearly balanced allocation ($n_1=227$,
$n_2=226$). The paired $t$-test and the four pooled regressions closely track the design-based
point estimate ($-0.278$) and standard error ($0.033$): the
period-interacted specifications, (3) and (4), match exactly, while
the paired $t$-test and the two treatment-only specifications, (0)--(2),
differ by only $0.02\%$ and $0.08\%$ due to the one-unit imbalance.

Panel B duplicates
units assigned to $(1,0)$ to create a $2{:}1$ allocation ($n_1=454$,
$n_2=226$). The gap becomes substantive. the paired $t$-test and
treatment-only specifications now differ from the design-based point
estimate by $3.27\%$ and from its standard error by $4.81\%$, while the
period-interacted specifications remain exact regardless of allocation balance.

In both panels, the design-based standard error is smaller than that of the Period-1-only estimator, demonstrating that the efficiency roughly doubles with the within-subject design.

The simulations in \ref{app:regression-simulation} vary the carryover
gap and the allocation ratio, and further confirm the results. 
Period-interacted OLS and random effects with unit-clustered standard
errors reproduce the design-based estimator in every case, while the
paired $t$-test and treatment-only specifications match it only under
equal allocation. 

We therefore recommend period-interacted OLS or random effects
with unit-clustered standard errors as the workhorse implementation, when
researchers are willing to assume no gap in average carryover effects across
sequences ($\delta=0$). When $\delta\ne0$, the pooled estimator is biased
and its confidence intervals are invalid (Table~\ref{tab:regression-specs-coverage}
in \ref{app:regression-simulation}). The robustness of the treatment
effect's sign to a nonzero $\delta$ is discussed in
Section~\ref{sec:Sensitivity-Analysis}.

\begin{table}[!h]
\begin{centering}
\input{tables/spec_comparison}
\par\end{centering}
\caption{Comparison of Sample Estimators with a Reanalysis of \cite{clifford2021increasing}}\label{tab:Comparison-of-Sample}

\end{table}

\section{Carryover Test: Power and the Two-Step Procedure}\label{sec:testing-carryover}

Section~\ref{sec:Within-Subject-Design-in} shows that pooling identifies the
average treatment effect only when the carryover gap $\delta=0$.
\textcite{clifford2021increasing} and \textcite{JORDAN_OLLERENSHAW_TREXLER_2026}
test this condition by comparing the pooled estimate against an independent
post-only sample. This section instead asks whether the within-subject
sample can test the condition on its own, and how that test's power compares
with the average-treatment-effect test using only the first period of the same within-subject sample, i.e., the data that a post-only design would have collected.

We show that, under plausible conditions, the carryover test for $\delta=0$ is the less
powerful of the two, sharpening the dilemma facing researchers who adopt a
within-subject design for its efficiency gain: confirming that pooling is
warranted with adequate power takes, if anything, a larger sample than the
post-only design needs to detect the treatment effect itself with the same
power. Put differently: by the time a study has enough power for its own
carryover test to be trustworthy, that same study already has enough power
to detect the treatment effect directly, without pooling. 

A researcher who
reports a null carryover test and treats it as license to pool is therefore
in one of two positions: either the test was underpowered, in which case the
null result carries little information about whether carryover is actually
zero, or the test was adequately powered, in which case the study never
needed the within-subject design's efficiency gain to begin with. Either
way, the null carryover test cannot do the justificatory work researchers
typically ask of it\footnote{\textcite{brown1980crossover} reaches the same
conclusion from a different angle: in a classical normal-theory
analysis-of-variance model, the sample size that the carryover test needs
to detect a carryover effect large enough to meaningfully bias the treatment
estimate is, in his worked numerical example, ten times what a post-only
design needs for the treatment effect itself. Our result differs in being a general
conclusion for values of $n_1$, $n_2$, $\delta$ and $\mu_1$ that are typical in most social-science applications, derived
in a potential outcome framework with design-based standard errors, rather
than a sample-size ratio computed under one chosen calibration of an assumed
random-effects model.}.

\paragraph{Test for $\delta$}To test $H_0:\delta=0$, it is natural to use $\hat\delta\equiv\hat\mu_2-\hat\mu_1$,
since $\delta=\mu_2-\mu_1$ (Section~\ref{sec:Within-Subject-Design-in}).
Proposition~\ref{prop:delta-asymptotic} in \ref{app:sampling} shows that
\[
\frac{\sqrt n(\hat\delta-\delta)}{\sigma_{\delta}}\overset{d}{\longrightarrow}
\mathcal N(0,1).
\]
Therefore we reject $H_0$ when
$|\sqrt{n}\hat\delta/\hat{\sigma}_{\delta}|>z_{1-\alpha/2}$, where
$\hat{\sigma}_{\delta}$ is the sample analog of $\sigma_{\delta}$ derived in
\ref{app:sampling}, and $z_{1-\alpha/2}$ is the $1-\alpha/2$ quantile of the standard normal distribution.

In practice, $\hat\delta=\hat\beta_3$ from the period-interacted regression
of Equation~\eqref{eq:period-interacted},
noting $\hat\mu_2-\hat\mu_1=(\hat\beta_1+\hat\beta_3)-\hat\beta_1$. Researchers therefore can equivalently implement the carryover test as a $t$-test of the interaction term being zero, using a period-interacted OLS with clustered standard errors. We reserve the equivalence results in \ref{app:saturated}.

\paragraph{Power of the Carryover Test versus the Average-Treatment-Effect Test}

Ignoring the remainder term, the carryover test's power is
\[
\psi_{\delta}\left(\delta\right)\approx
1-\Phi\!\left(z_{1-\alpha/2}-\frac{\sqrt{n}\delta}{\sigma_{\delta}}\right)+\Phi\left(-z_{1-\alpha/2}-\frac{\sqrt{n}\delta}{\sigma_{\delta}}\right).
\]
The power of testing $H_{0}:\mu_{1}=0$, the average-treatment-effect test with
post-only data, is
\[
\psi_{1}\left(\mu_{1}\right)\approx
1-\Phi\!\left(z_{1-\alpha/2}-\frac{\sqrt{n}\mu_1}{\sigma_1}\right)+\Phi\left(-z_{1-\alpha/2}-\frac{\sqrt{n}\mu_1}{\sigma_1}\right).
\]
Because two-sided normal power is increasing in the magnitude of the
noncentrality parameter (Proposition~\ref{prop:power-comparison} in
\ref{app:power}), comparing $\psi_{\delta}$ and $\psi_{1}$ amounts to
comparing the magnitude of their noncentrality parameters,
\[
\frac{|\delta|}{\sigma_{\delta}}\quad\text{versus}\quad\frac{|\mu_{1}|}{\sigma_{1}}.
\]

Two conditions make the power of the average-treatment-effect test larger. First,
$\sigma_\delta>\sigma_1$ whenever the cross-period
covariance is nonnegative within each sequence, the expected case in most social-science applications. That is, a unit's potential outcomes across periods is believed to be positively correlated under different stimuli, since both reflect that unit's own underlying
disposition. Our review of published studies also confirm this scenario: across the \LitReviewZetaCount{} recoverable study-outcomes (see \ref{app:litreview-parameters}), the
standardized cross-period covariance $\sigma_{12}/\sigma_{1}^{2}$ ranges from
\LitReviewZetaMin{} to \LitReviewZetaMax{} (10th--90th percentile:
\LitReviewZetaPTen{} to \LitReviewZetaPNinety{}), with a median of
\LitReviewZetaMedian{}, and is nonnegative in \LitReviewZetaNonnegCount{}
of the \LitReviewZetaCount{}.

Second, it is commonly presumed
that the between-sequence average carryover gap is smaller in magnitude than
the average treatment effect: $|\delta|\le|\mu_{1}|$, equivalently
$|\delta|\le|\mathbb{E}[\tau_{i}]|$. Carryover is a partial, attenuated echo
of the treatment's influence into the following period, not a force expected
to exceed the treatment effect that produced it. Further, $\delta$ is the average carryover
gap between sequences, which nets out whatever the two sequences' carryover
levels share in common. This is broadly consistent
with \textcite{JORDAN_OLLERENSHAW_TREXLER_2026}'s own meta-analytic finding
of roughly 20\% attenuation in repeated-measures designs.

Combining both conditions, with the same size of respondents $n$,
\[
\frac{|\delta|}{\sigma_{\delta}}\ \le\ \frac{|\mu_{1}|}{\sigma_{\delta}}\ <\ \frac{|\mu_{1}|}{\sigma_{1}}.
\]
Hence, $\psi_{\delta}\left(\delta\right)<\psi_{1}\left(\mu_{1}\right)$: under
these two stated conditions, the carryover test is less powerful than the test for the average
treatment effect, as claimed above.

\begin{figure}[htbp] % [htbp] controls where the image is placed (here, top, bottom, page)
    \centering
    \includegraphics[width=0.4\textwidth]{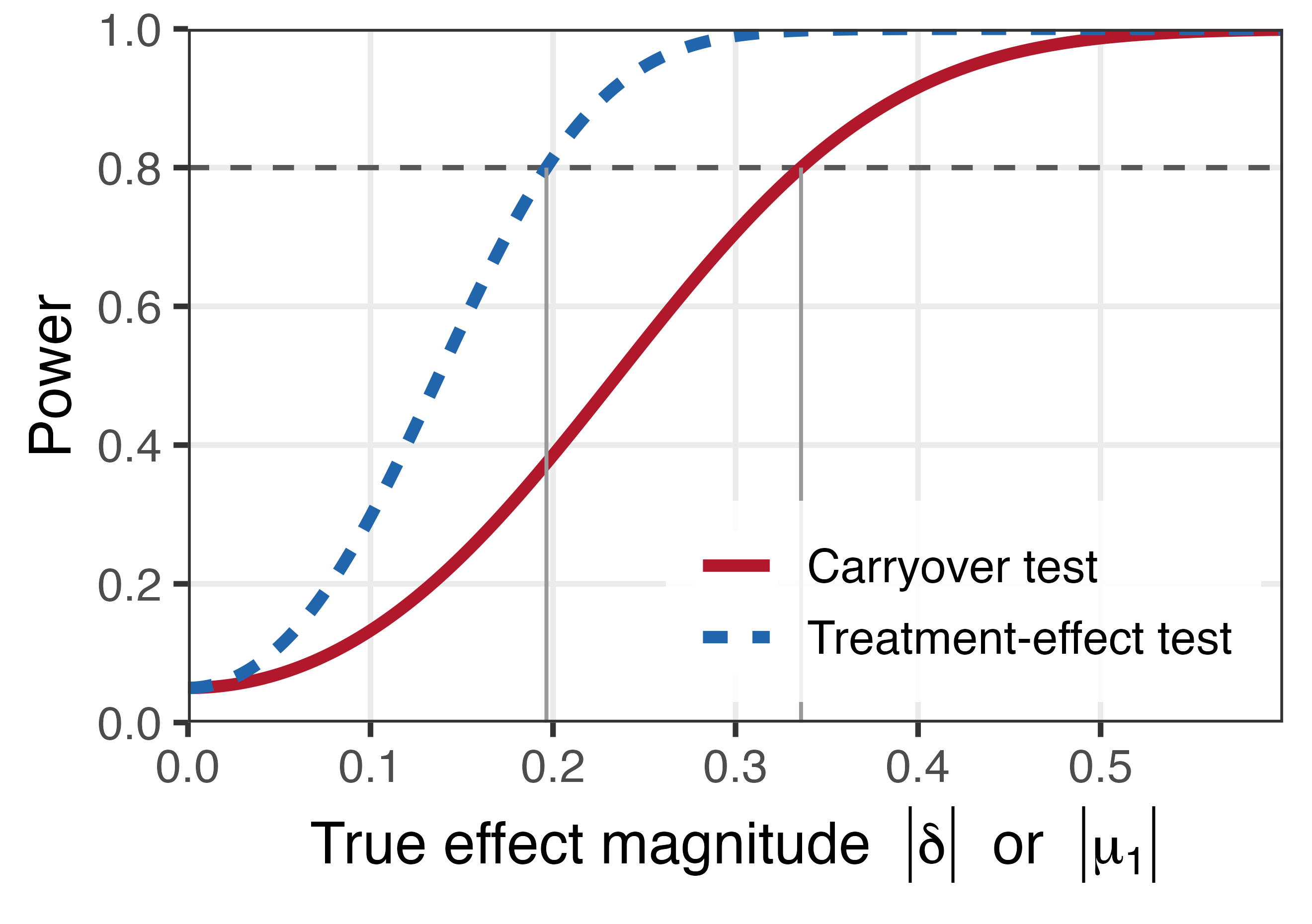} % Omit .png extension if you prefer
    \caption{Power of the Carryover Test and the Treatment-Effect Test, Calibrated to
    Study 1 of \textcite{clifford2021increasing}}
    \label{fig:power_function}
\end{figure}
%{\color{red}\small[TK-REVIEW]} hand verified in Stata

Figure \ref{fig:power_function} calibrated the above results to the
within-subject arm of Study 1 in \gentextcite{clifford2021increasing}. Because
$\sigma_{\delta}/\sigma_{1}=1.71$ there, a carryover gap must be $1.71$ times
as large as a treatment effect for the two tests to have equal power to detect their respective effects: at $80\%$ power the smallest detectable treatment effect is $0.20$ scale
points, while the smallest detectable carryover gap is $0.34$. The fact that the blue
curve (the treatment-effect test for $\mu_1=0$) lies above the red curve (the carryover
test for $\delta=0$) throughout the figure shows that this power ordering holds at every
effect magnitude.

Nevertheless, researchers may base their decision on whether to pool on the
significance of the carryover test for $\delta$. We show this practice to
be problematic even apart from the power issue.

\paragraph{The Two-step Procedure}

The two-step procedure works as follows. First test
$H_{0}:\delta=0$. If the test fails to reject, pool the two periods and report $\hat{\mu}_{co}$. If it rejects, fall back to the post-only estimator $\hat{\mu}_{1}$ instead. In either case, researchers report that estimator's
ordinary confidence interval with no adjustment for having been chosen by a
preliminary test. The formal procedure is given in \ref{app:grizzle}.

This is exactly the
procedure that \citet{willan1986carryover} examine under the name Grizzle's
procedure, and the literature has already shown it to be problematic.
\textcite{freeman1989performance} finds that the reported interval undercovers
its nominal level even when there is no differential carryover, and
considerably more so once carryover is present, and concludes that the
two-step analysis is too misleading to be of practical use. Freeman's
calculation assumes an additive normal model with homoskedasticity and equal allocation. We
extend it to allow non-normal outcomes, heteroskedasticity, and unequal
allocation in \ref{app:grizzle} for theory and \ref{app:undercoverage-simulation} for simulation.

To illustrate, in Figure~\ref{fig:csp-coverage}, we demonstrate the performance of the two-step procedure with parameters calibrated from the within-subject arm of \gentextcite{clifford2021increasing} Study
1 (details in \ref{app:csp-calibration})\footnote{The
undercoverage pattern is not unique to this study: Figure~\ref{fig:literature-two-step-coverage}
in \ref{app:undercoverage-simulation} shows a similar pattern for other papers
we review.}. The two-step interval covers
with probability $0.924$ at $\delta=0$, falling to as little as
$0.433$ at a carryover gap of just $0.19$ points on their three-point scale.
At the estimated gap of $|\hat\delta|=0.02$, though noisy and not
significantly different from zero, the coverage probability is $0.913$, lower than the nominal
level of 95\%.

\begin{figure}[htbp]
    \centering
    \includegraphics[width=0.4\textwidth]{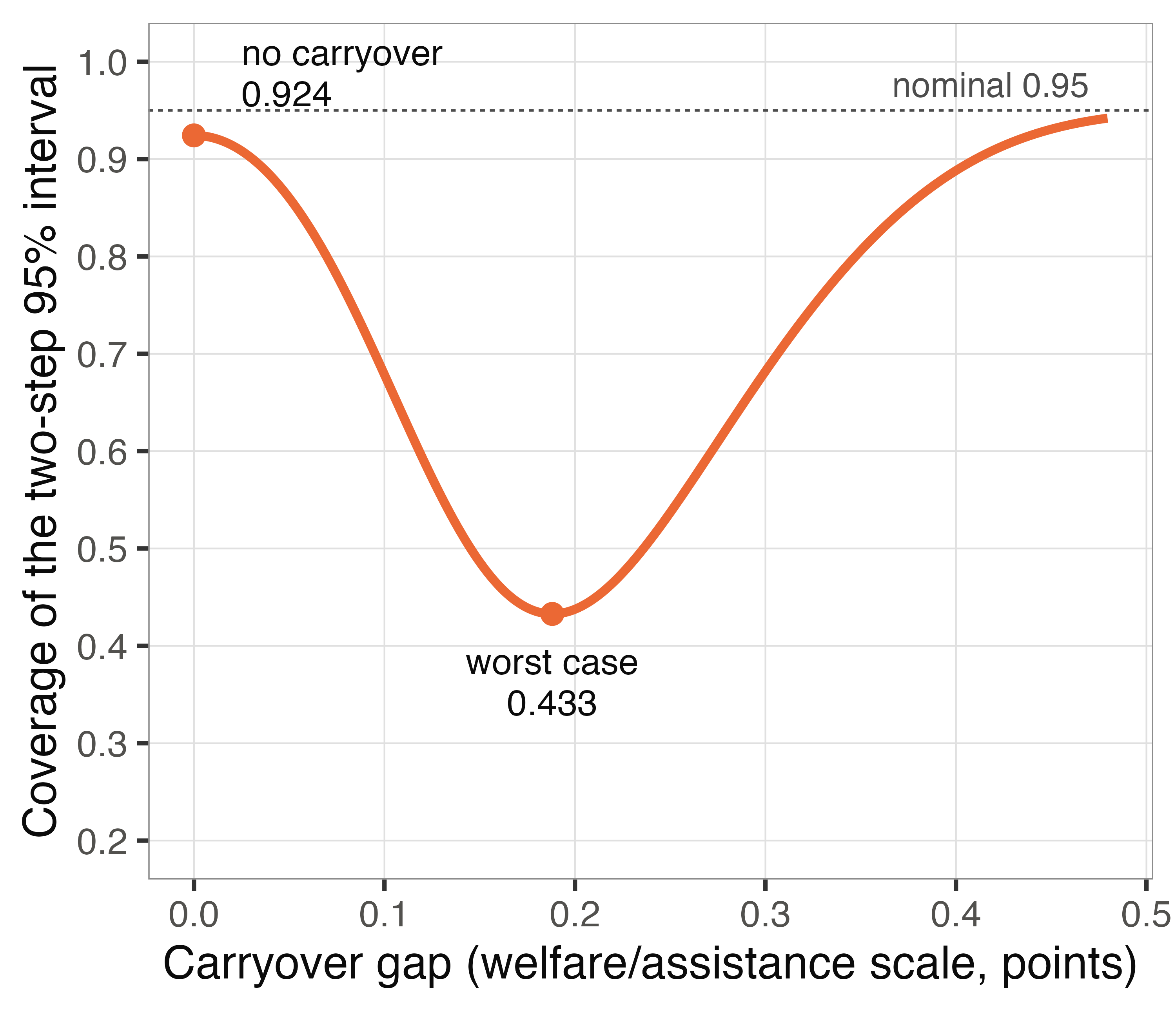}
    \caption{Coverage of the Two-Step Interval, Calibrated to the
    Within-Subject Arm of Study 1 of \textcite{clifford2021increasing}}
    \label{fig:csp-coverage}
\end{figure}

There are two sources of the shortfall. First, if the truth is $\delta\neq0$, the test itself may not be powerful enough to detect it, and thus leads to the wrong specification. Second, and independently, reporting
an estimator's ordinary confidence interval after it has been chosen by a
statistical test beforehand is invalid on its own terms: selection changes what the
interval actually covers.

The second source alone is enough to produce undercoverage. At $\delta=0$, the reported interval still undercovers its nominal level, exactly when pooling is more efficient
than the post-only estimator, i.e. $\sigma_{co}^2<4\sigma_1^2$. \ref{app:grizzle} shows the formal results. Away from $\delta=0$, the calibrations relevant here usually yield
undercoverage. \ref{app:undercoverage-simulation} conducts simulations based on calibration to published social science studies.

Finally, Figure~\ref{fig:jot-coverage-n} examines the two-step procedure's performance
as $n$ grows, holding fixed the carryover gap $\delta=-0.119$ calibrated to
Study 4 of \textcite{JORDAN_OLLERENSHAW_TREXLER_2026} \parencite{jordan2026data} (the equal-weighted
average across their AmeriSpeak, Prolific, and Lucid samples, as in
Table~\ref{tab:litreview-sensitivity-summary}), and comparing it against
always reporting the post-only or the pooled estimator.

\begin{figure}[htbp]
    \centering
    \includegraphics[width=0.66\textwidth]{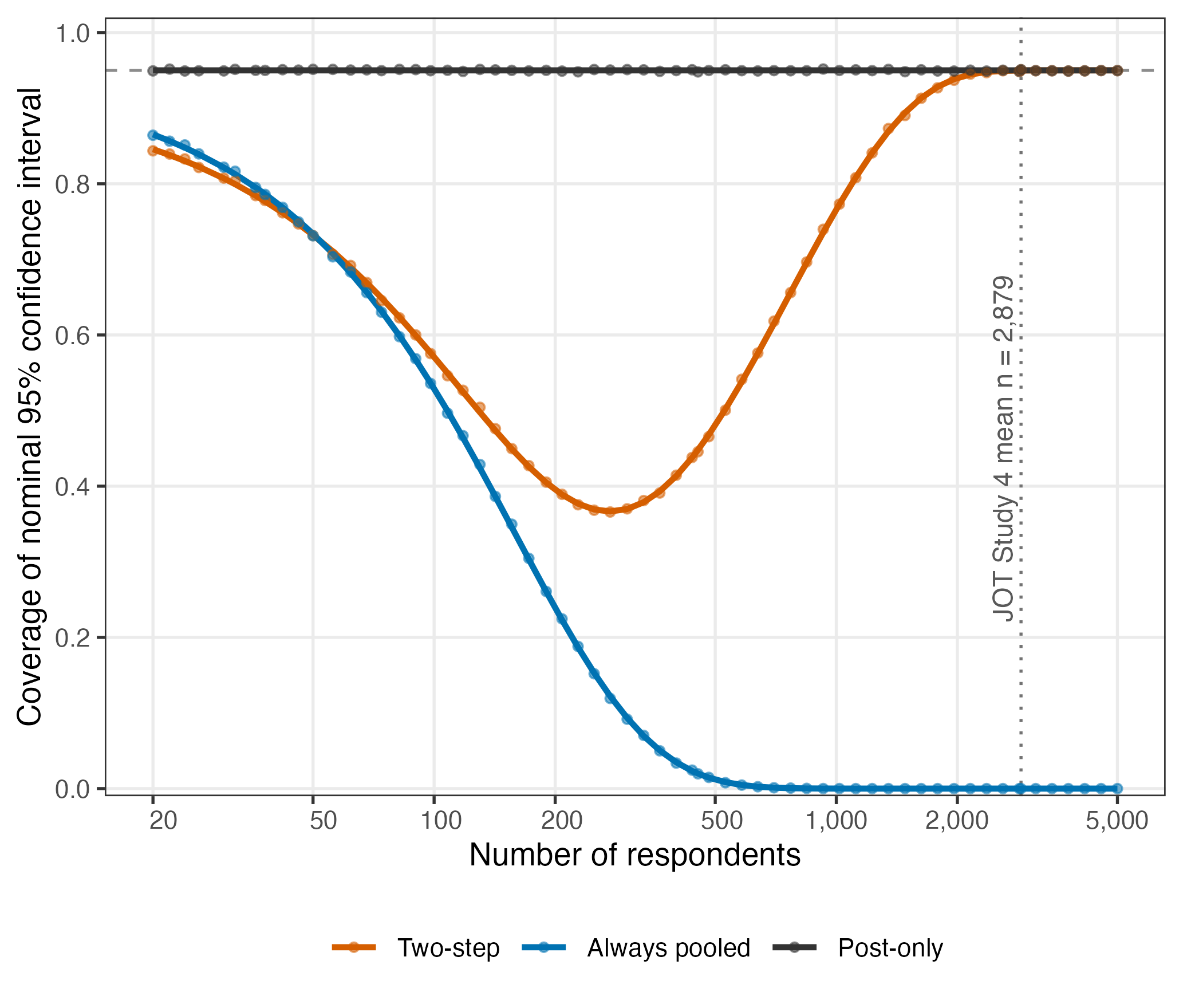}
    \caption{Coverage as Sample Size Increases under the Study 4 Carryover
    Calibration of \textcite{JORDAN_OLLERENSHAW_TREXLER_2026}. Points are based
    on 50,000 Monte Carlo replications at each sample size, and solid curves are
    the corresponding analytic coverage probabilities. The horizontal dashed
    line marks nominal 95\% coverage; the vertical dotted line marks the mean
    complete-pair sample size across the three Study 4 samples.}
    \label{fig:jot-coverage-n}
\end{figure}

The two-step curve is U-shaped. Its coverage initially falls to about $0.37$,
then returns to the nominal level only because the carryover test eventually
rejects with probability approaching one and the procedure therefore reports
the post-only estimator. At the mean Study 4 sample size of $2{,}879$, the
simulation selects the post-only estimator more than 99\% of the time. If the
within-subject arm of \gentextcite{clifford2021increasing} Study 1 had instead
shared this JOT Study 4 carryover and variance structure, its sample size of
$227+226=453$ would yield two-step coverage of only $0.448$.

\paragraph{The Problem with Always Pooling}  By contrast,
sample size alone does not validate pooling. As $n$ grows, coverage of the
pooled interval approaches zero. The reason is that, when $\delta\neq0$, the pooled
estimator becomes increasingly precise for $\tau+\delta/2$ rather than for
$\tau$. Its confidence interval therefore concentrates more tightly
around the wrong target, and its probability of covering $\tau$ vanishes.

Taken together, a large sample rescues the two-step procedure only by
abandoning the pooled estimator and its precision gain, rather than making
conventional pooled
inference valid. A further discussion can be found as Remark~\ref{rem:fixed-delta} in \ref{app:grizzle}. 

% For Appendix
% Recalling that $\hat{\mu}_{1}=\frac{1}{n_{1}}\sum^{n_{1}}_{i=1}Y_{i1}-\frac{1}{n_{2}}\sum^{n_{1}+n_{2}}_{j=n_{1}+1}Y_{j1}$ and 
% $\hat{\mu}_{2}=\frac{1}{n_{2}}\sum^{n_{1}+n_{2}}_{j=n_{1}+1}Y_{j2}-\frac{1}{n_{1}}\sum^{n_{1}}_{i=1}Y_{i2}$, we have
% \begin{align*}
% \hat{\delta} & \equiv\hat{\mu}_{2}-\hat{\mu}_{1}\\
%  & =\frac{1}{n_{2}}\left(\sum^{n_{1}+n_{2}}_{j=n_{1}+1}Y_{j1}+\sum^{n_{1}+n_{2}}_{j=n_{1}+1}Y_{j2}\right)-\frac{1}{n_{1}}\left(\sum^{n_{1}}_{i=1}Y_{i1}+\sum^{n_{1}}_{i=1}Y_{i2}\right)
% \end{align*}

% So $\mathbb{E}\left[\hat{\delta}\right]=\mu_{2}-\mu_{1}$, and $\mathbb{V}\left(\hat{\delta}\right)=\mathbb{V}\left(\frac{1}{n_{2}}\left(\sum^{n_{1}+n_{2}}_{j=n_{1}+1}Y_{j1}+\sum^{n_{1}+n_{2}}_{j=n_{1}+1}Y_{j2}\right)\right)+\mathbb{V}\left(\frac{1}{n_{1}}\left(\sum^{n_{1}}_{i=1}Y_{i1}+\sum^{n_{1}}_{i=1}Y_{i2}\right)\right)$because
% of the independence of sequence assignments. Independence between
% units lead to 
% \[
% \hat{\delta}\overset{d}{\rightarrow}\mathcal{N}\left(\delta,\sigma^{2}_{\delta}\right);\ \ \sigma^{2}_{\delta}\equiv
% \]

\section{Sensitivity to Unobserved Carryover Effects}\label{sec:Sensitivity-Analysis}

Section 4 shows that the carryover test can have little power, and that the two-step procedure it motivates can
produce confidence intervals that do not achieve nominal coverage. In this Section, we recommend defending the pooled estimator with a sensitivity
analysis instead. 

If the two sequences differ in
average carryover by $\delta$, the pooled estimate is biased by $\delta/2$
(Equation~\eqref{eq:pooled-contrast}). Treating $\delta$ as a
sensitivity parameter, we ask how large this gap would need to be, relative
to the pooled estimate, to overturn an originally significant result. In our
review of published social-science applications, we find the sign of the estimated treatment
effect is generally robust to relatively large average carryover gaps.

\paragraph{Sensitivity Analysis for the Pooled Estimator}

Under the hypothesis that the carryover gap equals $\delta^\star$,
the bias-adjusted estimator is
$\hat\tau(\delta^\star)\equiv\hat\mu_{co}-\delta^\star/2$, with standard error
$s_{co}\equiv\widehat{\mathrm{se}}(\hat\mu_{co})$ inherited from $\hat\mu_{co}$
itself. Standardizing $\delta^\star$ relative to the pooled estimate, i.e., letting $\delta_{\mathrm{std}}^{\star}\equiv\delta^{\star}/\hat{\mu}_{co}$, and
writing $t_{co}\equiv\hat\mu_{co}/s_{co}$, a two-sided test of $H_0:\tau=0$
using $\hat\tau(\delta^\star)$ fails to reject over the interval
\[
\delta_{\mathrm{std}}^\star\in\left[2-\frac{2z_{1-\alpha/2}}{|t_{co}|},\ 2+\frac{2z_{1-\alpha/2}}{|t_{co}|}\right]
\]
(proof in \ref{app:sensitivity-derivation}). For an
originally significant pooled estimate $\hat\mu_{co}$ with a $t$-statistic $|t_{co}|>z_{1-\alpha/2}$, the
standardized minimum carryover gap required to overturn significance is the
interval's lower endpoint,
\[
\delta_{\min,\mathrm{std}}^*
\equiv
2\left(1-\frac{z_{1-\alpha/2}}{|t_{co}|}\right).
\]
This is the smallest same-direction carryover gap, expressed as a multiple of
the pooled estimator, for which the bias-adjusted confidence interval includes
zero. 

For example, at the 5\% level, if $|t_{co}|=6$, then
$\delta_{\min,\mathrm{std}}^*=1.35$, and the full non-rejection region is
$[1.35,2.65]$. The lower endpoint, 1.35, is the smallest same-direction gap that removes
statistical significance. That is, $\delta$, the difference in average carryover effects across sequences, needs to be at least 1.35 times as large as the estimated $\hat\mu_{co}$ for the bias-adjusted estimator to be statistically indifferent from zero.

\paragraph{Calibrating the Sensitivity Parameter}

To calibrate plausible values of $\delta$, we review the 33 papers of
within-subject design identified by \textcite{JORDAN_OLLERENSHAW_TREXLER_2026}
as citing \textcite{clifford2021increasing}. Where the design and data recover
both period-specific treatment contrasts, we calculate
$\hat\delta=\hat\mu_2-\hat\mu_1$ and express it relative to the pooled
estimate for comparison with $\delta_{\min,\mathrm{std}}^*$. 

Table~\ref{tab:litreview-design-classes} organizes the 33 papers into four
design classes. Group 1 contains exact or directly reducible two-sequence
within-subject designs. Group 2 contains multiperiod or multicondition designs that we reduce via pairwise comparisons. We pair the first period with each later period. A pair contributes only when both treatment sequences are present, i.e. some respondents received treatment in Period 1 and control in the later period (sequence $(1,0)$) while others received the reverse (sequence $(0,1)$), so we can form the same two-sequence contrast used throughout the paper. We average with equal weight over all such pairs. We call these reductions history-marginal because a respondent's outcome in the later period is averaged over whatever treatments they received in between. Group 3
contains within-subject designs whose Period 1 exposure is fixed across respondents. Everyone receives the same (or no) stimulus in Period 1, and only in Period 2 are they assigned to different treatment conditions. Group 3, further elaborated later, is not subject to the differential-carryover concern that motivates our analysis, since every respondent shares the same Period 1 exposure. It remains a within-subject design, but it lacks the two-sequence structure that our analysis addresses.  Group 4 contains the remaining
non-reducible or unresolved designs, most often for lack of usable replication data, such as a covariate recording the order in which treatments were assigned\footnote{A small number of Group 4 papers fall outside this core reason instead: some have paired items answered in one sitting rather than at two temporally separated occasions, so a two-sequence estimate is unsupported despite the data being available; others are papers that \textcite{JORDAN_OLLERENSHAW_TREXLER_2026}'s own table misclassifies as within-subject, which are out of scope for this review rather than non-reducible on structural terms.}. Only Groups 1 and 2 have the two-sequence
structure needed to compute $\hat\delta$ and $\delta_{\min,\mathrm{std}}^*$.  \ref{app:litreview-parameters} gives the full classification
rules.
\input{tables/litreview_design_classes}

Researchers need to defend a pooled estimator only when it is itself significant, so we restrict the sensitivity
analysis to $\hat\mu_{co}$ with $|t_{co}|>z_{.975}$. Six papers from Groups 1 and 2 meet this criterion, each contributing one qualifying study. We supplement them with two further papers, \textcite{clifford2021increasing}'s Study 1 and \textcite{JORDAN_OLLERENSHAW_TREXLER_2026}'s Studies 4-6. Ten exact
two-sequence within-subject designs are reconstructed from the latter three studies. Altogether the analysis covers eight papers, 10 studies, and 25 significant treatment-control comparisons across 24 distinct pairs of treatment sequences\footnote{One pair yields two significant treatment-control comparisons.}, averaged with equal weight within each paper.

Table~\ref{tab:litreview-sensitivity-summary}
reports the within-paper averages of $\hat\delta/\hat\mu_{co}$, its absolute
value $|\hat\delta/\hat\mu_{co}|$, and $\delta_{\min,\mathrm{std}}^*$. Giving each paper equal weight, the carryover ratio averages
\LitReviewAcrossPaperDeltaRatio{} in absolute value, well below
\LitReviewAcrossPaperDeltaMinStd{}, the average threshold that would be needed to overturn
significance. The sign of $\hat\delta/\hat\mu_{co}$ is also informative: a negative ratio guarantees that pooling has attenuated
the post-only estimate\footnote{Writing
$x\equiv\hat\delta/(2\hat\mu_{co})$, the identity
$\hat\mu_{co}=\hat\mu_1+\tfrac12\hat\delta$ gives $\hat\mu_1=\hat\mu_{co}(1-x)$,
so $|\hat\mu_1|>|\hat\mu_{co}|$ whenever $x<0$, i.e.\ whenever the ratio is
negative. A positive ratio has no equally clean converse.
$|\hat\mu_1|<|\hat\mu_{co}|$ only while $0<x<2$, and reverts to
$|\hat\mu_1|>|\hat\mu_{co}|$ (with a sign flip) while $x>2$.}. In 5 of the 8
papers the ratio is negative, so in exactly those cases, a researcher who reports $\hat\mu_{co}$
should be worried about the carryover gap, unless they also conduct our sensitivity analysis.

\input{tables/litreview_sensitivity_summary}
\FloatBarrier

The coverage failure of the pooled estimator in Section~\ref{sec:testing-carryover} does not contradict the sensitivity analysis in the current section: the former asks whether the interval contains $\tau$, whereas the latter asks whether a plausible carryover gap could overturn a significant estimated effect. As the pooled estimate becomes more precise with larger $n$, its sign can be robust even when coverage is not nominal. To be clear, this does not restore point identification or guarantee correct coverage: the sensitivity analysis only shows how large the carryover gap would need to be before the estimated effect is no longer distinguishable from zero in its original direction, enough to answer a qualitative question about the effect's sign, even without identification or valid coverage.

We therefore recommend that researchers report $\delta_{\min,\mathrm{std}}^*$ rather than treat a failed carryover test as evidence that $\delta=0$. Across the studies we review, pooled estimates are generally robust by this criterion, supporting the recommendation of \textcite{clifford2021increasing} and \textcite{JORDAN_OLLERENSHAW_TREXLER_2026} that repeated-measures designs can improve precision.

\section{Practical Recommendations and Concluding Remarks}\label{sec:Practical-Advice}
We conclude with practical guidance for researchers using within-subject
designs, organized around three decisions: design, estimation, and
interpretation.

\advice{Design: Collect the second period when it adds useful precision}
Researchers should use pilot data or substantive knowledge to assess the
likely precision gain. Pooling is more efficient only when
$\sigma_{12}>\frac{1}{2}(\sigma_2^2-3\sigma_1^2)$, a sufficient condition of which is $\sigma_2^2<\sigma_1^2$, regardless of the covariance
(Section~\ref{sec:Within-Subject-Design-in}, \ref{app:sampling}).

When the efficiency cannot be assessed before
fielding, we generally recommend collecting it anyway. Period 1 remains
a valid post-only fallback at often-modest marginal cost, so researchers can
use the pooled estimator when it is robust to
plausible carryover and is proved to be more efficient, and otherwise report the Period-1 estimate.

\advice{Estimation: Use the period-interacted regression with
unit-clustered standard errors}
Estimate the period-interacted regression of
Equation~\eqref{eq:period-interacted} and construct the pooled estimator as
$\hat\mu_{co}=\hat\beta_1+\tfrac12\hat\beta_3$. 

This specification reproduces
the design-based point estimate and unit-clustered standard error under
any allocation, not confined to the case where $n_1=n_2$. OLS and random effects give the same coefficient map and
clustered covariance, so only the choice of unit-clustered standard errors matters (Section~\ref{sec:regression-implementation},
\ref{app:bare-pooled}, \ref{app:saturated}). 

Do not let the carryover test
determine whether to pool. The test of $H_0:\delta=0$ is no more powerful,
under plausible conditions, than the Period-1 average treatment-effect test, so by
the time it can reliably detect a consequential gap the post-only fallback
would already detect the treatment effect directly
(Section~\ref{sec:testing-carryover}). Conditioning the reported
estimator on this preliminary test creates its own problem, since the
resulting two-step interval undercovers exactly when pooling is more
efficient, even at $\delta=0$ (\ref{app:grizzle}).

\advice{Interpretation: Defend or assess a small carryover gap}

The identification condition for the treatment effect with a pooled estimator is not to establish whether any individual carryover
effect exists, or even whether either sequence has a nonzero average
carryover effect. It is more modest: assess whether the \emph{average} carryover effects for the two sequences, $E[\delta_{i,0}]$ and $E[\delta_{i,1}]$, are likely to be different. Individual effects
and both sequence averages may be nonzero, heterogeneous, or difficult to
estimate, but their difference could be plausibly negligible.

Group 3 in Section~\ref{sec:Sensitivity-Analysis} illustrates why this
average-difference claim may be easier to defend than zero carryover for every unit: all
respondents share the same pretreatment measurement history, so its average
effect may plausibly be common across treatment arms even though individual
or arm-specific average effects are hard to evaluate. Suitable conjoint
designs offer a parallel case: pooling a respondent's multiple choice tasks
requires the same stability and no-carryover-effects assumption, tested by
comparing AMCEs across tasks \parencite[Sec.~5.3.1]{hainmueller2014causal}.
Individual respondents' choices may fluctuate across tasks even for the
same set of profiles, but identification may only require that this
fluctuation not differ, on average, between respondents randomly assigned
to see the tasks in different orders.
Neither design feature identifies the gap by itself, but each can support a
transparent argument for a negligible difference.

Whether this matters depends on what the researcher wants from the estimate.
If the goal is the \emph{magnitude} of $\tau$, a nonzero carryover gap breaks
identification outright. Whenever $\delta\ne0$, the pooled estimator
converges to $\tau+\delta/2$ rather than $\tau$, however large the sample
(Section~\ref{sec:testing-carryover}). If the goal is only the
\emph{sign} of the treatment effect, the within-subject design remains
useful, and our review finds this sign is generally robust to plausible
carryover in practice (Section~\ref{sec:Sensitivity-Analysis}). Accordingly,
when the pooled estimate is statistically significant, researchers should
report $\delta_{\min,\mathrm{std}}^*$: the smallest same-direction carryover
gap, expressed as a multiple of the pooled estimator, that would overturn
that significance. This does not restore point identification or guarantee
correct coverage of $\tau$, but states transparently how much carryover the estimated direction can withstand, while preserving the efficiency gain of the repeated measurements.

\subsection*{Concluding Remarks}
We close by clarifying the
scope of our analysis, and identifying directions for future research. Within-subject designs remain valuable, but their precision gains do not
eliminate the need to confront carryover. The appropriate response is neither
automatic pooling nor an automatic return to post-only designs. Researchers
should use the correct design-based estimator, justify plausible carryover
substantively, and report how much carryover their conclusions can withstand.

Two limitations define the scope of these conclusions. First, we study
unadjusted estimators and do not consider the inclusion of pretreatment
covariates, which may alter the efficiency comparison and the corresponding
sensitivity calculations.

Second, our formal results concern within-subject designs reducible
to the same two-sequence structure. Concerns about carryover across
repeated exposures, however, arise more broadly. Conjoint experiments are
one important example: when respondents complete multiple choice tasks,
the stability and no-carryover-effects assumption -- that earlier tasks do
not affect responses to later ones -- plays the same identifying role,
testable by comparing an attribute's Average Marginal Component Effect
(AMCE) across tasks \parencite[Sec.~5.3.1]{hainmueller2014causal}. The same
design-based and sensitivity arguments developed above may often apply
there, but our present results do not establish the exact sensitivity
mapping for conjoint estimands. Extending them to conjoint and other
repeated-treatment designs is an important direction for future research.

%% ---------------------------------------------------------------------------
%% Working note, carried over from notes/paper.lyx. Not paper prose -- kept as a
%% comment so it does not render. Address these, then delete this block.
%%
%% Teppei's comment: (1) (are you saying the t-test is underpowered?) when you
%% want to test the carry-over effect, to make the power large enough, the
%% sample size is good enough for the post-only experiment; (2) conjoint should
%% be fine as it's hard to believe there are carry-over effects - Yes, advise
%% people to make substantive arguments (don't defer substantive judgements to
%% statistical criterion); (3) can discuss conjoint in the lit review, but focus
%% on two period (Clifford example).
%% ---------------------------------------------------------------------------

%TC:ignore
\paragraph{Acknowledgments} We thank Diana Jordan, Jule Krüger, Yuehong
Cassandra Tai, and Amanda Weiss, and participants at the Visions in
Methodology pre-PolMeth workshop 2026, for helpful comments.

\paragraph{Use of Artificial Intelligence} Anthropic's Claude (Sonnet 5,
Opus 4.8, and Opus 5) and OpenAI's Codex (Terra 5.6 and Sol 5.6) assisted
the authors with language editing, \LaTeX{} formatting, drafting, coding,
debugging, and reproducibility checks between July 21 and August 25, 2026.
The authors supplied the original research design, data, and code, made
all substantive decisions, reviewed and verified all AI-assisted output,
and take full responsibility for the manuscript's scientific content.

\paragraph{Funding Statement} S.L. received funding for this work from Peking University.

\paragraph{Competing Interests} The authors declare none.

\paragraph{Data Availability Statement} Replication materials for this
article are available at this Dropbox
\href{https://www.dropbox.com/scl/fo/scmlm9g7x3kjr1mfupz9f/AHhYrlzm58uQxDbdFzkiBVE?rlkey=ithp220x70nh2si6gp486b0wv\&st=nr68iuqj\&dl=0}{Link}
for the duration of the review
process. Upon acceptance, the package will be deposited at Harvard Dataverse
and this statement updated with the resulting persistent DOI. This
replication package bundles two third-party datasets under their own open
licenses: \textcite{clifford2021data} (CC0 1.0) and \textcite{jordan2026data}
(CC BY 4.0).

\paragraph{Research with Human Subjects} This study analyzes previously
collected, de-identified data and publicly available replication materials. The
authors did not recruit participants, interact with participants, or collect
new human-subject data for this research. The original studies' ethical-review
and informed-consent procedures are described in their respective publications
and replication materials.

\paragraph{Author Contributions} S.L. and J.Z. contributed equally to this work. Authors are listed in alphabetical order.
%TC:endignore

\printbibliography

\appendix

%% Appendix equations number A.1, A.2, ... rather than continuing the body's
%% flat count. \numberwithin is unusable here: the class sets \thesection to the
%% literal string "Appendix N", which would render "(Appendix 1.3)". \Alph{section}
%% sidesteps \thesection; the starred \counterwithin adds the per-section reset
%% without overriding \theequation, and makes a future second appendix section
%% number B.1 rather than continuing A.21.
\counterwithin*{equation}{section}
\renewcommand{\theequation}{\Alph{section}.\arabic{equation}}

%% Appendix figures use a separate A-prefixed sequence so their location is
%% apparent from every caption and cross-reference (Figure A.1, A.2, ...).
\setcounter{figure}{0}
\renewcommand{\thefigure}{A.\arabic{figure}}

%% Appendix tables use the same A-prefixed sequence as figures, so their
%% location is apparent from every caption and cross-reference (Table A.1,
%% A.2, ...), distinguishing them from the main-text Tables 1-4.
\setcounter{table}{0}
\renewcommand{\thetable}{A.\arabic{table}}

\begin{refsection}
\input{appendix}

\clearpage
\printbibliography[title={References Cited Only in the Appendix}]
\end{refsection}

\end{document}

%% file: tables/litreview_sensitivity_macros.tex
\newcommand{\LitReviewAcrossPaperDeltaRatio}{0.352}
\newcommand{\LitReviewAcrossPaperDeltaMinStd}{1.168}
\newcommand{\LitReviewZetaCount}{30}
\newcommand{\LitReviewZetaMedian}{0.271}
\newcommand{\LitReviewZetaMin}{-0.005}
\newcommand{\LitReviewZetaMax}{0.930}
\newcommand{\LitReviewZetaPTen}{0.044}
\newcommand{\LitReviewZetaPNinety}{0.537}
\newcommand{\LitReviewZetaNonnegCount}{29}
\newcommand{\LitReviewZetaFaithfulCount}{27}
\newcommand{\LitReviewZetaFaithfulMedian}{0.221}
\newcommand{\LitReviewZetaFaithfulMin}{-0.005}
\newcommand{\LitReviewZetaFaithfulMax}{0.724}
\newcommand{\LitReviewZetaFaithfulNonnegCount}{26}
\newcommand{\LitReviewOmegaFaithfulMin}{0.69}
\newcommand{\LitReviewOmegaFaithfulMax}{1.21}

%% file: tables/spec_comparison.tex
\begin{tabular}{lccccccc}
\toprule
 & $\hat{\beta}_1$ & $\hat{\beta}_3$ & ATE & ATE/design & SE & SE/design \\
\midrule
\multicolumn{7}{l}{\emph{Panel A: original data} ($n_1 = 227$, $n_2 = 226$)} \\
Design-based & & &   -0.278 & exact &    0.033 & exact \\
(0) Paired $t$-test & & &   -0.278 & +0.02\% &    0.033 & +0.08\% \\
(1) Treatment-only OLS, clustered &   -0.278 & &   -0.278 & +0.02\% &    0.033 & +0.08\% \\
(2) Treatment-only RE, clustered &   -0.278 & &   -0.278 & +0.02\% &    0.033 & +0.08\% \\
(3) Period-interacted OLS, clustered &   -0.289 &    0.021 &   -0.278 & exact &    0.033 & exact \\
(4) Period-interacted RE, clustered &   -0.289 &    0.021 &   -0.278 & exact &    0.033 & exact \\
(5) Period-1 OLS &   -0.289 & &   -0.289 & -3.72\% &    0.070 & +114.79\% \\
\addlinespace
\multicolumn{7}{l}{\emph{Panel B: synthetic data} ($n_1 = 454$, $n_2 = 226$)} \\
Design-based & & &   -0.278 & exact &    0.028 & exact \\
(0) Paired $t$-test & & &   -0.269 & +3.27\% &    0.027 & -4.81\% \\
(1) Treatment-only OLS, clustered &   -0.269 & &   -0.269 & +3.27\% &    0.027 & -4.81\% \\
(2) Treatment-only RE, clustered &   -0.269 & &   -0.269 & +3.27\% &    0.027 & -4.81\% \\
(3) Period-interacted OLS, clustered &   -0.289 &    0.021 &   -0.278 & exact &    0.028 & exact \\
(4) Period-interacted RE, clustered &   -0.289 &    0.021 &   -0.278 & exact &    0.028 & exact \\
(5) Period-1 OLS &   -0.289 & &   -0.289 & -3.72\% &    0.059 & +108.46\% \\
\bottomrule
\end{tabular}
\par\smallskip
\begin{minipage}{\linewidth}
\footnotesize\raggedright
\textit{Notes:} 
ATE/design is the percentage difference between a specification's ATE and the design-based point estimate. SE/design is the percentage difference between its reported standard error and the design-based standard error. ``Exact'' denotes equality at the displayed precision. Point-estimate and standard-error equivalence is proved in \ref{app:bare-pooled} and \ref{app:saturated}; the finite-sample corrections applied to compare each row against the design-based standard error are given in \ref{app:se-corrections}.

\end{minipage}

%% file: tables/litreview_design_classes.tex
\begin{table}[!h]
\centering
\caption{Design classification of the 33 papers citing Clifford, Sheagley, and Piston (2021)}
\label{tab:litreview-design-classes}
\footnotesize
\begin{tabular}{@{}p{10.4cm} r@{}}
\toprule
Group & $N$ \\
\midrule
1. Exact or directly reducible two-sequence designs & 3 \\
2. Multiperiod or multicondition designs reduced via pairwise comparison & 5 \\
3. Within-subject designs with fixed Period 1 exposure & 14 \\
4. Other non-reducible or unresolved designs & 11 \\
\midrule
Total & 33 \\
\bottomrule
\end{tabular}
\end{table}

%% file: tables/litreview_sensitivity_summary.tex
\begin{table}[!h]
\centering
\caption{Sensitivity quantities for papers with significant pooled treatment-effect estimates}
\label{tab:litreview-sensitivity-summary}
\scriptsize
\begin{tabular}{@{}>{\raggedright\arraybackslash}p{3.0cm} r r r r r r r r r@{}}
\toprule
Paper & $K$ & $\bar n$ & $\bar t_{co}$ & $\bar\sigma_1^2$ & $\bar\sigma_2^2$ & $\bar\sigma_{12}$ & $\overline{\hat\delta/\hat\mu_{co}}$ & $\overline{|\hat\delta/\hat\mu_{co}|}$ & $\overline{\delta_{\min,\mathrm{std}}^*}$ \\
\midrule
\textcite{carnahan2022correcting} & 1 & 348.0 & 4.70 & 0.121 & 0.111 & 0.075 & 0.100 & 0.100 & 1.166 \\
\textcite{tappin2023estimating} & 6 & 516.3 & 4.74 & 0.433 & 0.446 & 0.060 & -0.593 & 0.686 & 1.106 \\
\textcite{ozer2022partisan} & 2 & 24.0 & -2.53 & 0.162 & 0.133 & 0.065 & -0.518 & 0.518 & 0.434 \\
\textcite{vantrappen2023biased} & 2 & 597.0 & -3.22 & 3.238 & 3.343 & 1.152 & 0.040 & 0.158 & 0.806 \\
\textcite{velez2023latino} & 2 & 700.0 & 5.68 & 4.527 & 4.345 & 1.889 & -0.334 & 0.334 & 1.256 \\
\textcite{halling2024frontline} & 2 & 522.0 & 5.38 & 35.508 & 34.376 & 4.776 & 0.240 & 0.240 & 1.270 \\
\textcite{clifford2021increasing} & 1 & 453.0 & 8.50 & 2.226 & 1.995 & 1.145 & -0.074 & 0.074 & 1.539 \\
\textcite{JORDAN_OLLERENSHAW_TREXLER_2026} & 9 & 2918.9 & 18.76 & 0.539 & 0.552 & 0.379 & -0.348 & 0.702 & 1.765 \\
\bottomrule
\end{tabular}
\vspace{0.4em}
\begin{minipage}{\textwidth}
\scriptsize \textit{Notes}: The table includes outcome-specific treatment-control comparisons with $|t_{co}|>z_{.975}$. $K$ is the number of included comparisons, and every reported quantity is an equal-weight within-paper mean. A negative $\overline{\hat\delta/\hat\mu_{co}}$ guarantees the post-only estimate is larger in magnitude than the pooled one (pooling attenuates); a positive value does not have a symmetric interpretation. The 25 comparisons represent 24 distinct treatment-control pairs because one paper contributes two outcomes for one pair.
\end{minipage}
\end{table}

%% file: appendix.tex
%% Appendix. Included from main.tex via \input, after \appendix.
%%
%% NOTE: the class redefines \thesection to "Appendix N", so \ref{app:proofs}
%% already renders as "Appendix 1" -- write "see \ref{app:proofs}", never
%% "see Appendix~\ref{app:proofs}" (that yields "Appendix Appendix 1").
%%
%% Results supported by notes/Outline.lyx and the regression-equivalence notes
%% are stated and proved below. Remaining TK markers concern simulations or the
%% Grizzle derivation.

\section{Proofs}\label{app:proofs}

\subsection{Assumptions}\label{app:assumptions}

\begin{assumption}[SUTVA]\label{as:sutva}
For every unit $i$, period $t$, and treatment sequence assignments that assign the same sequence to unit $i$,  the potential outcome for unit $i$
in period $t$ is the same. Thus
there is no interference across units, and the potential outcome can be
written as a function only of unit $i$'s own treatment
sequence, $Y_{it}(d_{1},d_{2})$. The observed outcome can be writte as $Y_{it}=Y_{it}(D_{i1},D_{i2})$.\end{assumption}

\begin{assumption}[Random assignment of treatment sequence]\label{as:random}
The treatment-sequence assignments are independent of the potential outcomes of all units. Exactly $n_{1}$ units are assigned to sequence $(1,0)$
and $n_{2}$ units are assigned to sequence $(0,1)$, where $n=n_{1}+n_{2}$.
As $n\to\infty$, $n_1\to rn$ and $n_2\to (1-r)n$, with $0<r<1$.
\end{assumption}

\begin{assumption}[Independence across units and regularity conditions]\label{as:regularity}
The vectors
\[
\mathcal{Y}_{i}\equiv
\left(Y_{i1}(1,0),Y_{i1}(0,1),Y_{i2}(1,0),Y_{i2}(0,1)\right)
\]
are independently and identically distributed across units. The variance of each component is strictly positive and finite.
\end{assumption}

\subsection{Identification results}\label{app:identification}

Define the average treatment effect using period-1 potential outcomes:
\[\tau
\equiv\mathbb{E}\left[Y_{i1}(1,0)-Y_{i1}(0,1)\right].
\]
Define the carryover gap as the difference between the average treatment effects in periods 1 and 2:
\[\delta
\equiv\mathbb{E}\left[Y_{i2}(0,1)-Y_{i2}(1,0)\right]-\tau.
\]
Define the population treated-minus-control contrasts in the observed outcome in the two periods as
\begin{align*}
\mu_{1}
&\equiv\mathbb{E}\left[Y_{i1}\mid D_{i1}=1\right]
 -\mathbb{E}\left[Y_{i1}\mid D_{i1}=0\right],\\
\mu_{2}
&\equiv\mathbb{E}\left[Y_{i2}\mid D_{i2}=1\right]
 -\mathbb{E}\left[Y_{i2}\mid D_{i2}=0\right].
\end{align*}

\begin{prop}[Period-1 difference-in-means identifies the ATE]\label{prop:period1}
Under Assumptions~\ref{as:sutva} and~\ref{as:random}, $\mu_{1}=\tau$.
\end{prop}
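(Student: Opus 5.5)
The proof is a direct conditioning argument in three steps: rewrite the conditioning events as sequence events, replace observed outcomes by potential outcomes using SUTVA, and drop the conditioning using random assignment.

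First, since only the sequences $(1,0)$ and $(0,1)$ are possible, the event $\{D_{i1}=1\}$ coincides with $\{(D_{i1},D_{i2})=(1,0)\}$. Likewise, $\{D_{i1}=0\}$ coincides with $\{(D_{i1},D_{i2})=(0,1)\}$. By Assumption~\ref{as:sutva}, on the first event the observed outcome satisfies $Y_{i1}=Y_{i1}(1,0)$, and on the second $Y_{i1}=Y_{i1}(0,1)$. Hence
\[
\mu_1=\mathbb{E}\left[Y_{i1}(1,0)\mid (D_{i1},D_{i2})=(1,0)\right]-\mathbb{E}\left[Y_{i1}(0,1)\mid (D_{i1},D_{i2})=(0,1)\right].
\]

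Second, Assumption~\ref{as:random} says the sequence assignment is independent of the potential outcomes. The conditioning in each term can therefore be dropped, giving $\mathbb{E}[Y_{i1}(1,0)]-\mathbb{E}[Y_{i1}(0,1)]$. By linearity of expectation this equals $\mathbb{E}[Y_{i1}(1,0)-Y_{i1}(0,1)]=\tau$.

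The only delicate point is interpreting the conditional expectations under complete randomization with fixed $n_1$ and $n_2$. In that setting the assignments are not independent across units. Still, each unit's marginal assignment is independent of its own potential-outcome vector, and both sequences have positive probability: $n_1/n$ and $n_2/n$, both strictly between $0$ and $1$ because $0<r<1$ for large $n$. I would state this explicitly so that both conditional expectations are well defined. Finiteness of the relevant means follows from the finite variances in Assumption~\ref{as:regularity}, or can simply be assumed. As a consistency check, under Equation~\eqref{eq:saturated-model} the difference reduces to $\mathbb{E}[\pi_{i1}+\tau_i]-\mathbb{E}[\pi_{i1}]=\mathbb{E}[\tau_i]$, which matches the main-text definition of $\tau$.
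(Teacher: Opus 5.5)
Your proposal is correct and follows the paper's proof essentially step for step. Both proofs use the two-sequence structure and SUTVA to replace observed outcomes by $Y_{i1}(1,0)$ and $Y_{i1}(0,1)$, then use random assignment to drop the conditioning. Your added remarks on positive assignment probabilities and finite means are harmless extras the paper leaves implicit; note that positivity holds whenever $n_1,n_2\ge 1$, so it does not depend on the limit $r$.
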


\begin{proof}
There are only two possible treatment sequences. For unit $i$ with treatment sequence $(D_{i1},D_{i2})=(1,0)$, $Y_{i1}=Y_{i1}(1,0)$. For unit $i$ with treatment sequence  $(D_{i1},D_{i2})=(0,1)$, $Y_{i1}=Y_{i1}(0,1)$. 
Therefore
\[\mu_{1}=\mathbb{E}\left[Y_{i1}(1,0)\mid(D_{i1},D_{i2})=(1,0)\right]-\mathbb{E}\left[Y_{i1}(0,1)\mid(D_{i1},D_{i2})=(0,1)\right].
\]
Because the treatment sequence is randomly assigned, we have
\[\mu_1=\mathbb{E}\left[Y_{i1}(1,0)\right]
  -\mathbb{E}\left[Y_{i1}(0,1)\right]=\tau.
\]
\end{proof}

\begin{prop}[Period-2 difference-in-means identifies the ATE plus the carryover gap]\label{prop:period2}
Under Assumptions~\ref{as:sutva} and~\ref{as:random},
$\mu_{2}=\tau+\delta$.
\end{prop}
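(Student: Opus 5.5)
The argument follows the proof of Proposition~\ref{prop:period1}, applied to the second-period outcome. First, I will translate the conditioning events. Since only the two sequences $(1,0)$ and $(0,1)$ are possible, $D_{i2}=1$ is the same event as $(D_{i1},D_{i2})=(0,1)$, and $D_{i2}=0$ is the same event as $(D_{i1},D_{i2})=(1,0)$. By SUTVA (Assumption~\ref{as:sutva}), a unit in sequence $(0,1)$ has $Y_{i2}=Y_{i2}(0,1)$, and a unit in sequence $(1,0)$ has $Y_{i2}=Y_{i2}(1,0)$. This gives
\[
\mu_2=\mathbb{E}\left[Y_{i2}(0,1)\mid(D_{i1},D_{i2})=(0,1)\right]-\mathbb{E}\left[Y_{i2}(1,0)\mid(D_{i1},D_{i2})=(1,0)\right].
\]

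Second, I will invoke random assignment (Assumption~\ref{as:random}). Because the sequence assignment is independent of the potential outcomes, each conditional expectation equals the corresponding unconditional one. Hence $\mu_2=\mathbb{E}[Y_{i2}(0,1)-Y_{i2}(1,0)]$.

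Third, I will apply the definition of $\delta$ in this appendix. It is defined as $\mathbb{E}[Y_{i2}(0,1)-Y_{i2}(1,0)]-\tau$, so the previous display is exactly $\tau+\delta$.

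As a consistency check, I will also note the link to the main-text parametrization~\eqref{eq:saturated-model}. There $Y_{i2}(0,1)-Y_{i2}(1,0)=\tau_i+\delta_{i,0}-\delta_{i,1}$. Taking expectations gives $\tau+\mathbb{E}[\delta_{i,0}]-\mathbb{E}[\delta_{i,1}]$, which matches the main-text definition of $\delta$. No step is a real obstacle. The only point needing care is stating explicitly that conditioning on $D_{i2}$ is the same as conditioning on the full sequence, so that the independence in Assumption~\ref{as:random} applies directly.
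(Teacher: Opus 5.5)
Your proposal is correct and follows the paper's proof step for step: you use the two-sequence structure and SUTVA to write $\mu_2$ as a difference of conditional means of $Y_{i2}(0,1)$ and $Y_{i2}(1,0)$, drop the conditioning by random assignment, and read off $\tau+\delta$ from the appendix definition of $\delta$. Two points you make explicit are left implicit or stated elsewhere in the paper: that conditioning on $D_{i2}$ is the same as conditioning on the full sequence, and the saturated-model check, which the paper gives after Proposition~\ref{prop:pooled}.
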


\begin{proof}
There are only two possible treatment sequences. For unit $i$ with treatment sequence  $(D_{i1},D_{i2})=(0,1)$, $Y_{i2}=Y_{i2}(0,1)$. For unit $i$ with treatment sequence $(D_{i1},D_{i2})=(1,0)$, $Y_{i2}=Y_{i2}(1,0)$. 
Therefore
\[\mu_{2}=\mathbb{E}\left[Y_{i2}(0,1)\mid(D_{i1},D_{i2})=(0,1)\right]-\mathbb{E}\left[Y_{i2}(1,0)\mid(D_{i1},D_{i2})=(1,0)\right].
\]
Because the treatment sequence is randomly assigned, we have
\[\mu_2=\mathbb{E}\left[Y_{i2}(0,1)\right]
  -\mathbb{E}\left[Y_{i2}(1,0)\right]=\tau+\delta.
\]
\end{proof}

\begin{prop}[The pooled estimator identifies the ATE plus half the carryover gap]\label{prop:pooled}
Under Assumptions~\ref{as:sutva} and~\ref{as:random},
\[
\mu_{co}=\frac{1}{2}(\mu_{1}+\mu_{2})=\tau+\frac{\delta}{2}.
\]
Consequently, $\mu_{co}=\tau$ if and only if $\delta=0$, equivalently if and
only if $\mu_{1}=\mu_{2}$.
\end{prop}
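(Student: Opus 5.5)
The plan is to combine Propositions~\ref{prop:period1} and~\ref{prop:period2} directly, since both are stated under Assumptions~\ref{as:sutva} and~\ref{as:random} and hold here.

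First, I will note that $\mu_{co}$ is defined as $\frac{1}{2}(\mu_1+\mu_2)$. Substituting $\mu_1=\tau$ from Proposition~\ref{prop:period1} and $\mu_2=\tau+\delta$ from Proposition~\ref{prop:period2} gives
\[
\mu_{co}=\tfrac12\bigl(\tau+\tau+\delta\bigr)=\tau+\tfrac{\delta}{2}.
\]
If one prefers the within-unit representation, I will also check that $\frac12(\mathbb E[\Delta_i\mid(1,0)]+\mathbb E[\Delta_i\mid(0,1)])$ equals the same quantity. Random assignment removes the conditioning, and the terms then regroup by period into $\mu_1$ and $\mu_2$. Linearity of expectation makes this routine.

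Second, for the ``consequently'' clause, I will read off $\mu_{co}-\tau=\delta/2$, so $\mu_{co}=\tau$ if and only if $\delta=0$. From the two propositions, $\mu_2-\mu_1=\delta$, so $\delta=0$ if and only if $\mu_1=\mu_2$. This gives the chain of equivalences.

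There is no real obstacle here; the only care point is consistency of definitions. The appendix defines $\delta$ as $\mathbb E[Y_{i2}(0,1)-Y_{i2}(1,0)]-\tau$, while the main text defines it as $\mathbb E[\delta_{i,0}]-\mathbb E[\delta_{i,1}]$. Under model~\eqref{eq:saturated-model}, $Y_{i2}(0,1)-Y_{i2}(1,0)=\tau_i+\delta_{i,0}-\delta_{i,1}$, so the two definitions agree. I will remark on this so that the statement applies to either reading.
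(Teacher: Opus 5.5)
Your proposal is correct and matches the paper's proof: you substitute $\mu_1=\tau$ and $\mu_2=\tau+\delta$ from Propositions~\ref{prop:period1} and~\ref{prop:period2}, then read both equivalences off $\mu_2-\mu_1=\delta$. Your remark reconciling the appendix definition of $\delta$ with $\mathbb E[\delta_{i,0}]-\mathbb E[\delta_{i,1}]$ under the saturated model also matches the note the paper places right after the proof.
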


\begin{proof}
Propositions~\ref{prop:period1} and~\ref{prop:period2} give
\[
\mu_{co}=\frac{1}{2}\{\tau+(\tau+\delta)\}
=\tau+\frac{\delta}{2}.
\]
The two equivalences follow because $\mu_{2}-\mu_{1}=\delta$.
\end{proof}

The saturated model in Equation~\eqref{eq:saturated-model} is one useful
illustration of these potential-outcome estimands. Under that parameterization,
\[
\tau=\mathbb{E}[\tau_{i}],
\qquad
\delta=\mathbb{E}\left[\delta_{i,0}-\delta_{i,1}\right].
\]
The identification arguments above do not otherwise rely on that
parameterization.

\subsection{Sampling distributions and variances}\label{app:sampling}

Define $\bar{Y}_{t,(1,0)}$ and
$\bar{Y}_{t,(0,1)}$ as the sample means of the observed period-$t$ outcome among units with treatment sequences $(1,0)$ and $(0,1)$. The period-specific estimators are
\[
\hat{\mu}_{1}=\bar{Y}_{1,(1,0)}-\bar{Y}_{1,(0,1)},
\qquad
\hat{\mu}_{2}=\bar{Y}_{2,(0,1)}-\bar{Y}_{2,(1,0)}.
\]
Define the variance scales
\begin{align*}
\sigma_{1}^{2}
&\equiv\frac{n}{n_{1}}\mathbb{V}\left(Y_{i1}(1,0)\right)
 +\frac{n}{n_{2}}\mathbb{V}\left(Y_{i1}(0,1)\right),\\
\sigma_{2}^{2}
&\equiv\frac{n}{n_{2}}\mathbb{V}\left(Y_{i2}(0,1)\right)
 +\frac{n}{n_{1}}\mathbb{V}\left(Y_{i2}(1,0)\right).
\end{align*}

\paragraph*{Marginal asymptotic distributions}
\begin{prop}[Marginal asymptotic distributions of the period-specific estimators]\label{prop:marginal-asymptotic}
Under Assumptions~\ref{as:sutva}--\ref{as:regularity},
\begin{align*}
\frac{\sqrt{n}(\hat{\mu}_{1}-\mu_{1})}{\sigma_{1}}
&\overset{d}{\longrightarrow}\mathcal{N}(0,1),\\
\frac{\sqrt{n}(\hat{\mu}_{2}-\mu_{2})}{\sigma_{2}}
&\overset{d}{\longrightarrow}\mathcal{N}(0,1).
\end{align*}
\end{prop}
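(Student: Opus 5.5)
Because the treatment sequence is assigned completely at random with fixed group sizes, the period-specific estimators are differences of two group sample means. I would handle each group with a central limit theorem for simple random sampling from a superpopulation, then combine the two groups.

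\emph{Reduction to two means.} By Assumption~\ref{as:sutva}, $\bar Y_{1,(1,0)}$ is the mean of $Y_{i1}(1,0)$ over the $n_1$ units assigned to $(1,0)$, and $\bar Y_{1,(0,1)}$ is the mean of $Y_{i1}(0,1)$ over the $n_2$ units assigned to $(0,1)$. By Assumption~\ref{as:random}, the assignment is independent of the i.i.d.\ vectors $\mathcal Y_i$ of Assumption~\ref{as:regularity}. Conditional on the realized partition, the two groups are therefore disjoint collections of i.i.d.\ draws. The $n_1$ values $Y_{i1}(1,0)$ in the first group are i.i.d.\ with mean $\mathbb E[Y_{i1}(1,0)]$. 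The $n_2$ values $Y_{i1}(0,1)$ in the second group are i.i.d., independent of the first group, with mean $\mathbb E[Y_{i1}(0,1)]$. Since the conditional law does not depend on which partition was drawn, it is also the unconditional law. This reduces the problem to a two-sample problem with independent samples.

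\emph{CLT in each group.} Finite positive variances (Assumption~\ref{as:regularity}) allow the Lindeberg--L\'evy CLT in each group:
\[
\sqrt{n_1}\{\bar Y_{1,(1,0)}-\mathbb E[Y_{i1}(1,0)]\}\overset{d}{\to}\mathcal N\bigl(0,\mathbb V(Y_{i1}(1,0))\bigr),
\]
and analogously for group $(0,1)$ with $n_2$. Multiplying by $\sqrt{n/n_1}\to r^{-1/2}$ and $\sqrt{n/n_2}\to(1-r)^{-1/2}$ and applying Slutsky gives limits on the $\sqrt n$ scale. Independence of the groups makes the joint limit a product of the marginal limits, so the difference converges:
\[
\sqrt n(\hat\mu_1-\mu_1)\overset{d}{\to}\mathcal N\Bigl(0,\ \tfrac1r\mathbb V(Y_{i1}(1,0))+\tfrac1{1-r}\mathbb V(Y_{i1}(0,1))\Bigr).
\]
Here $\mu_1$ is the difference of the two group means by Proposition~\ref{prop:period1}. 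The limiting variance is the limit of $\sigma_1^2$ and is strictly positive, so Slutsky again yields the standardized statement $\sqrt n(\hat\mu_1-\mu_1)/\sigma_1\overset{d}{\to}\mathcal N(0,1)$.

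\emph{Second period.} The argument is identical with $Y_{i2}(0,1)$ and $Y_{i2}(1,0)$ in place of the Period-1 potential outcomes, using Proposition~\ref{prop:period2} for the centering at $\mu_2$.

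\emph{Main obstacle.} The delicate step is the reduction: justifying that, under complete randomization with fixed $n_1$, the within-group observations are i.i.d.\ and independent across groups. Conditioning on the assignment vector and invoking its independence from $(\mathcal Y_i)_i$ settles this. The second subtlety is that the sample sizes grow with $n$. Since $n_1$ and $n_2$ are deterministic, this is harmless, because the CLT applies along any sequence tending to infinity.
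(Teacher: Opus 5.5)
Your proposal is correct and follows essentially the same route as the paper: the two sequence groups are disjoint collections of i.i.d.\ units with independent means, a CLT is applied to each group mean, and the difference is taken. You are more explicit than the paper about three details it leaves implicit: conditioning on the assignment, the Slutsky rescaling via $n_1/n\to r$, and the fact that $\sigma_1^2$ itself depends on $n$ and converges to a strictly positive limit.
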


\begin{proof}
The two sequence groups are disjoint collections of i.i.d. units, so their sample means are
independent. Substituting the expressions for $\hat\mu_1$ and $\hat\mu_2$ gives
\begin{align*}
\mathbb{V}\left(\hat{\mu}_{1}\right)
&=\mathbb{V}\left(
  \bar{Y}_{1,(1,0)}-\bar{Y}_{1,(0,1)}\right)\\
&=\mathbb{V}\left(\bar{Y}_{1,(1,0)}\right)
 +\mathbb{V}\left(\bar{Y}_{1,(0,1)}\right)\\
&=\frac{1}{n_{1}}\mathbb{V}\left(Y_{i1}(1,0)\right)
 +\frac{1}{n_{2}}\mathbb{V}\left(Y_{i1}(0,1)\right)
 =\frac{\sigma_{1}^{2}}{n},\\
\mathbb{V}\left(\hat{\mu}_{2}\right)
&=\mathbb{V}\left(
  \bar{Y}_{2,(0,1)}-\bar{Y}_{2,(1,0)}\right)\\
&=\mathbb{V}\left(\bar{Y}_{2,(0,1)}\right)
 +\mathbb{V}\left(\bar{Y}_{2,(1,0)}\right)\\
&=\frac{1}{n_{2}}\mathbb{V}\left(Y_{i2}(0,1)\right)
 +\frac{1}{n_{1}}\mathbb{V}\left(Y_{i2}(1,0)\right)
 =\frac{\sigma_{2}^{2}}{n}.
\end{align*}
The second equality in each calculation uses independence across the two
sequence groups; the third uses the i.i.d. assumption and the group sizes.
The corresponding expectations for $\hat\mu_1$ and $\hat\mu_2$ are $\mu_{1}$ and $\mu_{2}$. Applying the central limit theorem to the two independent group means in each period and taking their difference yields the stated distributions.
\end{proof}

\begin{prop}[Asymptotic distribution of the carryover-gap estimator]\label{prop:delta-asymptotic}
Let $\hat{\delta}\equiv\hat{\mu}_{2}-\hat{\mu}_{1}$. Under
Assumptions~\ref{as:sutva}--\ref{as:regularity},
\[
\frac{\sqrt{n}(\hat{\delta}-\delta)}{\sigma_{\delta}}
\overset{d}{\longrightarrow}\mathcal{N}(0,1),
\]
where
\[
\sigma_{\delta}^{2}
\equiv\frac{n}{n_{1}}\mathbb{V}\left(Y_{i1}(1,0)+Y_{i2}(1,0)\right)
 +\frac{n}{n_{2}}\mathbb{V}\left(Y_{i1}(0,1)+Y_{i2}(0,1)\right).
\]
\end{prop}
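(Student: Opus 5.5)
The plan is to rewrite $\hat\delta$ as a difference of two independent group means of within-unit sums, and then reuse the argument of Proposition~\ref{prop:marginal-asymptotic}. Substituting the definitions of $\hat\mu_1$ and $\hat\mu_2$ gives
\[
\hat\delta=\hat\mu_2-\hat\mu_1
=\bigl(\bar{Y}_{1,(0,1)}+\bar{Y}_{2,(0,1)}\bigr)-\bigl(\bar{Y}_{1,(1,0)}+\bar{Y}_{2,(1,0)}\bigr).
\]
Define the unit-level sums $S_i(1,0)\equiv Y_{i1}(1,0)+Y_{i2}(1,0)$ and $S_i(0,1)\equiv Y_{i1}(0,1)+Y_{i2}(0,1)$. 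Then $\hat\delta=\bar S_{(0,1)}-\bar S_{(1,0)}$, where $\bar S_{(1,0)}$ and $\bar S_{(0,1)}$ are the sample means of the observed sums $Y_{i1}+Y_{i2}$ in the two sequence groups. By SUTVA, the observed sum equals $S_i(1,0)$ or $S_i(0,1)$ according to the unit's assigned sequence.

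For the mean, Assumption~\ref{as:random} gives
\[
\mathbb E[\hat\delta]=\mathbb E[S_i(0,1)]-\mathbb E[S_i(1,0)]
=\mathbb E[Y_{i2}(0,1)-Y_{i2}(1,0)]-\mathbb E[Y_{i1}(1,0)-Y_{i1}(0,1)],
\]
which is $(\tau+\delta)-\tau=\delta$ by the definitions in \ref{app:identification}, equivalently by Propositions~\ref{prop:period1} and~\ref{prop:period2}.

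For the variance, the groups are disjoint collections of i.i.d.\ units (Assumption~\ref{as:regularity}), so the two group means are independent. Hence
\[
\mathbb V(\hat\delta)=\frac{1}{n_2}\mathbb V(S_i(0,1))+\frac{1}{n_1}\mathbb V(S_i(1,0))=\frac{\sigma_\delta^2}{n}.
\]
The variances of the sums are finite because each component has finite variance and Cauchy--Schwarz bounds the covariances. Each is nonnegative, and the sum is positive so long as one of them is nonzero. I would note, or if necessary assume, that $\sigma_\delta^2>0$, since the regularity assumption only guarantees positive component variances, and a sum can in principle be degenerate.

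For asymptotic normality, I would apply the Lindeberg--Lévy CLT within each group, using $n_1/n\to r$ and $n_2/n\to 1-r$:
\[
\sqrt{n}\bigl(\bar S_{(1,0)}-\mathbb E S_i(1,0)\bigr)\overset{d}{\to}\mathcal N\bigl(0,\mathbb V(S_i(1,0))/r\bigr),
\]
and similarly for the $(0,1)$ group with limiting variance $\mathbb V(S_i(0,1))/(1-r)$. Independence makes the joint limit a product of these normals, so the difference converges to a normal with the summed variance, whose limit matches the limit of $\sigma_\delta^2$. Slutsky's lemma then allows normalizing by $\sigma_\delta$ rather than by its limit.

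The main obstacle is mild. One issue is making rigorous the step from a fixed design with exactly $n_1$ units to independent i.i.d.\ group samples; conditional on assignment, each group is an i.i.d.\ sample by Assumption~\ref{as:random} and the i.i.d.\ potential-outcome vectors. The other is the nondegeneracy of $\sigma_\delta$ noted above. Everything else mirrors the proof of Proposition~\ref{prop:marginal-asymptotic}.
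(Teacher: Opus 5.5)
Your proof is correct and follows the paper's route: you write $\hat\delta$ as the difference of the two independent sequence-group means of the within-unit sums $Y_{i1}+Y_{i2}$, add the variances, and apply a CLT. Your final step (a CLT on the sums within each group, then cross-group independence and Slutsky) is in fact cleaner than the paper's closing appeal to independence of $\hat\mu_1$ and $\hat\mu_2$, which are correlated with covariance $-\sigma_{12}/n$ by Proposition~\ref{prop:joint-asymptotic}; your caveat that Assumption~\ref{as:regularity} alone does not guarantee $\sigma_\delta^2>0$ is also a legitimate point about the stated hypotheses.
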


\begin{proof}
Substituting the expressions for $\hat\mu_1$ and $\hat\mu_2$ gives
\begin{align*}
\hat{\delta}
&=\left\{\bar{Y}_{1,(0,1)}+\bar{Y}_{2,(0,1)}\right\}
 -\left\{\bar{Y}_{1,(1,0)}+\bar{Y}_{2,(1,0)}\right\}.
\end{align*}
The two braces are means from independent sequence groups, of sizes $n_{2}$
and $n_{1}$ respectively. Therefore,
\begin{align*}
\mathbb{V}\left(\hat{\delta}\right)
&=\mathbb{V}\left(
 \left\{\bar{Y}_{1,(0,1)}+\bar{Y}_{2,(0,1)}\right\}
 -\left\{\bar{Y}_{1,(1,0)}+\bar{Y}_{2,(1,0)}\right\}
 \right)\\
&=\mathbb{V}\left(
 \bar{Y}_{1,(0,1)}+\bar{Y}_{2,(0,1)}\right)+\mathbb{V}\left(
 \bar{Y}_{1,(1,0)}+\bar{Y}_{2,(1,0)}\right)\\
&=\frac{1}{n_{2}}
  \mathbb{V}\left(Y_{i1}(0,1)+Y_{i2}(0,1)\right)+\frac{1}{n_{1}}
  \mathbb{V}\left(Y_{i1}(1,0)+Y_{i2}(1,0)\right)
 =\frac{\sigma_{\delta}^{2}}{n}.
\end{align*}
The second equality uses independence between the two sequence-group means,
and the third uses the i.i.d. assumption within each group.  Since $\hat\mu_1$ and $\hat\mu_2$ are independent and asymptotically normal, $\hat{\delta}$ is asymptotically normal as well, yielding the stated distribution.
\end{proof}

\begin{prop}[Asymptotic distribution of the pooled estimator]\label{prop:pooled-asymptotic} Let $\hat{\mu}_{co}\equiv\frac{1}{2}(\hat{\mu}_{1}+\hat{\mu}_{2})$. Under
Assumptions~\ref{as:sutva}--\ref{as:regularity}, 
\begin{equation}
\frac{2\sqrt{n}(\hat{\mu}_{co}-\mu_{co})}{\sigma_{co}}
\overset{d}{\longrightarrow}\mathcal{N}(0,1),
\label{eq:pooled-asymptotic}
\end{equation}
where
\[
\sigma_{co}^{2}
\equiv\frac{n}{n_{1}}\mathbb{V}\left(Y_{i1}(1,0)-Y_{i2}(1,0)\right)
 +\frac{n}{n_{2}}\mathbb{V}\left(Y_{i1}(0,1)-Y_{i2}(0,1)\right).
\]\end{prop}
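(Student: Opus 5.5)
The plan is to rewrite the pooled estimator as a difference of two independent group means of within-unit contrasts, and then follow the template of Propositions~\ref{prop:marginal-asymptotic} and~\ref{prop:delta-asymptotic}.

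\textbf{Step 1: rewrite $\hat\mu_{co}$.} Substituting $\hat\mu_1=\bar Y_{1,(1,0)}-\bar Y_{1,(0,1)}$ and $\hat\mu_2=\bar Y_{2,(0,1)}-\bar Y_{2,(1,0)}$ gives
\[
2\hat\mu_{co}=\left\{\bar Y_{1,(1,0)}-\bar Y_{2,(1,0)}\right\}-\left\{\bar Y_{1,(0,1)}-\bar Y_{2,(0,1)}\right\}.
\]
The first brace is the sample mean of $Y_{i1}(1,0)-Y_{i2}(1,0)=\Delta_i(1,0)$ over the $n_1$ units in sequence $(1,0)$. The second brace is the sample mean of $Y_{i1}(0,1)-Y_{i2}(0,1)=-\Delta_i(0,1)$ over the $n_2$ units in sequence $(0,1)$. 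Each brace is therefore a mean of a single scalar per unit, and this is why the within-unit covariance enters only through the variance of a difference.

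\textbf{Step 2: mean and variance.} By SUTVA and random assignment (as in Propositions~\ref{prop:period1}--\ref{prop:pooled}), $\mathbb E[2\hat\mu_{co}]=\mu_1+\mu_2=2\mu_{co}$. Random assignment of a fixed number of units to each sequence, combined with the i.i.d.\ structure of Assumption~\ref{as:regularity}, means that conditional on the assignment the two groups are disjoint collections of i.i.d.\ draws. The two braces are therefore independent, and
\[
\mathbb V(2\hat\mu_{co})=\frac{1}{n_1}\mathbb V\{Y_{i1}(1,0)-Y_{i2}(1,0)\}+\frac{1}{n_2}\mathbb V\{Y_{i1}(0,1)-Y_{i2}(0,1)\}=\frac{\sigma_{co}^2}{n}.
\]

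\textbf{Step 3: asymptotic normality.} I would apply the Lindeberg--L\'evy CLT within each group. Finite variances of the components of $\mathcal Y_i$ give finite variances of the differences, so $\sqrt{n_1}(\text{first brace}-\mathbb E)$ and $\sqrt{n_2}(\text{second brace}-\mathbb E)$ converge to independent centered normals. Assumption~\ref{as:random} gives $n/n_1\to 1/r$ and $n/n_2\to1/(1-r)$. Slutsky's lemma then yields $\sqrt n\,(2\hat\mu_{co}-2\mu_{co})\to\mathcal N(0,\lim\sigma_{co}^2)$, and normalizing by $\sigma_{co}$ gives \eqref{eq:pooled-asymptotic}.

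\textbf{Main obstacle: nondegeneracy.} Dividing by $\sigma_{co}$ requires $\sigma_{co}^2$ to be bounded away from zero. Assumption~\ref{as:regularity} only makes each component's variance positive, and a within-unit difference could in principle be degenerate, for example if $Y_{i2}(1,0)=Y_{i1}(1,0)+c$. I would handle this by adding the mild condition $\lim\sigma_{co}^2>0$, or by noting that the result holds in the form $\sqrt n(2\hat\mu_{co}-2\mu_{co})\to\mathcal N(0,\sigma_{co}^2)$ with a possibly degenerate limit. Everything else is routine once Step 1 reduces the pooled estimator to a two-sample difference in means of unit-level contrasts.
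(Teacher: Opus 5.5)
Your proposal is correct and follows the paper's route: you write $2\hat\mu_{co}$ as a difference of the two independent sequence-group means of within-unit contrasts, add their variances to get $\sigma_{co}^2/n$, and apply the CLT within each group, which is exactly the main-text sketch. Your final step is cleaner than the appendix proof, which appeals to independence of $\hat\mu_1$ and $\hat\mu_2$ even though $\mathrm{Cov}(\hat\mu_1,\hat\mu_2)=-\sigma_{12}/n$ by Proposition~\ref{prop:joint-asymptotic}, and your nondegeneracy condition ($\lim\sigma_{co}^2>0$, i.e.\ not both within-unit contrasts almost surely constant) is a genuine assumption that the statement needs but the paper leaves unstated.
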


\begin{proof}
Substituting the expressions for $\hat\mu_1$ and $\hat\mu_2$ gives
\[
\hat{\mu}_{co}=\frac{1}{2}
\left\{\bar{Y}_{1,(1,0)}-\bar{Y}_{2,(1,0)}\right\}
+\frac{1}{2}\left\{\bar{Y}_{2,(0,1)}-\bar{Y}_{1,(0,1)}\right\}.
\]
The two braces are means from independent sequence groups, of sizes $n_{2}$
and $n_{1}$ respectively. Therefore,
\begin{align*}
\mathbb{V}\left(\hat{\mu}_{co}\right)
&=\frac{1}{4}\mathbb{V}\left(
 \left\{\bar{Y}_{1,(1,0)}-\bar{Y}_{2,(1,0)}\right\}
 +\left\{\bar{Y}_{1,(0,1)}-\bar{Y}_{2,(0,1)}\right\}
 \right)\\
&=\frac{1}{4}\mathbb{V}\left(
 \bar{Y}_{1,(1,0)}-\bar{Y}_{2,(1,0)}\right)+\frac{1}{4}\mathbb{V}\left(
 \bar{Y}_{1,(0,1)}-\bar{Y}_{2,(0,1)}\right)\\
&=\frac{1}{4n_{1}}
  \mathbb{V}\left(Y_{i1}(1,0)-Y_{i2}(1,0)\right)+\frac{1}{4n_{2}}
  \mathbb{V}\left(Y_{i1}(0,1)-Y_{i2}(0,1)\right)
 =\frac{\sigma_{co}^{2}}{4n}.
\end{align*}
The second equality uses independence between the two sequence-group means,
and the third uses the i.i.d. assumption within each group.  Since $\hat\mu_1$ and $\hat\mu_2$ are independent and asymptotically normal, $\hat{\mu}_{co}$ is asymptotically normal as well, yielding the stated distribution.
\end{proof}

\paragraph*{Relationship between variances and covariances}
Applying
$\mathbb{V}(A+B)=\mathbb{V}(A)+\mathbb{V}(B)+2\mathrm{Cov}(A,B)$ gives
\begin{equation}
\begin{aligned}
\sigma_{\delta}^{2}
&=\frac{n}{n_{1}}\left\{
  \mathbb{V}\left(Y_{i1}(1,0)\right)
 +\mathbb{V}\left(Y_{i2}(1,0)\right)
 +2\mathrm{Cov}\left(Y_{i1}(1,0),Y_{i2}(1,0)\right)\right\}\\
&\quad+\frac{n}{n_{2}}\left\{
  \mathbb{V}\left(Y_{i1}(0,1)\right)
 +\mathbb{V}\left(Y_{i2}(0,1)\right)
 +2\mathrm{Cov}\left(Y_{i1}(0,1),Y_{i2}(0,1)\right)\right\}\\
&=\sigma_{1}^{2}+\sigma_{2}^{2}+2\sigma_{12}.
\end{aligned}
\label{eq:sigma-delta-identity}
\end{equation}
Likewise, applying
$\mathbb{V}(A-B)=\mathbb{V}(A)+\mathbb{V}(B)-2\mathrm{Cov}(A,B)$ gives
\begin{equation}
\begin{aligned}
\sigma_{co}^{2}
&=\frac{n}{n_{1}}\left\{
  \mathbb{V}\left(Y_{i1}(1,0)\right)
 +\mathbb{V}\left(Y_{i2}(1,0)\right)
 -2\mathrm{Cov}\left(Y_{i1}(1,0),Y_{i2}(1,0)\right)\right\}\\
&\quad+\frac{n}{n_{2}}\left\{
  \mathbb{V}\left(Y_{i1}(0,1)\right)
 +\mathbb{V}\left(Y_{i2}(0,1)\right)
 -2\mathrm{Cov}\left(Y_{i1}(0,1),Y_{i2}(0,1)\right)\right\}\\
&=\sigma_{1}^{2}+\sigma_{2}^{2}-2\sigma_{12}.
\end{aligned}
\label{eq:variance-identities}
\end{equation}

\begin{prop}[A sufficient condition for pooling to be more efficient]\label{prop:pooling-sufficient}
$\sigma_2^2<\sigma_1^2$ is sufficient for $\sigma_{co}^2<4\sigma_1^2$
regardless of $\sigma_{12}$; the weak inequality $\sigma_2^2\le\sigma_1^2$ is
sufficient except when $\sigma_2=\sigma_1$ and $\sigma_{12}=-\sigma_1\sigma_2$.
\end{prop}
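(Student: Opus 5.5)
The plan is to combine identity \eqref{eq:variance-identities}, $\sigma_{co}^2=\sigma_1^2+\sigma_2^2-2\sigma_{12}$, with a Cauchy--Schwarz bound on $\sigma_{12}$. First I establish $|\sigma_{12}|\le\sigma_1\sigma_2$. Write $a_k=\mathbb V(Y_{ik}(1,0))$, $b_k=\mathbb V(Y_{ik}(0,1))$, $w_1=n/n_1$ and $w_2=n/n_2$. Applying Cauchy--Schwarz to each covariance gives
\[
|\sigma_{12}|\le w_1\sqrt{a_1a_2}+w_2\sqrt{b_1b_2}.
\]
A second Cauchy--Schwarz step, on the vectors $(\sqrt{w_1a_1},\sqrt{w_2b_1})$ and $(\sqrt{w_1a_2},\sqrt{w_2b_2})$, bounds the right-hand side by $\sqrt{(w_1a_1+w_2b_1)(w_1a_2+w_2b_2)}=\sigma_1\sigma_2$.

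Next I bound the pooled variance. From the identity,
\[
\sigma_{co}^2\le\sigma_1^2+\sigma_2^2+2\sigma_1\sigma_2=(\sigma_1+\sigma_2)^2.
\]
If $\sigma_2<\sigma_1$, then $(\sigma_1+\sigma_2)^2<4\sigma_1^2$, so $\sigma_{co}^2<4\sigma_1^2$ for every value of $\sigma_{12}$. This proves the strict claim.

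For the weak version with $\sigma_2\le\sigma_1$, we get $\sigma_{co}^2\le(\sigma_1+\sigma_2)^2\le4\sigma_1^2$. Equality $\sigma_{co}^2=4\sigma_1^2$ requires both inequalities in this chain to be equalities. That means $\sigma_2=\sigma_1$ and $-2\sigma_{12}=2\sigma_1\sigma_2$, i.e. $\sigma_{12}=-\sigma_1\sigma_2$. Outside this case the chain contains at least one strict step, so the strict inequality holds.

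There is one point to state precisely. The exceptional case is exactly where the chain can be tight; it need not actually occur. Under Assumption~\ref{as:regularity}, $\sigma_{12}=-\sigma_1\sigma_2$ forces equality in both Cauchy--Schwarz steps. This requires perfect negative correlation within each sequence and proportionality of the variance vectors. These are degenerate but feasible configurations, so excluding them is the correct qualifier, and the proof needs no further characterization. The only nonroutine step is the second Cauchy--Schwarz application, which handles the weights $n/n_1$ and $n/n_2$; the rest is algebra.
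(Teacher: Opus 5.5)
Your proof is correct and follows essentially the same route as the paper: the same two-stage Cauchy--Schwarz argument giving $|\sigma_{12}|\le\sigma_1\sigma_2$, then the identity $\sigma_{co}^2=\sigma_1^2+\sigma_2^2-2\sigma_{12}$. Your bound $\sigma_{co}^2\le(\sigma_1+\sigma_2)^2\le 4\sigma_1^2$ is just a tidier form of the paper's factorization $(x-1)(x+3)<0$ with $x=\sigma_2/\sigma_1$. Your remark that the exceptional configuration is attainable, which shows the qualifier is needed, is a small addition the paper omits.
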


\begin{proof}
First bound $\sigma_{12}$. Within each sequence, Cauchy--Schwarz applied to
the two potential outcomes gives
\[
\left|\mathrm{Cov}\left(Y_{i1}(1,0),Y_{i2}(1,0)\right)\right|
\le
\sqrt{\mathbb V\left(Y_{i1}(1,0)\right)\mathbb V\left(Y_{i2}(1,0)\right)},
\]
and likewise for sequence $(0,1)$. Writing
$a_1=\sqrt{(n/n_1)\mathbb V(Y_{i1}(1,0))}$,
$a_2=\sqrt{(n/n_2)\mathbb V(Y_{i1}(0,1))}$,
$b_1=\sqrt{(n/n_1)\mathbb V(Y_{i2}(1,0))}$,
$b_2=\sqrt{(n/n_2)\mathbb V(Y_{i2}(0,1))}$, the two within-sequence bounds
give 
\[
\begin{aligned}
|\sigma_{12}|&\le\frac{n}{n_1}\left|\mathrm{Cov}\left(Y_{i1}(1,0),Y_{i2}(1,0)\right)\right|+\frac{n}{n_2}\left|\mathrm{Cov}\left(Y_{i1}(0,1),Y_{i2}(0,1)\right)\right|\\
&\le a_1b_1+a_2b_2.
\end{aligned}
\]
The Cauchy--Schwarz inequality gives $a_1b_1+a_2b_2\le\sqrt{a_1^2+a_2^2}\sqrt{b_1^2+b_2^2}
=\sigma_1\sigma_2$. Hence $|\sigma_{12}|\le\sigma_1\sigma_2$.

By Equation~\ref{eq:variance-identities}, $\sigma_{co}^2<4\sigma_1^2$ is
equivalent to $\sigma_2^2-2\sigma_{12}<3\sigma_1^2$. Since
$\sigma_{12}\ge-\sigma_1\sigma_2$, the left side is at most
$\sigma_2^2+2\sigma_1\sigma_2$. Writing $x\equiv\sigma_2/\sigma_1$,
$\sigma_2^2+2\sigma_1\sigma_2<3\sigma_1^2$ reduces to $x^2+2x-3<0$, i.e.
$(x-1)(x+3)<0$, which holds for $x\in(0,1)$.
Hence $\sigma_2^2<\sigma_1^2$ guarantees $\sigma_{co}^2<4\sigma_1^2$ for every
admissible $\sigma_{12}$. When $\sigma_2^2=\sigma_1^2$, $x=1$, the inequality becomes equality only at $\sigma_{12}=-\sigma_1\sigma_2$, in which case $\sigma_{co}^2=4\sigma_1^2$.
\end{proof}

\paragraph*{Design-based variance estimators.}
Let $S_{(d_1,d_2)}^2(X)$ denote the sample variance of $X$ among respondents
assigned to sequence $(d_1,d_2)$), computed with mean squared deviations divided by 
$n$ (not $n-1$). The sample analogs of $\sigma_1^2$, $\sigma_2^2$, $\sigma_{co}^2$ and $\sigma_{\delta}^2$ are
\[
\begin{aligned}
\hat\sigma_1^2
&=
\frac{n}{n_1}S_{(1,0)}^2(Y_{i1})
+
\frac{n}{n_2}S_{(0,1)}^2(Y_{i1}),\\
\hat\sigma_2^2
&=
\frac{n}{n_2}S_{(0,1)}^2(Y_{i2})
+
\frac{n}{n_1}S_{(1,0)}^2(Y_{i2}),\\
\hat\sigma_{co}^2
&=
\frac{n}{n_1}S_{(1,0)}^2(\Delta_i)
+
\frac{n}{n_2}S_{(0,1)}^2(\Delta_i),\\
\hat\sigma_{\delta}^2
&=
\frac{n}{n_1}S_{(1,0)}^2(Y_{i1}(1,0)+Y_{i2}(1,0))
+
\frac{n}{n_2}S_{(0,1)}^2(Y_{i1}(0,1)+Y_{i2}(0,1)).
\end{aligned}
\]
The resulting design-based standard errors are
\[
\widehat{\operatorname{se}}(\hat\mu_1)
=
\frac{\hat\sigma_1}{\sqrt n},
\qquad
\widehat{\operatorname{se}}(\hat\mu_2)
=
\frac{\hat\sigma_2}{\sqrt n},
\qquad
\widehat{\operatorname{se}}(\hat\mu_{co})
=
\frac{\hat\sigma_{co}}{2\sqrt n},\qquad \widehat{\operatorname{se}}(\hat\mu_{\delta})
=
\frac{\hat\sigma_{\delta}}{2\sqrt n}.
\]

\paragraph*{Joint asymptotic distributions}

The results so far are related to marginal distributions. The analysis of the two-step procedure needs the joint asymptotic distributions of $W_1=(\hat{\delta},\hat{\mu}_{1})^{\top}$ and $W_2=(\hat\delta,\hat{\mu}_{co})^{\top}$.

\begin{prop}[Joint asymptotic distributions]\label{prop:joint-asymptotic}
Let $W_1=(\hat{\delta},\hat{\mu}_{1})^{\top}$, $W_2=(\hat\delta,\hat{\mu}_{co})^{\top}$, $m_1=(\delta,\mu_{1})^{\top}$ and $m_2=(\delta,\mu_{co})^{\top}$. Under
Assumptions~\ref{as:sutva}--\ref{as:regularity},
\[
\sqrt{n}\left(W_1-m_1\right)\overset{d}{\longrightarrow}\mathcal{N}\left(0,\Sigma_1\right),
\qquad
\Sigma_1=\begin{pmatrix}
\sigma_{\delta}^{2} & -\left(\sigma_{1}^{2}+\sigma_{12}\right)\\[3pt]
-\left(\sigma_{1}^{2}+\sigma_{12}\right) & \sigma_{1}^{2}
\end{pmatrix};
\]
and
\[
\sqrt{n}\left(W_2-m_2\right)\overset{d}{\longrightarrow}\mathcal{N}\left(0,\Sigma_2\right),
\qquad
\Sigma_2=\begin{pmatrix}
\sigma_{\delta}^{2} & \tfrac{1}{2}\left(\sigma_{2}^{2}-\sigma_{1}^{2}\right)\\[3pt]
\tfrac{1}{2}\left(\sigma_{2}^{2}-\sigma_{1}^{2}\right) & \tfrac{1}{4}\sigma_{co}^{2}
\end{pmatrix}.
\]
\end{prop}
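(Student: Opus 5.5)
Both $\hat\delta$ and $\hat\mu_1$ (and likewise $\hat\mu_{co}$) are linear functions of four sample means: $\bar Y_{1,(1,0)}$, $\bar Y_{2,(1,0)}$ from the $(1,0)$ group and $\bar Y_{1,(0,1)}$, $\bar Y_{2,(0,1)}$ from the $(0,1)$ group. The plan is to first establish a joint CLT for this four-vector and then obtain both statements by the continuous mapping theorem (Cramér–Wold), so that only the covariance matrices need to be computed.

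\textbf{Step 1: joint CLT for the group means.} Within the $(1,0)$ group the pairs $(Y_{i1}(1,0),Y_{i2}(1,0))$ are i.i.d. with finite second moments by Assumption~\ref{as:regularity}. Random assignment (Assumption~\ref{as:random}) makes the group members a random sample of the units, so a multivariate CLT applies: $\sqrt{n_1}$ times the centered pair of group means converges to a bivariate normal with the within-sequence covariance matrix. The same holds in the $(0,1)$ group. Because the two groups are disjoint sets of independent units, the two bivariate limits are independent. Using $n/n_1\to 1/r$ and $n/n_2\to 1/(1-r)$, rescaling by $\sqrt n$ gives a joint normal limit for the four-vector, with block-diagonal covariance whose blocks are scaled by $n/n_1$ and $n/n_2$.

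On a technical point, the group assignment is fixed-size sampling without replacement from a superpopulation of i.i.d. units. Since assignment is independent of the outcomes, conditional on which units are assigned the two groups are i.i.d. samples from the same distribution. I would state this explicitly, as the paper already does implicitly in Propositions~\ref{prop:marginal-asymptotic}--\ref{prop:pooled-asymptotic}. The diagonal entries of both matrices are already known from those propositions: $\sigma_\delta^2$, $\sigma_1^2$, and $\sigma_{co}^2/4$.

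\textbf{Step 2: the off-diagonal entries.} This is the only real computation and the main place where care with signs is needed. I write $\hat\delta=(\bar Y_{1,(0,1)}+\bar Y_{2,(0,1)})-(\bar Y_{1,(1,0)}+\bar Y_{2,(1,0)})$ and $\hat\mu_1=\bar Y_{1,(1,0)}-\bar Y_{1,(0,1)}$. Cross-group covariances vanish, so
\[
n\,\mathrm{Cov}(\hat\delta,\hat\mu_1)=\frac{n}{n_2}\mathrm{Cov}\bigl(Y_{i1}(0,1)+Y_{i2}(0,1),-Y_{i1}(0,1)\bigr)+\frac{n}{n_1}\mathrm{Cov}\bigl(-Y_{i1}(1,0)-Y_{i2}(1,0),Y_{i1}(1,0)\bigr).
\]
Expanding each term gives $-(\sigma_1^2+\sigma_{12})$ by the definitions of $\sigma_1^2$ and $\sigma_{12}$.

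For $\Sigma_2$, I use $\hat\mu_{co}=\tfrac12\{(\bar Y_{1,(1,0)}-\bar Y_{2,(1,0)})+(\bar Y_{2,(0,1)}-\bar Y_{1,(0,1)})\}$. The key simplification is that $\mathrm{Cov}(A+B,A-B)=\mathbb V(A)-\mathbb V(B)$, which makes the period-covariance terms cancel. The $(0,1)$ group then contributes $\tfrac12\frac{n}{n_2}\{\mathbb V(Y_{i2}(0,1))-\mathbb V(Y_{i1}(0,1))\}$, and the $(1,0)$ group contributes $-\tfrac12\frac{n}{n_1}\{\mathbb V(Y_{i1}(1,0))-\mathbb V(Y_{i2}(1,0))\}$. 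Summing gives $\tfrac12(\sigma_2^2-\sigma_1^2)$.

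\textbf{Step 3: conclusion.} Finally, $W_1-m_1$ and $W_2-m_2$ are fixed linear maps of the centered four-vector, because the centering constants match by Propositions~\ref{prop:period1}--\ref{prop:pooled}. The continuous mapping theorem then yields the two stated Gaussian limits with covariances $\Sigma_1$ and $\Sigma_2$.
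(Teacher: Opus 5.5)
Your proposal is correct and follows essentially the same route as the paper. You get joint normality from the four independent within-sequence group means, take the diagonals from the marginal propositions, and obtain the off-diagonals by bilinearity; your multivariate CLT plus continuous mapping is equivalent to the paper's Cram\'er--Wold argument. The only difference is cosmetic: you compute the cross-covariances group by group via $\mathrm{Cov}(A+B,A-B)=\mathbb V(A)-\mathbb V(B)$, whereas the paper first derives $\mathrm{Cov}(\hat\mu_1,\hat\mu_2)=-\sigma_{12}/n$ and then uses $\mathrm{Cov}(\hat\delta,\hat\mu_1)=\mathrm{Cov}(\hat\mu_2,\hat\mu_1)-\mathbb V(\hat\mu_1)$ and $\mathrm{Cov}(\hat\delta,\hat\mu_{co})=\tfrac12\{\mathbb V(\hat\mu_2)-\mathbb V(\hat\mu_1)\}$, yielding the same $\Sigma_1$ and $\Sigma_2$.
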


\begin{proof}
Each component of $W_1$ is a fixed linear combination of the four within-sequence sample means $\bar{Y}_{t,(d_{1},d_{2})}$. Hence any scalar linear combination $a_1^{\top}W_1$ is also a linear combination of the sample means. Since each sample mean is an average of i.i.d. terms, it follows from the central limit theorem that every scalar linear combination of $W_1$ is asymptotically normal, which implies that $W_1$ is asymptotically bivariate normal. The same argument applies to $W_2$.

The variance terms follow directly from the marginal distributions. For the covariance terms, we first compute $\mathrm{Cov}\left(\hat{\mu}_{1},\hat{\mu}_{2}\right)$. Substituting the expressions for $\hat\mu_1$ and $\hat\mu_2$ and using the independence between the two sequence groups yields
\[
\mathrm{Cov}\left(\hat{\mu}_{1},\hat{\mu}_{2}\right)
=-\frac{1}{n_{1}}\mathrm{Cov}\left(Y_{i1}(1,0),Y_{i2}(1,0)\right)
 -\frac{1}{n_{2}}\mathrm{Cov}\left(Y_{i1}(0,1),Y_{i2}(0,1)\right)
=-\frac{\sigma_{12}}{n}.
\]
Since $\hat{\delta}=\hat{\mu}_{2}-\hat{\mu}_{1}$ and
$\hat{\mu}_{co}=\tfrac{1}{2}(\hat{\mu}_{1}+\hat{\mu}_{2})$, it follows that
\begin{align*}
\mathrm{Cov}\left(\hat{\delta},\hat{\mu}_{1}\right)&=\mathrm{Cov}\left(\hat\mu_2,\hat\mu_1\right)-\mathbb V(\hat\mu_1)=-\frac{1}{n}\left(\sigma_1^2+\sigma_{12}\right),\\
\mathrm{Cov}\left(\hat{\delta},\hat{\mu}_{co}\right)&=\frac{1}{2}\left(\mathbb V(\hat\mu_2)-\mathbb V(\hat\mu_1)\right)=\frac{1}{2n}\left(\sigma_2^2-\sigma_1^2\right).
\end{align*}
Combining the above arguments yields the stated asymptotic distribution.
\end{proof}

\subsection{Treatment-only  regressions}\label{app:bare-pooled}

Consider the treatment-only regression,
\begin{equation}
Y_{it}=\beta_{0}+\beta_{1}D_{it}+e_{it}.
\label{eq:bare-pooled}
\end{equation}
Every unit contributes one treated and one control observation. 

\paragraph*{Treatment-only OLS}
First, consider the OLS model that leaves the within-unit dependence unrestricted when inference is clustered by unit. 

There are $n$ observations for treatment and $n$ observations under control, so the estimator for $\beta_1$ equals the difference in mean observed outcomes between treatment and control observations. The mean observed outcomes under treatment and control are 
\begin{align*}
m_1&=\frac{1}{n}\left(\sum_{i:\ (D_{i1},D_{i2})=(1,0)}Y_{i1}+\sum_{i:\ (D_{i1},D_{i2})=(0,1)}Y_{i2}\right),\\
m_0&=\frac{1}{n}\left(\sum_{i:\ (D_{i1},D_{i2})=(1,0)}Y_{i2}+\sum_{i:\ (D_{i1},D_{i2})=(0,1)}Y_{i1}\right).
\end{align*}
Therefore $\hat{\beta}_1=m_1-m_0$. 

Using the sequence-specific treated-minus-control differences defined in the main text gives
\begin{align*}
\hat{\beta}_{1}
&=\frac{1}{n}\sum_{i=1}^n\Delta_i\\
&=\frac{1}{n}\left\{
 \sum_{i:\ (D_{i1},D_{i2})=(1,0)}\Delta_i(1,0)
 +\sum_{i:\ (D_{i1},D_{i2})=(0,1)}\Delta_i(0,1)
 \right\}\\
&=\frac{n_{1}}{n}\bar{\Delta}_{(1,0)}
 +\frac{n_{2}}{n}\bar{\Delta}_{(0,1)}.
\end{align*}
The pooled estimator can be written as
\[
\hat{\mu}_{co}=\frac{1}{2}\left\{
\bar{\Delta}_{(1,0)}+\bar{\Delta}_{(0,1)}\right\},
\]
The difference between these two estimators is
\[
\hat{\beta}_{1}-\hat{\mu}_{co}
=\frac{2n_1-n}{2n}\bar{\Delta}_{(1,0)}+\frac{2n_2-n}{2n}\bar{\Delta}_{(0,1)}\\
=\frac{n_{1}-n_{2}}{2n}\left\{
\bar{\Delta}_{(1,0)}-\bar{\Delta}_{(0,1)}\right\}.
\]
The second equality can be obtained by plugging in $n=n_1+n_2$.
Thus equal allocation ($n_1=n_2$) guarantees $\hat\beta_{1}=\hat\mu_{co}$. With unequal allocation ($n_1\neq n_2$), the two may coincide in a particular sample, but they are not the same estimator.

Applying the saturated potential-outcome representation in
Equation~\ref{eq:saturated-model} gives
\begin{align*}
\mathbb{E}\left[\Delta_i(1,0)\right]
&=\tau+\mathbb{E}\left[\pi_{i1}-\pi_{i2}\right]
  -\mathbb{E}\left[\delta_{i,1}\right],\\
\mathbb{E}\left[\Delta_i(0,1)\right]
&=\tau-\mathbb{E}\left[\pi_{i1}-\pi_{i2}\right]
  +\mathbb{E}\left[\delta_{i,0}\right].
\end{align*}
Therefore
\begin{align}
\mathbb{E}\left[\hat\beta_{1}\right]
&=\tau+\frac{n_{1}-n_{2}}{n}
  \mathbb{E}\left[\pi_{i1}-\pi_{i2}\right]
  +\frac{n_{2}}{n}\mathbb{E}\left[\delta_{i,0}\right]
  -\frac{n_{1}}{n}\mathbb{E}\left[\delta_{i,1}\right].
\label{eq:bare-pooled-expectation}
\end{align}
With unequal allocation ($n_1\neq n_2$), $\mathbb{E}[\hat\beta_1]=\tau$ only when
\begin{equation}
(n_{1}-n_{2})\mathbb{E}\left[\pi_{i1}-\pi_{i2}\right]
+n_{2}\mathbb{E}\left[\delta_{i,0}\right]
-n_{1}\mathbb{E}\left[\delta_{i,1}\right]=0.
\label{eq:bare-pooled-restriction}
\end{equation}
Because $\mathbb{E}[\hat\beta_1]$ in Equation~\eqref{eq:bare-pooled-expectation} contains $\mathbb{E}[\pi_{i1}-\pi_{i2}]$, it is not, in general, any fixed weighted average of
$\mu_{1}$ and $\mu_{2}$.

Next consider the unadjusted cluster-robust (CR0) variance estimator. 
Stacking the independent variables (the constant 1 and the treatment indicator $D_{it}$) for the $2n$ observations unit by unit (period 1 then period 2), with units in sequence (1,0) first and those in 
(0,1) second, yields the design matrix
$$
X=\left(\begin{array}{c}
\begin{array}{cc}
\\
1 & 1\\
1 & 0\\\vdots & \vdots\\
1 & 1\\
1 & 0\\
\end{array}\\
\hline
\begin{array}{cc}
1 & 0\\
1 & 1\\
\vdots & \vdots\\
1 & 0\\
1 & 1\\
\end{array}
\end{array}\right).
$$
The upper block contains $n_1$ units and $2n_1$ rows, and the lower block contains $n_2$ units and $2n_2$ rows. We can obtain
$$
X^{\prime}X=\left(\begin{array}{cc}
2n & n\\
n & n
\end{array}\right),
\qquad
\left(X^{\prime}X\right)^{-1}=\frac{1}{n}\left(\begin{array}{cc}
1 & -1\\
-1 & 2
\end{array}\right).
$$
Note that $(X'X)^{-1}$ does not depend on how $n$ splits into $n_{1}$ and
$n_{2}$, because there is one treated and one control row per unit
whichever sequence the unit is in. 

Write $\hat\beta=(\hat\beta_0,\hat\beta_1)^{\top}$.
The CR0 variance estimator for $\hat\beta$ is
$$
\widehat{\mathbb{V}}_{\mathrm{CR0}}(\hat\beta)
=\left(X^{\prime}X\right)^{-1}\Bigl(\textstyle\sum_{i=1}^{n}X_{i}^{\prime}\hat{e}_{i}\hat{e}_{i}^{\prime}X_{i}\Bigr)\left(X^{\prime}X\right)^{-1}=\left(X^{\prime}X\right)^{-1}\Bigl(\textstyle\sum_{i=1}^{n}s_is_i'\Bigr)\left(X^{\prime}X\right)^{-1},
$$
where $X_{i}$ is the $2\times2$ block of $X$ for unit $i$, $\hat e_{i}$ is the
residual vector for unit $i$, and $s_{i}\equiv X_{i}^{\prime}\hat e_{i}$. For units with
$(D_{i1},D_{i2})=(1,0)$,
\[
s_i=\begin{pmatrix}1&1\\1&0\end{pmatrix}
\begin{pmatrix}Y_{i1}-m_{1}\\ Y_{i2}-m_{0}\end{pmatrix}
=\begin{pmatrix}(Y_{i1}-m_{1})+(Y_{i2}-m_{0})\\ Y_{i1}-m_{1}\end{pmatrix}.
\]
For units with
$(D_{i1},D_{i2})=(0,1)$,
\[
s_i=\begin{pmatrix}1&1\\0&1\end{pmatrix}
\begin{pmatrix}Y_{i1}-m_{0}\\ Y_{i2}-m_{1}\end{pmatrix}
=\begin{pmatrix}(Y_{i1}-m_{0})+(Y_{i2}-m_{1})\\ Y_{i2}-m_{1}\end{pmatrix}.
\]

We need only the $(2,2)$ entry of the variance estimator matrix. The second row
of $(X'X)^{-1}$ is $v^{\prime}=\frac{1}{n}(-1,\,2)$, so
$$
\widehat{\mathbb{V}}_{\mathrm{CR0}}(\hat\beta_1)
=v^{\prime}\Bigl(\textstyle\sum_{i=1}^n s_{i}s_{i}^{\prime}\Bigr)v=\sum_{i=1}^n (v's_i)^2
=\frac{1}{n^{2}}\sum_{i=1}^{n}\bigl(s_{i,1}-2s_{i,2}\bigr)^{2},
$$
where $s_{i,1}$ and $s_{i,2}$ are the first and second elements of $s_i$.

Evaluating $w_{i}\equiv s_{i,1}-2s_{i,2}$ in each group: for unit $i$ in treatment sequence (1,0), $$w_{i}=(Y_{i2}-m_{0})-(Y_{i1}-m_{1})=(m_{1}-m_{0})-(Y_{i1}-Y_{i2})=\hat\beta_{1}-\Delta_{i};$$ for unit $i$ in treatment sequence (0,1), $$w_{i}=(Y_{i1}-m_{0})-(Y_{i2}-m_{1})=(m_{1}-m_{0})-(Y_{i2}-Y_{i1})=\hat\beta_{1}-\Delta_{i}.$$ So $w_{i}=\hat\beta_{1}-\Delta_{i}$ in both groups. Moreover write $\bar{\Delta}\equiv\frac{1}{n}\sum_{i=1}^n\Delta_i$, we have $\hat\beta_{1}=\bar{\Delta}$. Therefore
$$
\widehat{\mathbb{V}}_{\mathrm{CR0}}(\hat\beta_1)
=\frac{1}{n^{2}}\sum_{i=1}^{n}\bigl(\Delta_{i}-\bar\Delta\bigr)^{2}.
$$

Let
\[
S_{1}^{2}\equiv S^{2}_{(1,0)}\left(\Delta_i(1,0)\right),
\qquad
S_{2}^{2}\equiv S^{2}_{(0,1)}\left(\Delta_i(0,1)\right).
\]
Decomposing $\widehat{\mathbb{V}}_{\mathrm{CR0}}(\hat\beta_1)$ into within- and between-group parts:
\begin{equation}
\widehat{\mathbb{V}}_{\mathrm{CR0}}\left(\hat\beta_{1}\right)
=\frac{n_{1}S_{1}^{2}+n_{2}S_{2}^{2}}{n^{2}}
 +\frac{n_{1}n_{2}}{n^{3}}
  \left\{\bar\Delta_{(1,0)}-\bar\Delta_{(0,1)}\right\}^{2}.
\label{eq:bare-pooled-cr0}
\end{equation}
The design-based variance of $\hat\mu_{co}$ can instead be written as
\begin{equation}
\widehat{\mathbb{V}}\left(\hat\mu_{co}\right)
=\frac{\hat\sigma_{co}^{2}}{4n}=\frac{1}{4}\left(\frac{S_{1}^{2}}{n_{1}}
 +\frac{S_{2}^{2}}{n_{2}}\right).
\label{eq:design-pooled-sample-variance}
\end{equation}
When $n_{1}=n_{2}=n/2$, Equations~\eqref{eq:bare-pooled-cr0}
and~\eqref{eq:design-pooled-sample-variance} imply
\begin{equation}
\widehat{\mathbb{V}}_{\mathrm{CR0}}\left(\hat\beta_{1}\right)
=\frac{\hat\sigma_{co}^{2}}{4n}
 +\frac{1}{4n}
  \left\{\bar\Delta_{(1,0)}-\bar\Delta_{(0,1)}\right\}^{2}.
\label{eq:bare-pooled-wedge}
\end{equation}
Therefore under equal allocation the treatment-only regression's clustered variance is therefore weakly conservative
relative to the design-based variance, and is exact only when the two sequence-specific treated-minus-control mean
differences ($\bar\Delta_{(1,0)}$ and $\bar\Delta_{(0,1)}$) happen to coincide. 

With unequal allocation, subtracting
Equation~\eqref{eq:design-pooled-sample-variance} from
Equation~\eqref{eq:bare-pooled-cr0} gives
\begin{align*}
&\left(\frac{n_{1}}{n^{2}}-\frac{1}{4n_{1}}\right)S_{1}^{2}
 +\left(\frac{n_{2}}{n^{2}}-\frac{1}{4n_{2}}\right)S_{2}^{2}\\
&\qquad+\frac{n_{1}n_{2}}{n^{3}}
 \left\{\bar\Delta_{(1,0)}-\bar\Delta_{(0,1)}\right\}^{2}.
\end{align*}
The coefficients in the first two terms can have opposite signs, so the total discrepancy
has no uniform sign: with unequal allocation the treatment-only regression's clustered
variance can be conservative or anti-conservative relative to the design-based variance.

\paragraph*{Treatment-only random effects model} The regression specification is the same as Equation~\eqref{eq:bare-pooled}, but the error $e_{it}$ can be decomposed into a random intercept and an idiosyncratic component:
$$
e_{it}=u_{i}+\varepsilon_{it},$$
where $u_{i}\sim(0,\sigma_{u}^{2})$ and $\varepsilon_{it}\sim(0,\sigma_{\varepsilon}^{2})$ are mutually independent  with $\sigma_u^2>0$ and $\sigma_{\varepsilon}^2>0$. The covariance matrix of $e_i=(e_{i1},e_{i2})'$ for each unit is
$$
\Sigma_e=\mathrm{Var}(e)
=\left(\begin{array}{cc}
\sigma_u^2+\sigma_{\varepsilon}^2 & \sigma_u^2\\
\sigma_u^2 & \sigma_u^2+\sigma_{\varepsilon}^2
\end{array}\right).
$$
The covariance matrix of error terms for all units is 
$$\Omega=I_{n}\otimes\Sigma_e,$$
where $I_n$ is the $n\times n$ identity matrix, and $\otimes$ is the Kronecker product.

The fact that the treatment-only random effects model produces the same coefficient estimates and standard errors as the treatment-only OLS model can be proved using a method similar to that used for the period-interacted regression in \ref{app:saturated}.

\subsection{The period-interacted regression}\label{app:saturated}

Consider the period-interacted regression,
\begin{equation}
Y_{it}=\beta_{0}+\beta_{1}D_{it}+\beta_{2}I(t=2)
 +\beta_{3}D_{it}I(t=2)+e_{it}.
\label{eq:saturated-regression-appendix}
\end{equation}
Write
\[
\beta=(\beta_0,\beta_1,\beta_2,\beta_3)^\prime.
\]

 \paragraph*{Period-interacted OLS}
Stacking the independent variables (the constant 1, the treatment indicator $D_{it}$, the period indicator $I(t=2)$, and the interaction between the treatment indicator and the period indicator $D_{it}I(t=2)$) for the $2n$ observations unit by unit (period 1 then period 2), with units in sequence (1,0) first and those in 
(0,1) second, yields the design matrix
\[
X=
\left(
\begin{array}{c}
\begin{array}{cccc}
1&1&0&0\\
1&0&1&0\\
\vdots&\vdots&\vdots&\vdots\\
1&1&0&0\\
1&0&1&0
\end{array}\\
\hline
\begin{array}{cccc}
1&0&0&0\\
1&1&1&1\\
\vdots&\vdots&\vdots&\vdots\\
1&0&0&0\\
1&1&1&1
\end{array}
\end{array}
\right).
\]
The upper block contains $n_1$ units and $2n_1$ rows, and the lower block contains $n_2$ units and $2n_2$ rows. We can obtain
\[
X^\prime X=
\begin{pmatrix}
2n&n&n&n_2\\
n&n&n_2&n_2\\
n&n_2&n&n_2\\
n_2&n_2&n_2&n_2
\end{pmatrix},
\qquad
(X^\prime X)^{-1}=
\begin{pmatrix}
\frac{1}{n_2}&-\frac{1}{n_2}&-\frac{1}{n_2}&\frac{1}{n_2}\\
-\frac{1}{n_2}&k&\frac{1}{n_2}&-k\\
-\frac{1}{n_2}&\frac{1}{n_2}&k&-k\\
\frac{1}{n_2}&-k&-k&2k
\end{pmatrix},
\]
where $k\equiv \tfrac{1}{n_1}+\tfrac{1}{n_2}$.

Define the four cell totals as
\begin{align*}
&A\equiv\sum_{i:(D_{i1},D_{i2})=(1,0)}Y_{i1},\qquad
B\equiv\sum_{i:(D_{i1},D_{i2})=(1,0)}Y_{i2},\\
&C\equiv\sum_{i:(D_{i1},D_{i2})=(0,1)}Y_{i1},\qquad
E\equiv\sum_{i:(D_{i1},D_{i2})=(0,1)}Y_{i2},
\end{align*}
and the corresponding four cell means are
\[
\bar A=\bar{Y}_{1,(1,0)},\qquad \bar B=\bar{Y}_{2,(1,0)},\qquad
\bar C=\bar{Y}_{1,(0,1)},\qquad \bar E=\bar{Y}_{2,(0,1)}.
\]
The four columns of $X$ select, respectively, all observations, treated
observations, Period-2 observations, and treated Period-2 observations. Hence
\begin{equation}
\begin{aligned}
\hat\beta
&=(X^\prime X)^{-1}X^\prime Y\\
&=(X^\prime X)^{-1}
\begin{pmatrix}
A+B+C+E\\ A+E\\ B+E\\ E
\end{pmatrix}
=
\begin{pmatrix}
\bar C\\
\bar A-\bar C\\
\bar B-\bar C\\
\bar C+\bar E-\bar A-\bar B
\end{pmatrix}
=
\begin{pmatrix}
\bar{Y}_{1,(0,1)}\\
\bar{Y}_{1,(1,0)}-\bar{Y}_{1,(0,1)}\\
\bar{Y}_{2,(1,0)}-\bar{Y}_{1,(0,1)}\\
\bar{Y}_{1,(0,1)}+\bar{Y}_{2,(0,1)}-\bar{Y}_{1,(1,0)}-\bar{Y}_{2,(1,0)}
\end{pmatrix}.
\end{aligned}
\label{eq:saturated-beta-matrix}
\end{equation}
The matrix calculation therefore yields
\[
\hat\beta_1=\hat\mu_1,\qquad
\hat\beta_1+\hat\beta_3=\hat\mu_2,\qquad
\hat\beta_3=\hat\mu_2-\hat\mu_1=\hat\delta,\qquad
\hat\beta_1+\frac12\hat\beta_3=\hat\mu_{co}.
\]

We next derive the CR0 variance estimator for $\hat\beta$ when inference is clustered by unit. It is easy to see that the fitted values and residuals from the model are
\[
\hat Y=
\begin{pmatrix}
\bar A\\ \bar B\\ \vdots\\ \bar A\\ \bar B\\
\hline
\bar C\\ \bar E\\ \vdots\\ \bar C\\ \bar E
\end{pmatrix},
\qquad
\hat e=
\begin{pmatrix}
Y_{11}-\bar A\\ Y_{12}-\bar B\\ \vdots\\
Y_{n_1,1}-\bar A\\ Y_{n_1,2}-\bar B\\
\hline
Y_{n_1+1,1}-\bar C\\ Y_{n_1+1,2}-\bar E\\ \vdots\\
Y_{n,1}-\bar C\\ Y_{n,2}-\bar E
\end{pmatrix}.
\]
Write the two within-unit residuals as
\[
a_i\equiv Y_{i1}-\bar A,\quad b_i\equiv Y_{i2}-\bar B,\quad \text{if }(D_{i1},D_{i2})=(1,0);
\]
\[
c_i\equiv Y_{i1}-\bar C,\quad d_i\equiv Y_{i2}-\bar E, \quad \text{if }(D_{i1},D_{i2})=(0,1).
\]
The CR0 variance estimator for $\hat\beta$ is
$$
\widehat{\mathbb{V}}_{\mathrm{CR0}}(\hat\beta)
=\left(X^{\prime}X\right)^{-1}\Bigl(\textstyle\sum_{i=1}^{n}X_{i}^{\prime}\hat{e}_{i}\hat{e}_{i}^{\prime}X_{i}\Bigr)\left(X^{\prime}X\right)^{-1}=\left(X^{\prime}X\right)^{-1}\Bigl(\textstyle\sum_{i=1}^{n}s_is_i'\Bigr)\left(X^{\prime}X\right)^{-1},
$$
where $X_{i}$ is the $2\times4$ block of $X$ for unit $i$, $\hat e_{i}$ is the
residual vector for unit $i$, and $s_{i}\equiv X_{i}^{\prime}\hat e_{i}$.

For units with $(D_{i1},D_{i2})=(1,0)$,
\begin{align*}
s_i&=\begin{pmatrix}
a_i+b_i\\ a_i\\ b_i\\ 0
\end{pmatrix},\\
s_{i}s^{\prime}_{i}&=\left(\begin{array}{cccc}
\left(a_{i}+b_{i}\right)^{2} & a^{2}_{i}+a_{i}b_{i} & b^{2}_{i}+a_{i}b_{i} & 0\\
a^{2}_{i}+a_{i}b_{i} & a^{2}_{i} & a_{i}b_{i} & 0\\
b^{2}_{i}+a_{i}b_{i} & a_{i}b_{i} & b^{2}_{i} & 0\\
0 & 0 & 0 & 0
\end{array}\right).
\end{align*}
For units with $(D_{i1},D_{i2})=(0,1)$,
\begin{align*}
s_i&=
\begin{pmatrix}
c_i+d_i\\ d_i\\ d_i\\ d_i
\end{pmatrix},\\
s_{i}s^{\prime}_{i}&=\left(\begin{array}{cccc}
\left(c_{i}+d_{i}\right)^{2} & d^{2}_{i}+c_{i}d_{i} & d^{2}_{i}+c_{i}d_{i} & d^{2}_{i}+c_{i}d_{i}\\
d^{2}_{i}+c_{i}d_{i} & d^{2}_{i} & d^{2}_{i} & d^{2}_{i}\\
d^{2}_{i}+c_{i}d_{i} & d^{2}_{i} & d^{2}_{i} & d^{2}_{i}\\
d^{2}_{i}+c_{i}d_{i} & d^{2}_{i} & d^{2}_{i} & d^{2}_{i}
\end{array}\right).
\end{align*}

Define
\[
\begin{gathered}
F\equiv\sum_{i:(D_{i1},D_{i2}=(1,0)}a_i^2,\quad
G\equiv\sum_{i:(D_{i1},D_{i2}=(1,0)}b_i^2,\quad
H\equiv\sum_{i:(D_{i1},D_{i2}=(1,0)}a_ib_i,\\
U\equiv\sum_{i:(D_{i1},D_{i2}=(0,1)}c_i^2,\quad
V\equiv\sum_{i:(D_{i1},D_{i2}=(0,1)}d_i^2,\quad
W\equiv\sum_{i:(D_{i1},D_{i2}=(0,1)}c_id_i.
\end{gathered}
\]
We have
\[
\sum_{i=1}^n s_is_i'=
\begin{pmatrix}
F+G+2H+U+V+2W&F+H+V+W&G+H+V+W&V+W\\
F+H+V+W&F+V&H+V&V\\
G+H+V+W&H+V&G+V&V\\
V+W&V&V&V
\end{pmatrix}.
\]

We first extract the variance for $\hat\beta_3=\hat\delta$. The fourth row of $(X^\prime X)^{-1}$ is
\[
v^\prime=\left(\frac1{n_2},-k,-k,2k\right),
\]
so
$$
\widehat{\mathbb{V}}_{\mathrm{CR0}}(\hat\beta_3)
=v^{\prime}\Bigl(\textstyle\sum_{i=1}^n s_{i}s_{i}^{\prime}\Bigr)v=\sum_{i=1}^n(v^\prime s_i)^2.
$$
For the two sequences,
\begin{align*}
v^\prime s_i
&=-\frac{a_i+b_i}{n_1}, \quad \text{if }(D_{i1},D_{i2})=(1,0);\\
v^\prime s_i
&=\frac{c_i+d_i}{n_2},  \quad \text{if }(D_{i1},D_{i2})=(0,1).
\end{align*}
It follows that
\begin{align}
\widehat{\mathbb V}_{\mathrm{CR0}}(\hat\beta_3)
&=\frac{F+G+2H}{n_1^2}+\frac{U+V+2W}{n_2^2}\nonumber\\
&=\frac1{n_1}S^2_{(1,0)}\!\left(Q_i(1,0)\right)
 +\frac1{n_2}S^2_{(0,1)}\!\left(Q_i(0,1)\right),
\label{eq:saturated-delta-cr0}
\end{align}
where
\[
Q_i(d_1,d_2)\equiv Y_{i1}(d_1,d_2)+Y_{i2}(d_1,d_2).
\]
This exactly equals $\hat\sigma_\delta^2/n$, the design-based variance of $\hat\delta$.

In order to calculate the variance for $\hat\mu_{co}=\hat\beta_1+\frac{1}{2}\hat{\beta}_3$, we define
\[
\ell_{co}\equiv
\begin{pmatrix}0\\1\\0\\1/2\end{pmatrix},
\qquad
q^\prime\equiv\ell_{co}^\prime(X^\prime X)^{-1}
=\left(
-\frac1{2n_2},\frac{k}{2},
\frac1{n_2}-\frac{k}{2},0
\right).
\]
Then
\[
\widehat{\mathbb V}_{\mathrm{CR0}}
\left(\hat\mu_{co}\right)=q^{\prime}\Bigl(\textstyle\sum_{i=1}^n s_{i}s_{i}^{\prime}\Bigr)q
=\sum_{i=1}^n(q^\prime s_i)^2,
\]
with
\begin{align*}
q^\prime s_i
&=\frac{a_i-b_i}{2n_1}
=\frac{\Delta_i(1,0)-\bar\Delta_{(1,0)}}{2n_1},
\quad \text{if }(D_{i1},D_{i2})=(1,0),\\
q^\prime s_i
&=\frac{d_i-c_i}{2n_2}
=\frac{\Delta_i(0,1)-\bar\Delta_{(0,1)}}{2n_2},
\quad \text{if }(D_{i1},D_{i2})=(0,1).
\end{align*}
Consequently,
\begin{equation}
\begin{aligned}
\widehat{\mathbb V}_{\mathrm{CR0}}
\left(\hat\mu_{co}\right)
&=\frac1{4n_1^2}\sum_{i:(1,0)}
\left\{\Delta_i(1,0)-\bar\Delta_{(1,0)}\right\}^2\\
&\quad+\frac1{4n_2^2}\sum_{i:(0,1)}
\left\{\Delta_i(0,1)-\bar\Delta_{(0,1)}\right\}^2\\
&=\frac14\left(\frac{S_1^2}{n_1}+\frac{S_2^2}{n_2}\right).
\end{aligned}
\label{eq:saturated-pooled-cr0}
\end{equation}
This exactly equals $\hat\sigma_{co}^2/(4n)$, the design-based variance of $\hat\mu_{co}$.

\paragraph*{Period-interacted random effects model} 
The same equivalence holds if Equation~\eqref{eq:saturated-regression-appendix} is estimated as a random-effects model, where the error $e_{it}$ can be decomposed into a random intercept and an idiosyncratic component:
$$
e_{it}=u_{i}+\varepsilon_{it},$$
where $u_{i}\sim(0,\sigma_{u}^{2})$ and $\varepsilon_{it}\sim(0,\sigma_{\varepsilon}^{2})$ are mutually independent with $\sigma_u^2>0$ and $\sigma_{\varepsilon}^2>0$.

To see why, write the covariance matrix of $e_i=(e_{i1},e_{i2})'$ for each unit as
\[
\Sigma_e=\sigma_{e}^{2}I_{2}
 +\sigma_{u}^{2}\boldsymbol{1}_{2}\boldsymbol{1}_{2}^{\prime},
\]
where $I_d$ is an $d\times d$ identity matrix, and $\boldsymbol{1}_d$ is a vector of $d$ ones. The covariance matrix of error terms for all units is 
$$\Omega=I_{n}\otimes\Sigma_e,$$
where $I_n$ is the $n\times n$ identity matrix, and $\otimes$ is the Kronecker product. The random effects estimator of the coefficients is given by the generalized least squares formula
$$
\hat\beta^{\mathrm{RE}}=\left(X^{\prime}\Omega^{-1}X\right)^{-1}X^{\prime}\Omega^{-1}Y,
$$
where $Y$ is the vector of observed outcomes arranged in the same order as the rows of 
$X$. The CR0 variance estimator of $\hat\beta^{\mathrm{RE}}$ is
$$\widehat{\mathbb V}_{CR0}^{\mathrm{RE}}(\hat\beta^{\mathrm{RE}})
=\left( X^{\prime}\Omega^{-1}X\right)^{-1}\Bigl(\sum_{i=1}^n X_{i}^{\prime}\Sigma_{e}^{-1}\hat e_{i}
 \hat e_{i}^{\prime}\Sigma_{e}^{-1}X_{i}\Bigr)\left(X^{\prime}\Omega^{-1}X\right)^{-1}.
$$

Let $$\bar{J}=I_n\otimes\left(\begin{array}{cc}\frac{1}{2}&\frac{1}{2}\\\frac{1}{2}&\frac{1}{2}\end{array}\right).$$ It is easy to verify that 
\[\bar J'=J,\quad \bar J^2=\bar J,\quad \bar JX=XK,\]
where 
$$K=\begin{pmatrix}
1&\tfrac12&\tfrac12&\tfrac12\\
0&0&0&-\tfrac12\\
0&0&0&-\tfrac12\\
0&0&0&1
\end{pmatrix}.$$
It is easy to verify that $K^2=K$.

Note that $\Omega$ can be written as 
\[
\Omega=\sigma_e^2I_{2n}+2\sigma_u^2\bar J=\sigma_e^2 (I_{2n}-\bar J)+\sigma_t^2\bar J,\quad \sigma_t^2=\sigma_e^2+2\sigma_u^2.
\]
The second equality for $\Omega$ gives the spectral decomposition of $\Omega$, with $(I_{2n}-\bar J)$ and $\bar J$ being complementary idempotent projection matrices. Therefore we can write
\[
\Omega^{-1}=\frac{1}{\sigma_e^2} (I_{2n}-\bar J)+\frac{1}{\sigma_t^2}\bar J.
\]

Let 
$M\equiv\sigma_{e}^{2}(I_{4}-K)+\sigma_t^{2}K$. Then
\[
\Omega X = \sigma_{\varepsilon}^2X+2\sigma_u^2\bar JX=\sigma_{\varepsilon}^2X + 2\sigma_u^2XK=X\left(\sigma_{e}^{2}I_{4}+2\sigma_u^{2}K\right)=XM.
\]
Because $I_4-K$ and $K$ are complementary idempotent matrices, $M$ is invertible. Multiplying both sides of $\Omega X = XM$ on the left by $\Omega^{-1}$ and on the right by $M^{-1}$, we can get 
$$\Omega^{-1}\Omega X M^{-1}=\Omega^{-1}XM M^{-1},$$
or $XM^{-1}=\Omega^{-1}X$.

Because $\Omega$ is symmetric, $X^{\prime}\Omega X$ is symmetric. Plugging in $\Omega X=XM$, we obtain that $X^{\prime}\Omega X=X^{\prime}XM$ is symmetric, so
$X^{\prime}XM=M^{\prime}X^{\prime}X$. Multiplying both sides on the left by $(X'X)^{-1}$ and on the right by $(X'X)^{-1}$ gives
$M(X^{\prime}X)^{-1}=(X^{\prime}X)^{-1}M^{\prime}$. Applying this identity and substituting $\Omega^{-1}X=XM^{-1}$ and its transposed version, 
$X'\Omega^{-1}=(XM^{-1})^{\prime}$, into the generalized least squares coefficient map,
\begin{align*}
&\left(X^{\prime}\Omega^{-1}X\right)^{-1}X^{\prime}\Omega^{-1}
=\left(X'XM^{-1}\right)^{-1}(XM^{-1})'
=M\left(X^{\prime}X\right)^{-1}(M^{-1})^{\prime}X^{\prime}\\
&=\left(X^{\prime}X\right)^{-1}M^{\prime}(M^{-1})^{\prime}X^{\prime}
=\left(X^{\prime}X\right)^{-1}X^{\prime}.
\end{align*}
The generalized least squares coefficient map, $\left(X^{\prime}\Omega^{-1}X\right)^{-1}X^{\prime}\Omega^{-1}$, is therefore identical to the OLS coefficient map, $\left(X^{\prime}X\right)^{-1}X^{\prime}$. Consequently, the coefficient estimates from the random effects model are identical to those from the OLS model.

The equality of the coefficient maps can also be used to show that the CR0 variance estimators from the two models are equal.
Let $L_i$ be the $4\times 2$ block of columns of $(X'X)^{-1}X'$ associated with unit $i$, so that $L_i=(X'X)^{-1}X_i'$. Then
\[
\hat {\mathbb V}_{CR0}^{\textrm{OLS}}(\hat\beta)
=\bigl(X^{\prime}X\bigr)^{-1}
 \Bigl(\sum_{i}X_{i}^{\prime}\hat e_{i}\hat e_{i}^{\prime}X_{i}\Bigr)
 \bigl(X^{\prime}X\bigr)^{-1}
=\sum_{i}L_{i}\hat e_{i}\hat e_{i}^{\prime}L_{i}^{\prime}
\]
The CR0 variance estimator from the random effects model is
\[
\hat{\mathbb V}_{CR}^{\textrm{RE}}(\hat\beta)
=(X'\Omega^{-1}X)^{-1}\Bigl(\sum_{i}X_{i}^{\prime}\Sigma_{e}^{-1}\hat e_{i}
 \hat e_{i}^{\prime}\Sigma_{e}^{-1}X_{i}\Bigr)(X'\Omega^{-1}X)^{-1},
\]
which likewise factors as $\sum_{i}L_{i}^{\textrm{RE}}\hat e_{i}\hat e_{i}^{\prime}
(L_{i}^{\textrm{RE}})^{\prime}$ with
$L_{i}^{\textrm{RE}}=(X'\Omega^{-1}X)^{-1}X_{i}^{\prime}\Sigma_{e}^{-1}$. 
$L_{i}^{\textrm{RE}}$ is exactly the block of columns of $(X'\Omega^{-1}X)^{-1}X'\Omega^{-1}$ associated with unit $i$. The equality of the coefficient maps thus gives $L_{i}^{\textrm{RE}}=L_{i}$ for every $i$.

Two remarks serve to clarify this result. First, CR0 is the
unadjusted sandwich corresponding to the sample-variance convention used in
the main text. Software-specific CR1 or CR2 corrections can produce small
finite-sample differences unless the same correction is applied to both fits.
Second, the equivalence relies on complete two-period pairs and on the stated
column space. Attrition, time-varying covariates, random slopes, or other
changes that destroy its invariance can make the random-effects and OLS point
estimates differ. 

The same arguments can be used to show that the treatment-only random effects model yields identical coefficient and variance estimates to the treatment-only OLS model. It suffices to replace $K$ and $M$ with
\[
K_{0}=\begin{pmatrix}1&\tfrac12\\[2pt]0&0\end{pmatrix},\quad M_{0}=\sigma_{\varepsilon}^{2}(I_{2}-K_{0})
+\sigma_{t}^{2}K_{0}.
\]

\subsection{Power comparison}\label{app:power}

\begin{prop}[The carryover test is weakly less powerful]\label{prop:power-comparison}
Consider level-$\alpha$ two-sided tests of $H_{0}:\delta=0$ based on
$\hat{\delta}$ and of $H_{0}:\tau=0$ based on $\hat{\mu}_{1}$. Suppose
$|\delta|\le|\tau|$ and, within both treatment sequences,
\begin{align*}
\mathbb{V}\left(Y_{i1}(1,0)\right)
&\le\mathbb{V}\left(Y_{i1}(1,0)+Y_{i2}(1,0)\right),\\
\mathbb{V}\left(Y_{i1}(0,1)\right)
&\le\mathbb{V}\left(Y_{i1}(0,1)+Y_{i2}(0,1)\right).
\end{align*}
Then the asymptotic power of the carryover test is no greater than that of
the treatment-effect test.
\end{prop}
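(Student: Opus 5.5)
The plan is to reduce the comparison to the noncentrality parameters of the two asymptotic tests, and then order those parameters using the two stated conditions.

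\emph{Step 1: asymptotic power functions.} By Proposition~\ref{prop:marginal-asymptotic} and Proposition~\ref{prop:delta-asymptotic}, together with consistency of $\hat\sigma_1$ and $\hat\sigma_\delta$ and Slutsky's lemma, the studentized statistics are asymptotically normal. I will consider local alternatives $\delta=h_\delta/\sqrt n$ and $\tau=h_1/\sqrt n$, or read the approximations in the main text directly. Then the powers are $\psi(\lambda)=1-\Phi(z_{1-\alpha/2}-\lambda)+\Phi(-z_{1-\alpha/2}-\lambda)$, with $\lambda_\delta=\sqrt n\,\delta/\sigma_\delta$ and $\lambda_1=\sqrt n\,\tau/\sigma_1$. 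Here $\mu_1=\tau$ by Proposition~\ref{prop:period1}.

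\emph{Step 2: monotonicity of $\psi$ in $|\lambda|$.} The function $\psi$ is even in $\lambda$. For $\lambda\ge0$, its derivative is
\[
\psi'(\lambda)=\phi(z_{1-\alpha/2}-\lambda)-\phi(z_{1-\alpha/2}+\lambda).
\]
This is nonnegative, and strictly positive when $\lambda>0$, because $|z_{1-\alpha/2}-\lambda|\le z_{1-\alpha/2}+\lambda$ and $\phi$ decreases in $|x|$. So $\psi$ is nondecreasing in $|\lambda|$. It therefore suffices to show $|\delta|/\sigma_\delta\le|\tau|/\sigma_1$. For fixed nonzero alternatives both powers tend to one, and the claim "no greater" holds trivially in the limit; the substantive content is the local comparison.

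\emph{Step 3: $\sigma_1\le\sigma_\delta$.} Multiply the first variance hypothesis by $n/n_1$ and the second by $n/n_2$, then add. The left sides sum to $\sigma_1^2$ by definition. The right sides sum to $\sigma_\delta^2$ by the definition in Proposition~\ref{prop:delta-asymptotic}. Hence $\sigma_1^2\le\sigma_\delta^2$, and combining this with $|\delta|\le|\tau|$ gives
\[
\frac{|\delta|}{\sigma_\delta}\le\frac{|\tau|}{\sigma_\delta}\le\frac{|\tau|}{\sigma_1}.
\]

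I expect the only delicate step to be the asymptotic framing in Step 1. Under fixed alternatives the power comparison degenerates, since both powers converge to one. I will therefore state the result for the leading-order power approximation, or equivalently under local alternatives where both $\sqrt n\,\delta$ and $\sqrt n\,\tau$ converge. Along such a sequence, $|\delta|\le|\tau|$ must hold, which gives $|h_\delta|\le|h_1|$ in the limit. The ordering of the limiting powers then follows from Steps 2 and 3.
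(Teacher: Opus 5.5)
Your proposal is correct and follows the paper's proof essentially step for step. Both arguments use evenness and monotonicity of the two-sided normal power function via $\psi'(\lambda)=\phi(z_{1-\alpha/2}-\lambda)-\phi(z_{1-\alpha/2}+\lambda)\ge0$, obtain $\sigma_1\le\sigma_\delta$ by weighting the two variance inequalities by $n/n_1$ and $n/n_2$ and summing, and combine this with $|\delta|\le|\tau|$. Your explicit local-alternative framing, including the observation that both powers tend to one under fixed alternatives, usefully makes precise what the paper leaves implicit by working with the leading-order power approximation.
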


\begin{proof}
Let $z=z_{1-\alpha/2}$. For a standardized normal statistic with
noncentrality $\lambda$, the two-sided rejection probability equals
\[
\Psi(\lambda)=1-\Phi(z-\lambda)+\Phi(-z-\lambda).
\]
Because the standard normal distribution is symmetric around zero, $\Psi(-\lambda)=\Psi(\lambda)$. Therefore, the two-sided rejection probability only dependens on $a=|\lambda|$, and equals
\[
\Psi(a)=1-\Phi(z-a)+\Phi(-z-a),\quad a\geq 0.
\]
It is nondecreasing in $a$ because
\[
\Psi'(a)=\phi(z-a)-\phi(z+a)\ge0.
\]
The two tests have noncentrality magnitudes
\[
|\lambda_{\delta}|=\frac{\sqrt{n}|\delta|}{\sigma_{\delta}},
\qquad
|\lambda_{1}|=\frac{\sqrt{n}|\tau|}{\sigma_{1}}.
\]
The two variance inequalities, multiplied by the positive weights
$n/n_{1}$ and $n/n_{2}$, imply $\sigma_{\delta}\ge\sigma_{1}$. Together with
$|\delta|\le|\tau|$, this implies
$|\lambda_{\delta}|\le|\lambda_{1}|$, so monotonicity of $\Psi$ proves the
power ordering.
\end{proof}

The proof only requires the aggregate ordering
$\sigma_{\delta}\ge\sigma_{1}$. By
Equation~\ref{eq:sigma-delta-identity}, $\sigma_{12}\ge0$ is one sufficient
condition for that ordering. Nonnegative cross-period covariance within each
sequence is, in turn, sufficient for the two variance inequalities stated in the proposition.

\subsection{Formal coverage results for the two-step procedure}\label{app:grizzle}

\paragraph*{The two-step procedure.} 
To test $H_0:\delta=0$, we reject the null 
hypothesis when $|\hat\delta|>z\sqrt{\hat\sigma_{\delta}^2/n},$
where $z\equiv z_{1-\alpha/2}$.

The point estimator is:
$$
\hat\theta=\begin{cases}
\hat\mu_{co} & \text{if } |\hat\delta|\leq z\sqrt{\hat\sigma_{\delta}^2/n},\\[4pt]
\hat\mu_1 & \text{if } |\hat\delta|>z\sqrt{\hat\sigma_{\delta}^2/n}.
\end{cases}
$$
The reported confidence interval is 
$$
CI=\begin{cases}
\hat\mu_{co}\pm z\sqrt{\hat \sigma_{co}^2/(4n)} & \text{if } |\hat\delta|\leq z\sqrt{\hat\sigma_{\delta}^2/n},\\[4pt]
\hat\mu_1\pm z\sqrt{\hat \sigma_1/n} & \text{if } |\hat\delta|>z\sqrt{\hat\sigma_{\delta}^2/n}.
\end{cases}
$$

\paragraph*{Reduction to known variances.} The estimators
$\hat{\sigma}_{1}^{2},\hat{\sigma}_{2}^{2},\hat{\sigma}_{12}$ based on the
sample moments are consistent under
Assumption~\ref{as:regularity}, and therefore
$\hat{\sigma}_{\delta}^{2}$ and $\hat{\sigma}_{co}^{2}$ are also consistent by
Equations~\eqref{eq:sigma-delta-identity}--\eqref{eq:variance-identities}. Both the selection event (whether $|\hat\delta|>z\sqrt{\hat{\sigma}_{\delta}^2/n}$) and the reported endpoints for CI are continuous at the true values of the variances. Therefore, asymptotically, the two-step procedure is equivalent to the corresponding procedure with the variance estimators replaced by the true variances.

We can therefore study the asymptotically equivalent procedure defined by
\[
\hat{\theta}^{*}=\begin{cases}
\hat{\mu}_{co} & \text{if }|\hat{\delta}|\le z_{\alpha_{1}}\sqrt{\sigma_{\delta}^{2}/n},\\[4pt]
\hat{\mu}_{1} & \text{if }|\hat{\delta}|>z_{\alpha_{1}}\sqrt{\sigma_{\delta}^{2}/n},
\end{cases}
\qquad
CI^{*}=\begin{cases}
\hat{\mu}_{co}\pm z_{\alpha}\sqrt{\sigma_{co}^{2}/(4n)} & \text{if }|\hat{\delta}|\le z_{\alpha_{1}}\sqrt{\sigma_{\delta}^{2}/n},\\[4pt]
\hat{\mu}_{1}\pm z_{\alpha}\sqrt{\sigma_{1}^{2}/n} & \text{if }|\hat{\delta}|>z_{\alpha_{1}}\sqrt{\sigma_{\delta}^{2}/n}.
\end{cases}
\]

\paragraph*{Standardization.} Define
\[
Z_{\delta}=\frac{\sqrt{n}\hat{\delta}}{\sigma_{\delta}}-\eta,
\qquad
Z_{1}=\frac{\sqrt{n}(\hat{\mu}_{1}-\mu_{1})}{\sigma_{1}},
\qquad
Z_{co}=\frac{2\sqrt{n}(\hat{\mu}_{co}-\mu_{1})}{\sigma_{co}}-b_{co},
\]
where
\[
\eta=\frac{\sqrt{n}\,\delta}{\sigma_{\delta}},
\qquad
b_{co}=\frac{\sqrt{n}\,\delta}{\sigma_{co}}.
\] By Proposition~\ref{prop:joint-asymptotic},
$(Z_{\delta},Z_{1})\overset{d}{\to}\mathcal{N}(0,R_1)$ with
\[
R_1=\begin{pmatrix}1 & \rho_{\delta,1}\\ \rho_{\delta,1} & 1\end{pmatrix},
\quad \text{where }
\rho_{\delta,1}=-\frac{\sigma_{1}^{2}+\sigma_{12}}{\sigma_{1}\sigma_{\delta}};
\]
$(Z_{\delta},Z_{co})\overset{d}{\to}\mathcal{N}(0,R_2)$ with
\[
R_2=\begin{pmatrix}1 & \rho_{\delta,co}\\ \rho_{\delta,co} & 1\end{pmatrix},
\quad\text{where }
\rho_{\delta,co}=\frac{\sigma_{2}^{2}-\sigma_{1}^{2}}{\sigma_{co}\sigma_{\delta}}.
\]
Asymptotically, the carryover test fails to reject when
$|Z_{\delta}+\eta|\le z_{\alpha_{1}}$; the pooled interval covers $\mu_{1}$
when $|Z_{co}+b_{co}|\le z_{\alpha}$; the post-only interval covers
$\mu_{1}$ when $|Z_{1}|\le z_{\alpha}$.

\paragraph*{The asymptotic coverage probability.} Asymptotically, $CI^{*}$ covers $\mu_{1}$
when one of these two disjoint events occurs: the test fails to reject
and the pooled interval covers $\mu_1$, or the test rejects and the post-only interval
covers $\mu_1$. Therefore, the asymptotic coverage probability is 
\[
\text{ACP}=\Pr\!\left(|Z_{\delta}+\eta|\le z,\ |Z_{co}+b_{co}|\le z\right)
+\Pr\!\left(|Z_{\delta}+\eta|>z,\ |Z_{1}|\le z\right).
\]
Since $\Pr(|Z_{1}|\le z)=1-\alpha$, we can also write
\begin{equation}
\text{ACP}=\left(1-\alpha\right)+T_{\text{pool}}-T_{1},
\label{eq:grizzle-coverage}
\end{equation}
\begin{equation}
T_{\text{pool}}=\Pr\!\left(|Z_{\delta}+\eta|\le z,\ |Z_{co}+b_{co}|\le z\right),
\qquad
T_{1}=\Pr\!\left(|Z_{\delta}+\eta|\le z,\ |Z_{1}|\le z\right).
\label{eq:coverage-terms}
\end{equation}

\begin{lem}\label{lem:rectangle}
Define $R(\rho)\equiv\Pr\!\big(|X|\le z,|Y|\le z\big)$
for $(X,Y)$ standard bivariate normal with correlation $\rho$. Then $R$ is even in $\rho$ and strictly increasing in $|\rho|$, with
$R(0)=(1-\alpha)^2$.
\end{lem}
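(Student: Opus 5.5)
The plan is to treat the three claims in order of difficulty: the value at zero, evenness, and strict monotonicity in $|\rho|$. Throughout, $z=z_{1-\alpha/2}>0$, and $(X,Y)$ is standard bivariate normal with correlation $\rho\in(-1,1)$.

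\emph{Value at zero.} When $\rho=0$, $X$ and $Y$ are independent. Hence $R(0)=\Pr(|X|\le z)^2=(1-\alpha)^2$.

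\emph{Evenness.} If $(X,Y)$ has correlation $\rho$, then $(X,-Y)$ is standard bivariate normal with correlation $-\rho$. The event $\{|X|\le z,|Y|\le z\}$ is unchanged when $Y$ is replaced by $-Y$. Therefore $R(-\rho)=R(\rho)$, and it suffices to show that $R$ is strictly increasing on $[0,1)$.

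\emph{Strict monotonicity.} I would use Plackett's identity for the bivariate normal density $\phi_\rho(x,y)$:
\[
\frac{\partial \phi_\rho}{\partial\rho}=\frac{\partial^2\phi_\rho}{\partial x\,\partial y}.
\]
Differentiating $R(\rho)=\int_{-z}^{z}\int_{-z}^{z}\phi_\rho(x,y)\,dx\,dy$ under the integral sign (the domain is compact and the integrand is smooth for $|\rho|<1$) gives
\[
R'(\rho)=\sum_{s,t\in\{-1,1\}} s\,t\,\phi_\rho(sz,tz)
=2\phi_\rho(z,z)-2\phi_\rho(z,-z).
\]
The second equality uses the symmetry $\phi_\rho(x,y)=\phi_\rho(-x,-y)$. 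Writing out the density,
\[
\phi_\rho(z,\pm z)\propto\exp\!\left\{-\frac{2z^2\mp2\rho z^2}{2(1-\rho^2)}\right\},
\]
so that
\[
\frac{\phi_\rho(z,z)}{\phi_\rho(z,-z)}=\exp\!\left\{\frac{2\rho z^2}{1-\rho^2}\right\}.
\]
This ratio is strictly greater than $1$ for $\rho>0$. Hence $R'(\rho)>0$ on $(0,1)$, which together with continuity at $0$ gives strict increase on $[0,1)$. Evenness then yields strict increase in $|\rho|$.

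\emph{Main obstacle.} The only delicate step is justifying Plackett's identity and the differentiation under the integral sign. The identity can be checked directly: compute $\partial_\rho\log\phi_\rho$ and $\partial_x\partial_y\phi_\rho/\phi_\rho$, and compare the two polynomial expressions in $x$, $y$, $\rho$.

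If that computation proves messy, I would fall back on a representation argument. Write $X=\sqrt{\rho}\,W+\sqrt{1-\rho}\,U$ and $Y=\sqrt{\rho}\,W+\sqrt{1-\rho}\,V$ with $W,U,V$ i.i.d. standard normal, for $\rho\ge0$. Conditioning on $W$, one gets $R(\rho)=\mathbb E[g(W)^2]$ for a suitable function $g$. Monotonicity would then follow from a variance-decomposition (Anderson-type) argument. However, the Plackett route above is cleaner and gives strictness immediately.
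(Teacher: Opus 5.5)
Your proof is correct and follows essentially the same route as the paper: independence at $\rho=0$, a symmetry argument for evenness, and Plackett's identity giving $R'(\rho)=2[\phi_\rho(z,z)-\phi_\rho(z,-z)]$, which is positive for $\rho>0$ since $\phi_\rho(z,z)/\phi_\rho(z,-z)=\exp\{2\rho z^2/(1-\rho^2)\}>1$. The differences are cosmetic. You integrate the density form $\partial_\rho\phi_\rho=\partial_x\partial_y\phi_\rho$ over the square, whereas the paper applies the CDF form $\partial_\rho\Phi_2=\phi_2$ to an inclusion--exclusion expression for $R$, and your explicit reflection $Y\mapsto -Y$ justifies evenness slightly more cleanly than the paper's appeal to symmetry.
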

\begin{proof}
Let $\Phi_2(a,b;\rho)$ denote the 
cumulative distribution function of a standard bivariate normal with correlation $\rho$. 
We can write 
$$R(\rho)=\Phi_2(z,z;\rho)-\Phi_2(-z,z;\rho)
-\Phi_2(z,-z;\rho)+\Phi_2(-z,-z;\rho).$$
Therefore $R(-\rho)=R(\rho)$ due to the symmetry of the bivariate normal distribution. When $\rho=0$, $X$ and $Y$ are independent, so $R(0)=\Pr(|X|\le z)\cdot \Pr(|Y|\le z)=(1-\alpha)^2$.

By Plackett's formula $\partial_\rho\Phi_2(a,b;\rho)=\phi_2(a,b;\rho)$, 
where $\phi_2(a,b;\rho)$ is the probability density function of a standard 
bivariate normal with correlation $\rho$:
\[
\phi_2(a,b;\rho) = 
\frac{1}{2\pi\sqrt{1-\rho^2}}
\exp\left(
-\frac{a^2 - 2\rho ab + b^2}{2(1-\rho^2)}
\right).
\]
Apparently $\phi_2(-a,-b;\rho)=\phi_2(a,b;\rho)$, therefore we have
$$R'(\rho)=2\big[\phi_2(z,z;\rho)-\phi_2(z,-z;\rho)\big].$$
Since
$$\phi_2(z,z;\rho)=\frac{1}{2\pi\sqrt{1-\rho^2}}\exp\!\left(-\frac{z^2}{1+\rho}\right),\qquad
\phi_2(z,-z;\rho)=\frac{1}{2\pi\sqrt{1-\rho^2}}\exp\!\left(-\frac{z^2}{1-\rho}\right)$$
we have $R'(\rho)>0$ for $\rho>0$. By symmetry, $R'(\rho)<0$ for $\rho<0$, and 
$R$ is strictly increasing in $|\rho|$.
\end{proof}

\paragraph*{A reduced parameterization.} We normalize $\sigma_1^2=1$ and write
\begin{equation}
r\equiv\frac{\sigma_2^2}{\sigma_1^2},\qquad g\equiv\frac{\sigma_{12}}{\sigma_1^2},
\label{eq:omega-zeta}
\end{equation}
so that $\sigma_\delta^2=1+r+2g$, $\sigma_{co}^2=1+r-2g$,
$\rho_{\delta,1}=-\dfrac{1+g}{\sqrt{1+r+2g}}$, and
$\rho_{\delta,co}=\dfrac{r-1}{\sqrt{(1+r+2g)(1+r-2g)}}$.

The asymptotic covariance matrix of $(\sqrt{n}\hat\mu_1,\sqrt{n}\hat\mu_2)$ has
diagonal terms $(\sigma_1^2,\sigma_2^2)$ and off-diagonal term $-\sigma_{12}$,
so positive semidefiniteness requires $\sigma_{12}^2\le\sigma_1^2\sigma_2^2$, i.e.
\begin{equation}
g^2\le r,
\label{eq:feasible}
\end{equation}
with equality only in the degenerate case $|\mathrm{Corr}(\hat\mu_1,\hat\mu_2)|=1$.

The pooled estimator has smaller variance than the first half effect estimator when
$$\tfrac14\sigma_{co}^2<\sigma_1^2\iff\sigma_{12}>\tfrac12\!\left(\sigma_2^2-3\sigma_1^2\right)\iff g>\tfrac{r-3}{2},$$

\begin{prop}[With a zero carryover gap, undercoverage holds exactly when pooling is efficient]\label{prop:undercoverage}
Suppose $\delta=0$ and the non-degenerate case with $g^{2}<r$. Then, $$ACP<1-\alpha\iff\tfrac14\sigma_{co}^2<\sigma_1^2,\qquad ACP=1-\alpha\iff\tfrac14\sigma_{co}^2=\sigma_1^2,\qquad ACP>1-\alpha\iff\tfrac14\sigma_{co}^2>\sigma_1^2.$$
\end{prop}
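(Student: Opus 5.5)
At $\delta=0$ we have $\eta=b_{co}=0$. Equation~\eqref{eq:grizzle-coverage} then reduces the claim to comparing
\[
T_{\text{pool}}=\Pr(|Z_\delta|\le z,|Z_{co}|\le z)
\quad\text{with}\quad
T_1=\Pr(|Z_\delta|\le z,|Z_1|\le z).
\]
Here $(Z_\delta,Z_{co})$ and $(Z_\delta,Z_1)$ are standard bivariate normal with correlations $\rho_{\delta,co}$ and $\rho_{\delta,1}$. In the notation of Lemma~\ref{lem:rectangle}, $T_{\text{pool}}=R(\rho_{\delta,co})$ and $T_1=R(\rho_{\delta,1})$. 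Under $g^2<r$ both correlations lie strictly inside $(-1,1)$, so the lemma applies.

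The lemma says $R$ is even and strictly increasing in $|\rho|$. Hence
\[
\text{ACP}<1-\alpha \iff |\rho_{\delta,co}|<|\rho_{\delta,1}| \iff \rho_{\delta,co}^2<\rho_{\delta,1}^2,
\]
and the same holds with equality and with reversed inequality. The proof therefore reduces to showing that $\rho_{\delta,1}^2-\rho_{\delta,co}^2$ has the same sign as $4-\sigma_{co}^2=3-r+2g$ (with $\sigma_1^2=1$).

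To establish this, I use the reduced parameterization. Multiply both squared correlations by the positive factor $(1+r+2g)(1+r-2g)$, which equals $\sigma_\delta^2\sigma_{co}^2$. The sign question becomes the sign of
\[
(1+g)^2(1+r-2g)-(r-1)^2.
\]
I expand this and try to factor out $(3-r+2g)$. Testing at $r=3+2g$: the first term is $(1+g)^2(4)$ and the second is $(2+2g)^2=4(1+g)^2$, so the difference vanishes. The expression, viewed as a polynomial in $r$ (it is quadratic in $r$), therefore has $r-(3+2g)$ as a factor. Carrying out the division gives
\[
(1+g)^2(1+r-2g)-(r-1)^2=(3+2g-r)\bigl(r-g^2\bigr).
\]
This can be checked by expanding both sides: each equals $-r^2+r(3+2g+g^2)-g^2(3+2g)$.

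Since $r-g^2>0$ by the non-degeneracy assumption, the sign of $\rho_{\delta,1}^2-\rho_{\delta,co}^2$ equals the sign of $3+2g-r$. That sign is positive exactly when $g>(r-3)/2$, that is, exactly when $\tfrac14\sigma_{co}^2<\sigma_1^2$. This gives all three equivalences simultaneously.

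The main obstacle is this algebraic factorization. A secondary point to verify is that $1+r-2g>0$ and $1+r+2g>0$ under $g^2<r$, so that the correlations are well defined. This follows from $|g|<\sqrt r\le(1+r)/2$.
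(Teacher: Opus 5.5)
Your proof is correct and follows the paper's argument essentially step for step: you use Lemma~\ref{lem:rectangle} to reduce the comparison to $\rho_{\delta,co}^2$ versus $\rho_{\delta,1}^2$, clear the common positive denominator, and factor the numerator as $(3+2g-r)(r-g^2)$, which is the negative of the paper's $(2g+3-r)(g^2-r)$. Your explicit checks that $1+r\pm 2g>0$ and that both correlations lie strictly inside $(-1,1)$ under $g^2<r$ fill in details the paper leaves implicit.
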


\begin{proof}
At $\delta=0$, $\eta=b_{co}=0$, so
$T_{\text{pool}}=R(\rho_{\delta,co})$ and $T_{1}=R(\rho_{\delta,1})$ in the
notation of Lemma~\ref{lem:rectangle}. The lemma therefore gives
\[
\mathrm{sign}\left\{\text{ACP}-(1-\alpha)\right\}
=\mathrm{sign}\left\{|\rho_{\delta,co}|-|\rho_{\delta,1}|\right\}
=\mathrm{sign}\left\{\rho_{\delta,co}^{2}-\rho_{\delta,1}^{2}\right\}.
\]
In the parameterization of Equation~\eqref{eq:omega-zeta},
$$\rho_{\delta,co}^2-\rho_{\delta,1}^2=
\frac{(r-1)^2}{(1+r+2g)(1+r-2g)}-\frac{(1+g)^2}{1+r+2g}=\frac{(r-1)^2-(1+r-2g)(1+g)^2}{(1+r+2g)(1+r-2g)}.$$
The denominator is positive, so the sign is that of the numerator, which factors:
$$\;(r-1)^2-(1+r-2g)(1+g)^2=(2g+3-r)\,(g^2-r).\;$$
Because $g^2<r$, the second factor is strictly negative, so the sign of $\rho_{\delta,co}^{2}-\rho_{\delta,1}^{2}$ is that of
$-(2g+3-r)$. Finally
$2g+3-r>0\iff g>(r-3)/2\iff\tfrac{1}{4}\sigma_{co}^{2}<\sigma_{1}^{2}$,
and likewise for the other two cases.
\end{proof}

\begin{rem}
Proposition \ref{prop:undercoverage} says that when $\delta=0$, the two-step confidence interval asymptotically undercovers if and only if the pooled estimator is strictly more efficient than the post-only estimator. This means that the two-step procedure undercovers precisely when it is worth running.
\end{rem}

\begin{rem}\label{rem:fixed-delta}
If $\delta\neq 0$, then asymptotically $\eta\to \infty$ and $b_{co}\to \infty$ when
$\delta>0$, whereas $\eta\to -\infty$ and $b_{co}\to -\infty$ when
$\delta<0$. This implies $T_{\text{pool}}\to 0$ and $T_{1}$$\to 0$, and hence $ACP\to 1-\alpha$.
\end{rem}

We report the finite-sample behavior of Equation~\eqref{eq:grizzle-coverage} in~\ref{app:undercoverage-simulation}, both under the procedure actually used by researchers (which uses
$\hat{\sigma}$'s rather than $\sigma$'s) and under non-normal outcomes.

\subsection{Derivation of the sensitivity threshold}\label{app:sensitivity-derivation}

Under the hypothesis that the average carryover gap equals $\delta^{\star}$,
the bias-adjusted estimator is
$\hat{\tau}(\delta^{\star})\equiv\hat{\mu}_{co}-\delta^{\star}/2$. By
Equation~\eqref{eq:pooled-asymptotic}, a two-sided test of $H_{0}:\tau=0$ using
$\hat\tau(\delta^\star)$ fails to reject when
\[
\left|\frac{\hat{\mu}_{co}-\delta^{\star}/2}{s_{co}}\right|\le z_{1-\alpha/2},
\qquad \text{where }s_{co}\equiv\widehat{\mathrm{se}}(\hat{\mu}_{co}).
\]
Writing $\delta_{\mathrm{std}}^{\star}\equiv\delta^{\star}/\hat{\mu}_{co}$ and
$t_{co}\equiv\hat{\mu}_{co}/s_{co}$, this condition becomes
\[
|t_{co}|\left|1-\frac{\delta_{\mathrm{std}}^{\star}}{2}\right|\le z_{1-\alpha/2},
\qquad\text{i.e.}\qquad
\delta_{\mathrm{std}}^{\star}\in
\left[2-\frac{2z_{1-\alpha/2}}{|t_{co}|},\;2+\frac{2z_{1-\alpha/2}}{|t_{co}|}\right].
\]
For an originally significant pooled estimate ($|t_{co}|>z_{1-\alpha/2}$), the
lower endpoint of this interval is the standardized minimum carryover gap
required to overturn significance,
$\delta_{\min,\mathrm{std}}^{*}\equiv2(1-z_{1-\alpha/2}/|t_{co}|)$, reported in
Section~\ref{sec:Sensitivity-Analysis}. 

\input{appendix_regression_simulation}
\input{appendix_undercoverage_simulation}
\input{appendix_litreview_parameters}
\FloatBarrier
\input{appendix_csp_calibration}

%% file: appendix_regression_simulation.tex
\section{Simulations} \label{app:simulations}

\subsection{Finite-sample corrections used in Table~\ref{tab:Comparison-of-Sample}}\label{app:se-corrections}

All standard errors in Table~\ref{tab:Comparison-of-Sample} are unit-clustered
sandwich estimators, except Design-based, which is the raw, unadjusted CR0
formula of \ref{app:bare-pooled} and \ref{app:saturated} with no finite-sample
correction. Writing $n=n_1+n_2$ for the number of unit clusters and
$N_{\mathrm{obs}}$ for the number of rows in the fitted regression: (0)
Paired $t$-test uses the classical paired-difference correction, CR0
$\times\sqrt{n/(n-1)}$.

(1)--(4) use Stata's default cluster-robust (CR1S)
correction, CR0 $\times\sqrt{(n/(n-1))\cdot(N_{\mathrm{obs}}-1)/(N_{\mathrm{obs}}-p)}$,
with $p=2$ for (1)--(2) and $p=4$ for (3)--(4), the number of columns in
the two design matrices of \ref{app:bare-pooled} and \ref{app:saturated}
respectively ($p$ here is unrelated to the $k\equiv1/n_1+1/n_2$ of
\ref{app:saturated}), while $N_{\mathrm{obs}}=2n$ in all four (both periods
pooled).

(5) uses the same CR1S correction with $p=2$, but
$N_{\mathrm{obs}}=n$ (period-1 only), so each cluster contains a single
observation. \ref{app:saturated} shows the period-interacted specification's
CR0 estimator equals the design-based CR0 exactly for any $n_1,n_2$, so applying its CR1S factor
reproduces the printed clustered SE to floating-point precision, i.e. the
``exact'' entries in rows (3)--(4).

\subsection{Simulation of regression specifications}\label{app:regression-simulation}

This simulation evaluates the specifications of
Section~\ref{sec:regression-implementation} against a known truth. Data
follow the saturated model in Equation~\eqref{eq:saturated-model},
for $n=800$ units split between the two sequences:

\begin{equation*}
    Y_{it}(d_1,d_2)=\pi_{it}+d_t\tau_i+(t-1)\delta_{i,d_1}.
\end{equation*}

The unit treatment effect is
$\tau_i\sim\mathcal N(0.50,0.40^2)$, matching the true effect used
throughout the paper's worked examples. The carryover effects
$\delta_{i,0}$ and $\delta_{i,1}$ are drawn independently across units and
sequences from $\mathcal N(\cdot,0.20^2)$: both have mean $0$ except when a
carryover gap is present, in which case $\mathbb E[\delta_{i,0}]=0.30$ while
$\mathbb E[\delta_{i,1}]$ remains $0$. The baseline outcome is
$\pi_{it}=\alpha_i+\gamma_2 I(t=2)+\varepsilon_{it}$, where
$\alpha_i\sim\mathcal N(0,0.90^2)$ is a unit-level random intercept shared by
both periods, inducing the positive within-unit correlation discussed in
Section~\ref{sec:Within-Subject-Design-in}.  $\varepsilon_{it}\sim\mathcal
N(0,0.70^2)$ is independent idiosyncratic noise drawn separately for each
period, and $\gamma_2=0.40$ when a period effect is present and $0$
otherwise. 

The simulation crosses three such factors: whether the average
carryover gap is present, whether allocation is equal ($n_1=n_2=400$) or
$2{:}1$ ($n_1=534$, $n_2=266$), and whether the common period effect is
present, for $2\times2\times2=8$ cells at $5{,}000$ replications each. Each
cell reports bias for the average treatment effect, the ratio of the mean
reported standard error to the true sampling standard deviation, and
coverage of the nominal 95\% interval. 

We have two core findings.  First, the interacted specification with unit-clustered standard errors
reproduces the design-based estimator and its standard error \emph{exactly} in
every cell, as \ref{app:saturated} proves. Table~\ref{tab:regression-specs-recovery}
restricts to the four cells without a carryover gap, and reports point estimates and standard
errors for all six specifications against the design-based benchmark. The treatment-only
specification does not recover the design-based estimator: under equal
allocation with no period effect the two coincide, but with $2{:}1$ allocation
and a period effect, the treatment-only specification carries a bias of
$-0.13$ against a true effect of $0.50$. The period-interacted specification is
unaffected because a common period effect is included in the model.

\begin{table}[!h]
\begin{centering}
\input{tables/s7_regression_specs_recovery}
\par\end{centering}
\caption{Recovery of the Design-based Estimator, No Carryover Gap, Averaged Across 5{,}000 Iterations}\label{tab:regression-specs-recovery}
\end{table}

The consequence for inference is the incorrect coverage. Table~\ref{tab:regression-specs-coverage}
reports bias and coverage of the nominal 95\% interval across all eight
cells, including the carryover-gap cells, where pooling is biased for every
specification alike due to an identification failure. Coverage for the treatment-only
specification falls to $7.7\%$ in the $2{:}1$-allocation, period-effect,
no-carryover-gap cell, the same cell behind the bias figure above, while
design-based and the period-interacted specification coverage track each other throughout.

\begin{table}[!h]
\begin{centering}
\input{tables/s7_regression_specs_coverage}
\par\end{centering}
\caption{Bias and Coverage of the Design-based, Interacted, and Treatment-only Specifications, All Eight Cells}\label{tab:regression-specs-coverage}
\end{table}

Second, OLS and random effects give numerically identical point estimates and
identical unit-clustered CR0 standard errors, to machine precision, in every
cell and in both parameterizations. Where they differ is the \emph{unclustered,
model-based} standard error that panel software reports by default. When no
cluster-robust option is requested, OLS's default assumes i.i.d.\ errors,
while random effects' default is built from the within-unit correlation
$\hat\sigma_u^2$ it estimates, so the two rest on different variance formulas
even though they share the same point estimate. The choice that matters is therefore the parameterization and the clustering,
not OLS versus random effects.

%% file: tables/s7_regression_specs_recovery.tex
\begin{tabular}{lcccc}
\toprule
 & ATE & ATE/design & SE & SE/design \\
\midrule
\multicolumn{5}{l}{\emph{Panel A: equal allocation ($n_1=n_2=400$)}} \\
\multicolumn{5}{l}{\quad No period effect} \\
Design-based & 0.499 & exact & 0.038 & exact \\
Paired $t$-test & 0.499 & 0.00\% & 0.038 & +0.13\% \\
Treatment-only OLS & 0.499 & 0.00\% & 0.038 & +0.06\% \\
Treatment-only RE & 0.499 & 0.00\% & 0.038 & +0.06\% \\
Interacted OLS & 0.499 & exact & 0.038 & exact \\
Interacted RE & 0.499 & exact & 0.038 & exact \\
\multicolumn{5}{l}{\quad Period effect} \\
Design-based & 0.500 & exact & 0.038 & exact \\
Paired $t$-test & 0.500 & 0.00\% & 0.041 & +6.72\% \\
Treatment-only OLS & 0.500 & 0.00\% & 0.041 & +6.65\% \\
Treatment-only RE & 0.500 & 0.00\% & 0.041 & +6.65\% \\
Interacted OLS & 0.500 & exact & 0.038 & exact \\
Interacted RE & 0.500 & exact & 0.038 & exact \\
\addlinespace
\multicolumn{5}{l}{\emph{Panel B: $2{:}1$ allocation ($n_1=534$, $n_2=266$)}} \\
\multicolumn{5}{l}{\quad No period effect} \\
Design-based & 0.500 & exact & 0.041 & exact \\
Paired $t$-test & 0.500 & -0.06\% & 0.038 & -5.64\% \\
Treatment-only OLS & 0.500 & -0.06\% & 0.038 & -5.69\% \\
Treatment-only RE & 0.500 & -0.06\% & 0.038 & -5.69\% \\
Interacted OLS & 0.500 & exact & 0.041 & exact \\
Interacted RE & 0.500 & exact & 0.041 & exact \\
\multicolumn{5}{l}{\quad Period effect} \\
Design-based & 0.501 & exact & 0.041 & exact \\
Paired $t$-test & 0.367 & -26.76\% & 0.041 & -0.14\% \\
Treatment-only OLS & 0.367 & -26.76\% & 0.041 & -0.20\% \\
Treatment-only RE & 0.367 & -26.76\% & 0.041 & -0.20\% \\
Interacted OLS & 0.501 & exact & 0.041 & exact \\
Interacted RE & 0.501 & exact & 0.041 & exact \\
\bottomrule
\end{tabular}
\par\smallskip
\begin{minipage}{\linewidth}
\footnotesize\raggedright
\textit{Notes:} No between-sequence average carryover gap in any cell
($\delta=0$); true treatment effect $\tau=0.50$; 5{,}000 replications per
cell. ATE and SE are averaged across replications: ATE is the mean point
estimate and SE is the mean reported standard error. ATE/design and
SE/design are the percentage difference from the design-based point
estimate and standard error in the same cell.
\emph{Exact} denotes equality to machine precision, as
\ref{app:saturated} proves for the interacted specifications.
\end{minipage}

%% file: tables/s7_regression_specs_coverage.tex
\footnotesize
\begin{tabular}{lcc cc cc cc}
\toprule
 & & & \multicolumn{2}{c}{Design-based} & \multicolumn{2}{c}{Interacted} & \multicolumn{2}{c}{Treatment-only} \\
\cmidrule(lr){4-5} \cmidrule(lr){6-7} \cmidrule(lr){8-9}
Allocation & Carryover Gap & Period Effect & Bias & Coverage & Bias & Coverage & Bias & Coverage \\
\midrule
Equal & No & No & $-0.001$ & 95.0\% & $-0.001$ & 95.0\% & $-0.001$ & 95.0\% \\
Equal & No & Yes & $0.000$ & 95.2\% & $0.000$ & 95.2\% & $0.000$ & 96.4\% \\
Equal & Yes & No & $-0.151$ & 2.3\% & $-0.151$ & 2.3\% & $-0.151$ & 2.4\% \\
Equal & Yes & Yes & $-0.150$ & 2.5\% & $-0.150$ & 2.5\% & $-0.150$ & 4.2\% \\
$2{:}1$ & No & No & $0.000$ & 94.9\% & $0.000$ & 94.9\% & $0.000$ & 94.7\% \\
$2{:}1$ & No & Yes & $0.001$ & 94.6\% & $0.001$ & 94.6\% & $-0.133$ & 7.7\% \\
$2{:}1$ & Yes & No & $-0.150$ & 4.0\% & $-0.150$ & 4.0\% & $-0.200$ & 0.0\% \\
$2{:}1$ & Yes & Yes & $-0.150$ & 4.4\% & $-0.150$ & 4.4\% & $-0.334$ & 0.0\% \\
\bottomrule
\end{tabular}
\par\smallskip
\begin{minipage}{\linewidth}
\footnotesize\raggedright
\textit{Notes:} True treatment effect $\tau=0.50$; 5{,}000 replications
per cell. Bias is the mean point estimate across replications minus
$\tau$. Coverage is the fraction of the 5{,}000 replications in each
cell whose nominal 95\% interval contained $\tau$. \emph{Carryover Gap}
indicates a nonzero between-sequence average carryover gap ($\delta\ne0$);
\emph{Period Effect} indicates a common period shift.
\end{minipage}

%% file: appendix_undercoverage_simulation.tex
\subsection{Simulations of two-step undercoverage and replication of Freeman}\label{app:undercoverage-simulation}

This appendix evaluates the asymptotic coverage formula of
Equations~\eqref{eq:grizzle-coverage}--\eqref{eq:coverage-terms} (Proposition~\ref{prop:undercoverage})
numerically, along five dimensions. We first validate it against
\textcite{freeman1989performance}'s published special case, then trace the
coverage curve's shape away from $\delta=0$ and calibrate it to reductions
recovered from the literature review (\ref{app:litreview-parameters}). The
remaining two checks address the gap between the asymptotic derivation and
applied practice: variances estimated from data rather than known, and
outcomes that are not normally distributed.

\paragraph*{External validation against \textcite{freeman1989performance}.}
Freeman derives the coverage of the two-step interval in an additive normal
model with a patient random effect, a common error variance, equal sequence
sizes, and a baseline measurement, and evaluates it at $\alpha_{1}=0.10$ and
$\alpha=0.05$.  That model class is exactly
the $r=1$ slice of the present framework (Equation~\eqref{eq:omega-zeta}),
with his within-patient intraclass correlation equal to our $g$. Evaluating
Equation~\ref{eq:grizzle-coverage} on that slice reproduces every value he
reports.

\begin{table}[!h]
\centering
\caption{Reproduction of \gentextcite{freeman1989performance} published values
from Equation~\ref{eq:grizzle-coverage}. His $\lambda$ is our $-\delta$.}
\label{tab:freeman-reproduction}
\begin{tabular}{lrrl}
\toprule
Freeman's (1989) quantity & \textcite{freeman1989performance} & This paper & Our $(r,g,\eta)$ \\
\midrule
Coverage at $\lambda=0$, with baseline & $0.917$ & $0.9180$ & $(1,\ \tfrac12,\ 0)$ \\
Minimum coverage, with baseline & $0.56$ & $0.5629$ & $(1,\ \tfrac12,\ 1.53)$ \\
Actual level at $\lambda=0$, no baseline & $0.085$ & $0.0848$ & $(1,\ 0.6,\ 0)$ \\
Maximum actual level, no baseline & ``just over $0.50$'' & $0.5026$ & $(1,\ 0.6,\ 1.42)$ \\
\midrule
\multicolumn{4}{@{}p{355.457pt}@{}}{\footnotesize \textit{Notes}: $\eta=\sqrt{n}\,\delta/\sigma_{\delta}$ is the standardized
carryover gap used in the rightmost column.} \\
\bottomrule
\end{tabular}
\end{table}

\paragraph*{Behavior away from $\delta=0$.}
Figure~\ref{fig:coverage-local} plots Equation~\ref{eq:grizzle-coverage}
against $\eta=\sqrt{n}\,\delta/\sigma_{\delta}$. Coverage falls from
its value at $\delta=0$ to a minimum and then returns to nominal as the
preliminary test gains power, per Remark~\ref{rem:fixed-delta}. The larger
$g$, and hence the larger the efficiency gain that motivates pooling, the
deeper the trough and the smaller the carryover gap at which it occurs.

\begin{table}[!h]
\centering
\begin{tabular}{lrrr}
\toprule
$g$ (at $r=1$) & Coverage at $\delta=0$ & Minimum coverage & at $|\eta|$ \\
\midrule
$0.0$ & $0.9356$ & $0.7250$ & $1.97$ \\
$0.3$ & $0.9302$ & $0.5868$ & $1.80$ \\
$0.5$ & $0.9259$ & $0.4678$ & $1.62$ \\
$0.8$ & $0.9173$ & $0.2373$ & $1.17$ \\
\midrule
\multicolumn{4}{@{}p{267.704pt}@{}}{\footnotesize \textit{Notes}: The
rightmost column reports the value of $|\eta|$ at which each row's
minimum coverage is attained.} \\
\bottomrule
\end{tabular}
\caption{Coverage of the nominal 95\% two-step interval at equal period
variances.}
\label{tab:trough}
\end{table}

\begin{figure}[!h]
\centering
\includegraphics[width=0.78\textwidth]{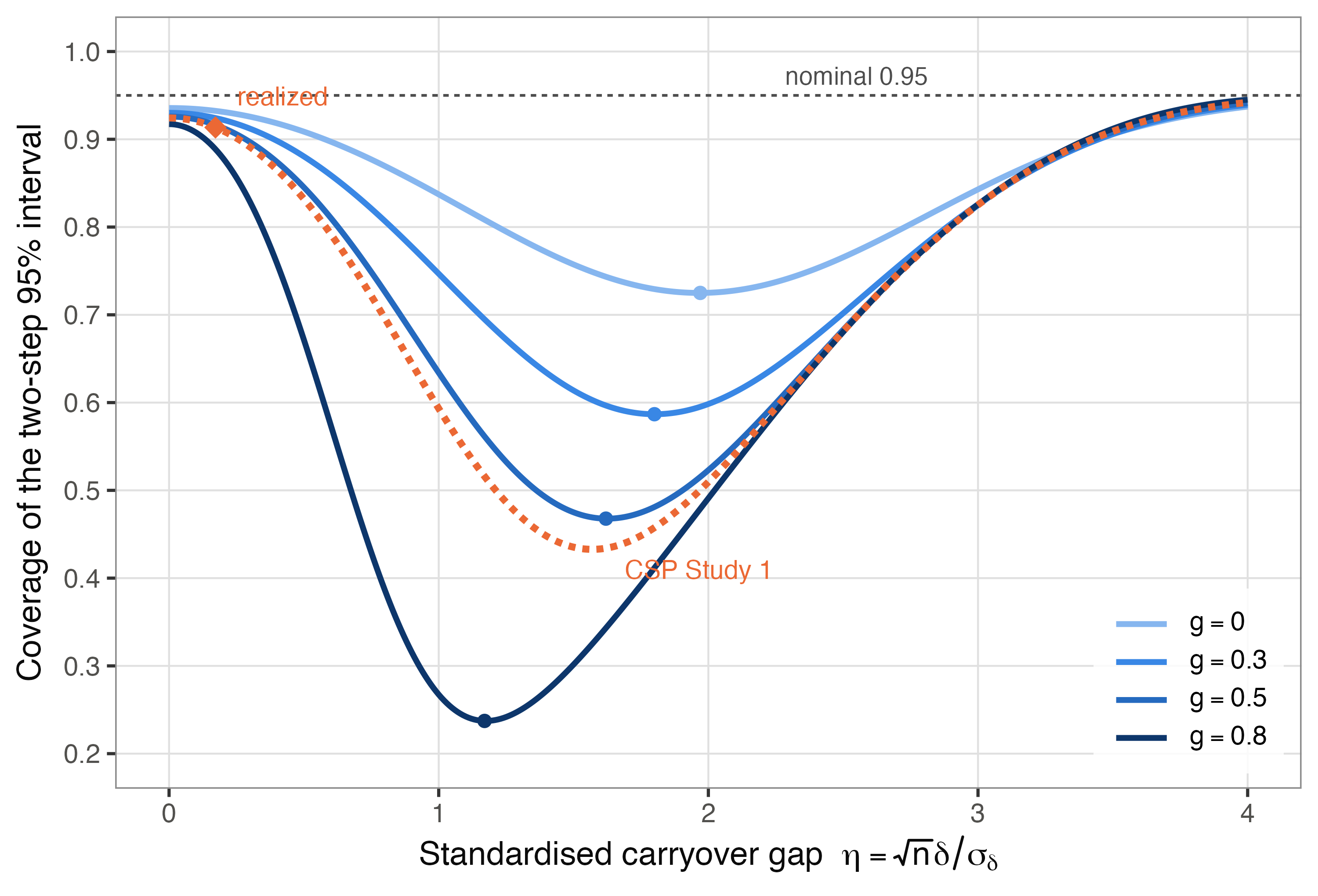}
\caption{Coverage of the nominal 95\% two-step interval as a function of the
standardized carryover gap, at equal period variances, with the calibration to
Study 1 of \textcite{clifford2021increasing} shown as a dashed line. Points mark
each curve's minimum. The diamond marks Study 1's own realized
(plug-in) standardized carryover gap, $|t_{\delta}|$, shown distinctly from
each curve's synthetic worst-case minimum.}
\label{fig:coverage-local}
\end{figure}

\paragraph*{Calibrations from the literature review.}
We apply the same coverage
formula, Equations~\eqref{eq:grizzle-coverage} and~\eqref{eq:coverage-terms}, to all 30 reductions recovered from the literature review with
complete variance components. Of these, 27 are Group 1 + Group 2 rows drawn
from seven papers (\ref{app:litreview-parameters}) and are used to
characterize the empirically relevant region. The remaining 3 are retained in
Table~\ref{tab:litreview-recovered-parameters}, but not included due to smaller observation numbers (e.g. $n<20$ in paper 10).
Figure~\ref{fig:literature-two-step-coverage} plots the 27 primary curves on a
common standardized carryover axis: at $\delta=0$ their coverage ranges from
$0.920$ to $0.936$, and over $|\eta|\leq4$ their minimum coverage
ranges from $0.303$ to $0.722$, with a median minimum of $0.628$. The two-step
undercoverage seen in the CSP and JOT calibrations is therefore not peculiar to
either application, but widly exists. 

\begin{figure}[!h]
\centering
\includegraphics[width=0.78\textwidth]{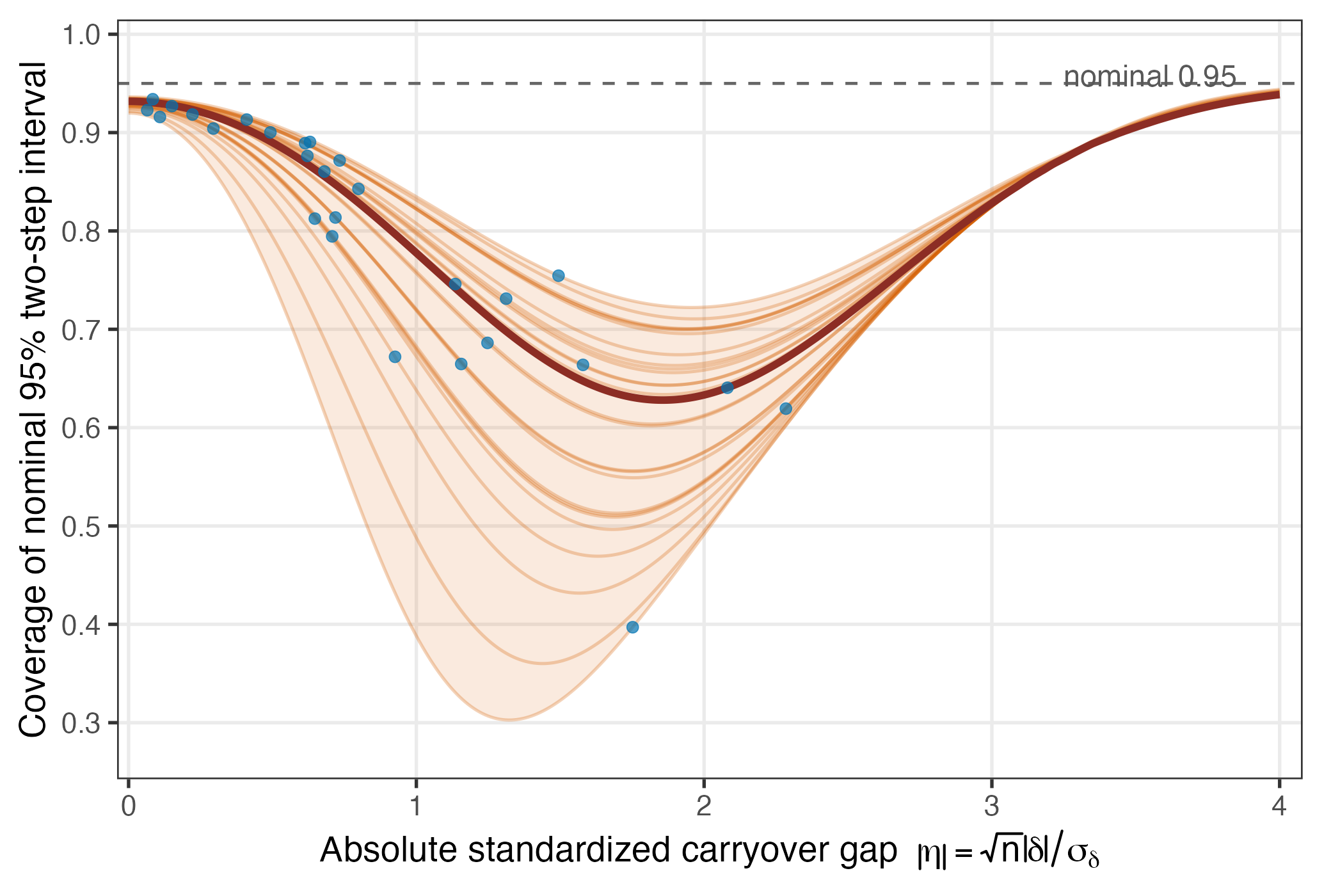}
\caption{Coverage of the nominal 95\% two-step interval for the 27 Group 1 + Group 2 outcome-level reductions recovered from seven papers in the
literature review. Each thin curve is one reduction, the shaded band is their
pointwise range, and the heavy curve is their pointwise median. The
3 excluded
reductions reported in Table~\ref{tab:litreview-recovered-parameters} are not
used to characterize the empirical coverage region.} A point on each
curve marks that reduction's own realized standardized carryover gap,
$|\hat\eta| = |\hat\delta|/\text{se}(\hat\delta)$, the empirical
plug-in analog of $|\eta|$ evaluated from its estimated rather than
population variance components.
\label{fig:literature-two-step-coverage}
\end{figure}

\paragraph*{Estimated rather than known variances.}
The formal results use the population quantities
$\sigma_{\delta},\sigma_{1},\sigma_{co}$; researchers substitute sample
estimates. Figure~\ref{fig:coverage-finite-n} evaluates the consequence by
Monte Carlo at $r=1$, $g=0.6$, with $400{,}000$ replications at each of
$n=100$ and $n=500$. Substituting estimated variances -- the plug-in rule researchers actually
use -- moves coverage only slightly: at $\delta=0$ it falls below the oracle
curve, where the population variance is used, by $0.007$ at $n=100$ and $0.0015$ at $n=500$. The lower panel of
Figure~\ref{fig:coverage-finite-n} plots the plug-in rule's own gap to the analytic curve at each $\eta$: the discrepancy stays small,
within $\pm0.011$ at $n=100$ and $\pm0.003$ at $n=500$, across the full
range of $\eta$.

\paragraph*{Non-normal outcomes.}
Nothing in the derivation assumes a distributional form for the outcome. We only
need the estimators to obey a central limit theorem. 

We simulate three outcome families of
very different shape: normal, a three-point ordinal scale calibrated
to the category shares of \gentextcite{clifford2021increasing} welfare item,
and $t$ with three degrees of freedom, each evaluated against the
analytic curve at its own $(r,g)$. All three reproduce the same U-shaped
coverage curve the formula predicts. Figures~\ref{fig:coverage-nonnormal-normal}--\ref{fig:coverage-nonnormal-t3}
show, for each family, both the simulated coverage using
estimated variances and the analytic prediction. 

\begin{figure}[!h]
\centering
\includegraphics[width=0.78\textwidth]{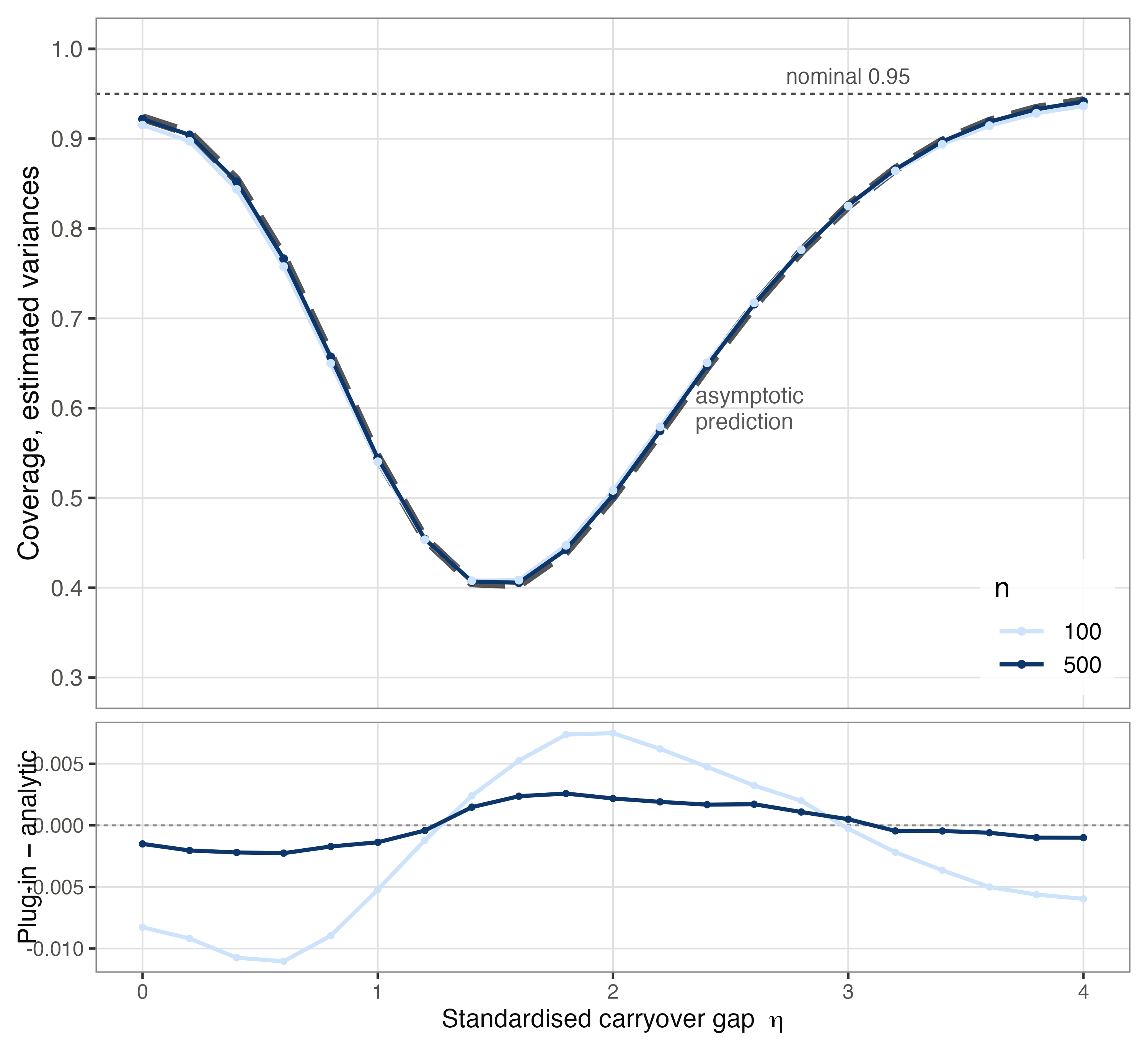}
\caption{Top: coverage of the two-step interval
computed with estimated variances, at $r=1$, $g=0.6$, against the asymptotic
prediction (dashed line), for $n=100$ and $n=500$. Bottom: the plug-in minus
analytic coverage gap at each $\eta$, on the same $x$-axis, showing directly
how small the discrepancy is (within about $\pm0.011$ at $n=100$ and
$\pm0.003$ at $n=500$). $400{,}000$ replications per point.}
\label{fig:coverage-finite-n}
\end{figure}

\begin{figure}[!h]
\centering
\includegraphics[width=0.62\textwidth]{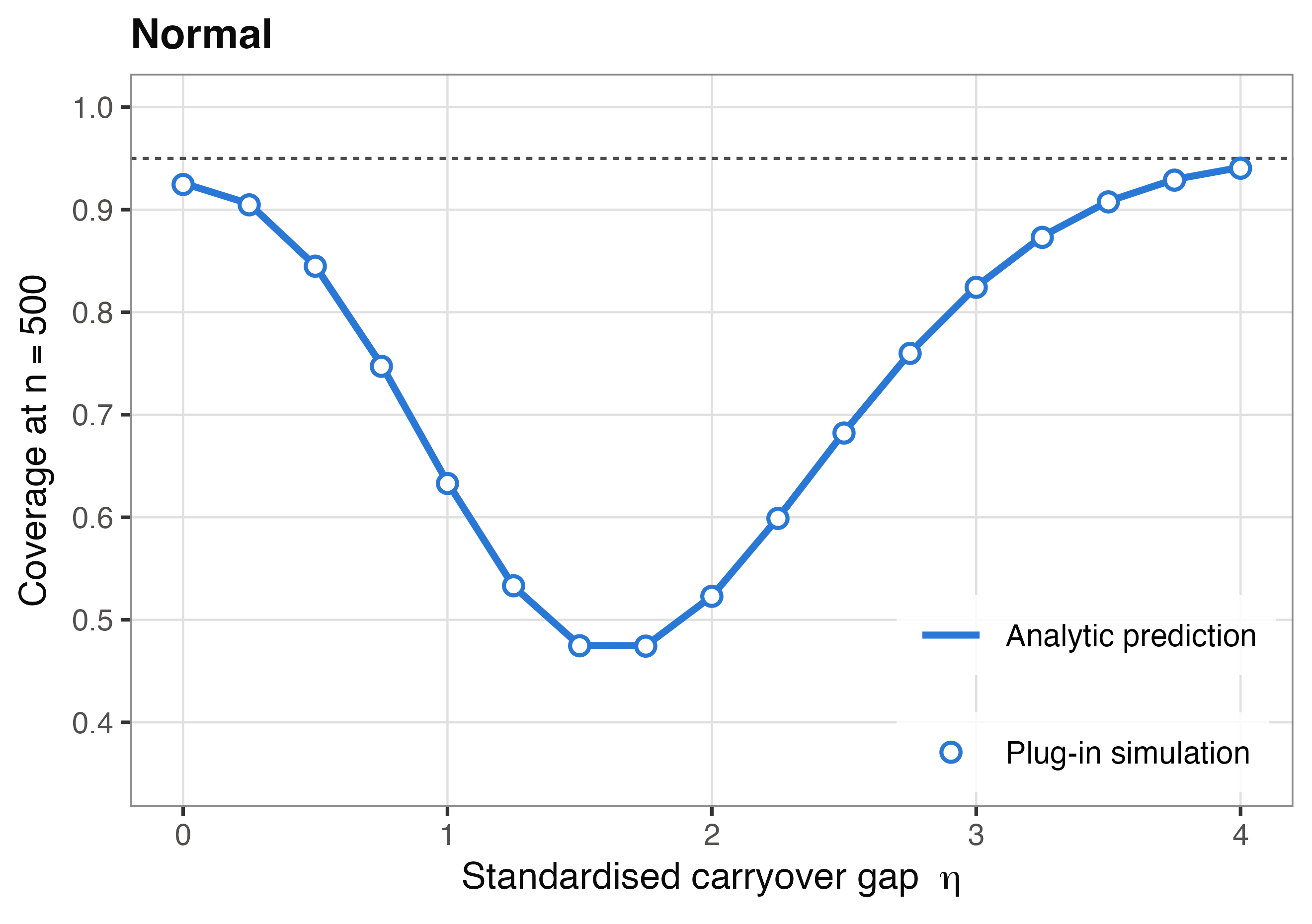}
\caption{Simulated coverage (plug-in simulation, points) against the analytic
prediction (line) at $n=500$, for a normal outcome. The largest discrepancy
over the full range of $\eta$ is $0.002$.}
\label{fig:coverage-nonnormal-normal}
\end{figure}

\begin{figure}[!h]
\centering
\includegraphics[width=0.62\textwidth]{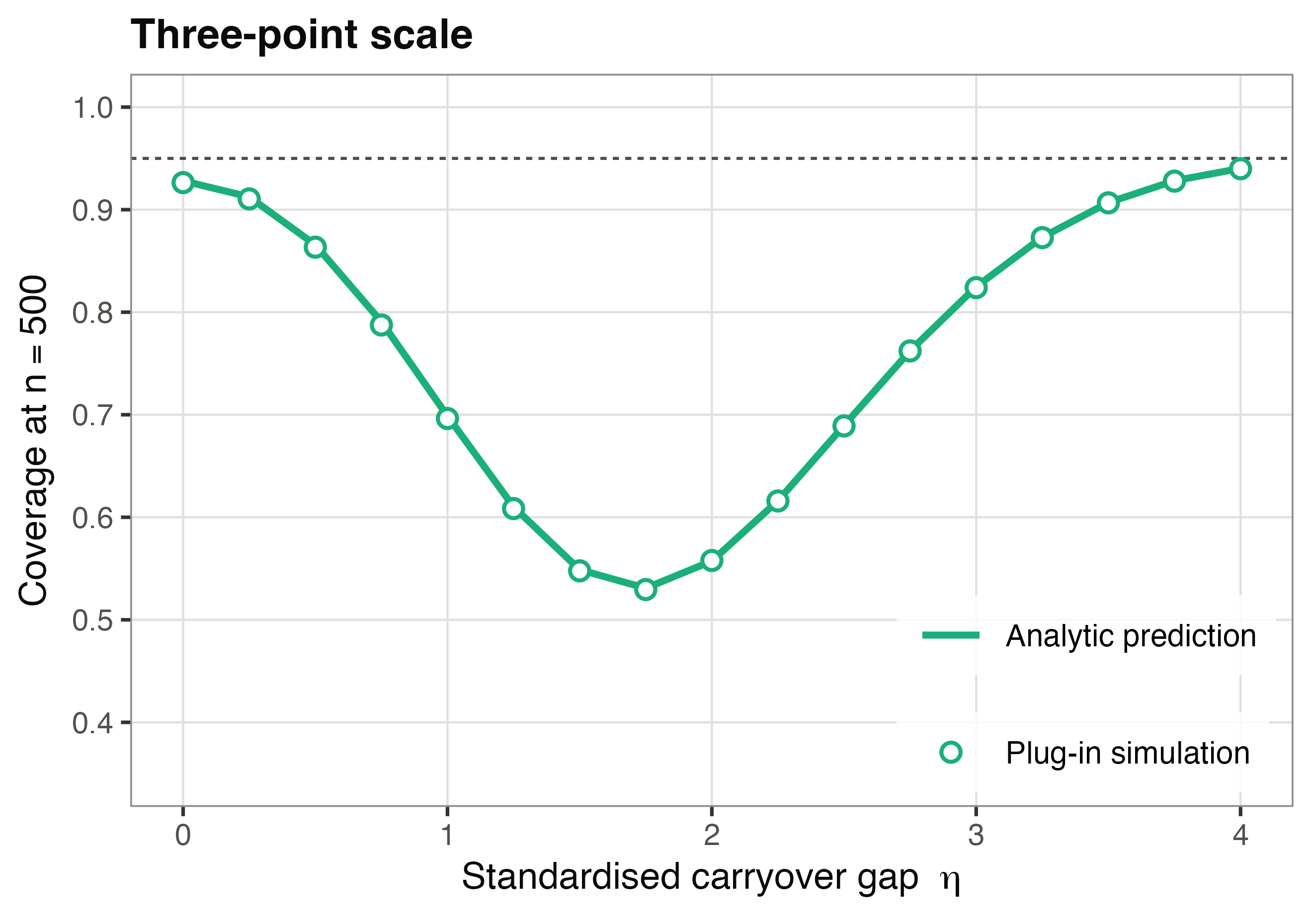}
\caption{Simulated coverage against the analytic prediction at $n=500$, for a
three-point ordinal outcome calibrated to the category shares of
\gentextcite{clifford2021increasing} welfare item. The largest discrepancy is
$0.003$.}
\label{fig:coverage-nonnormal-likert3}
\end{figure}

\begin{figure}[!h]
\centering
\includegraphics[width=0.62\textwidth]{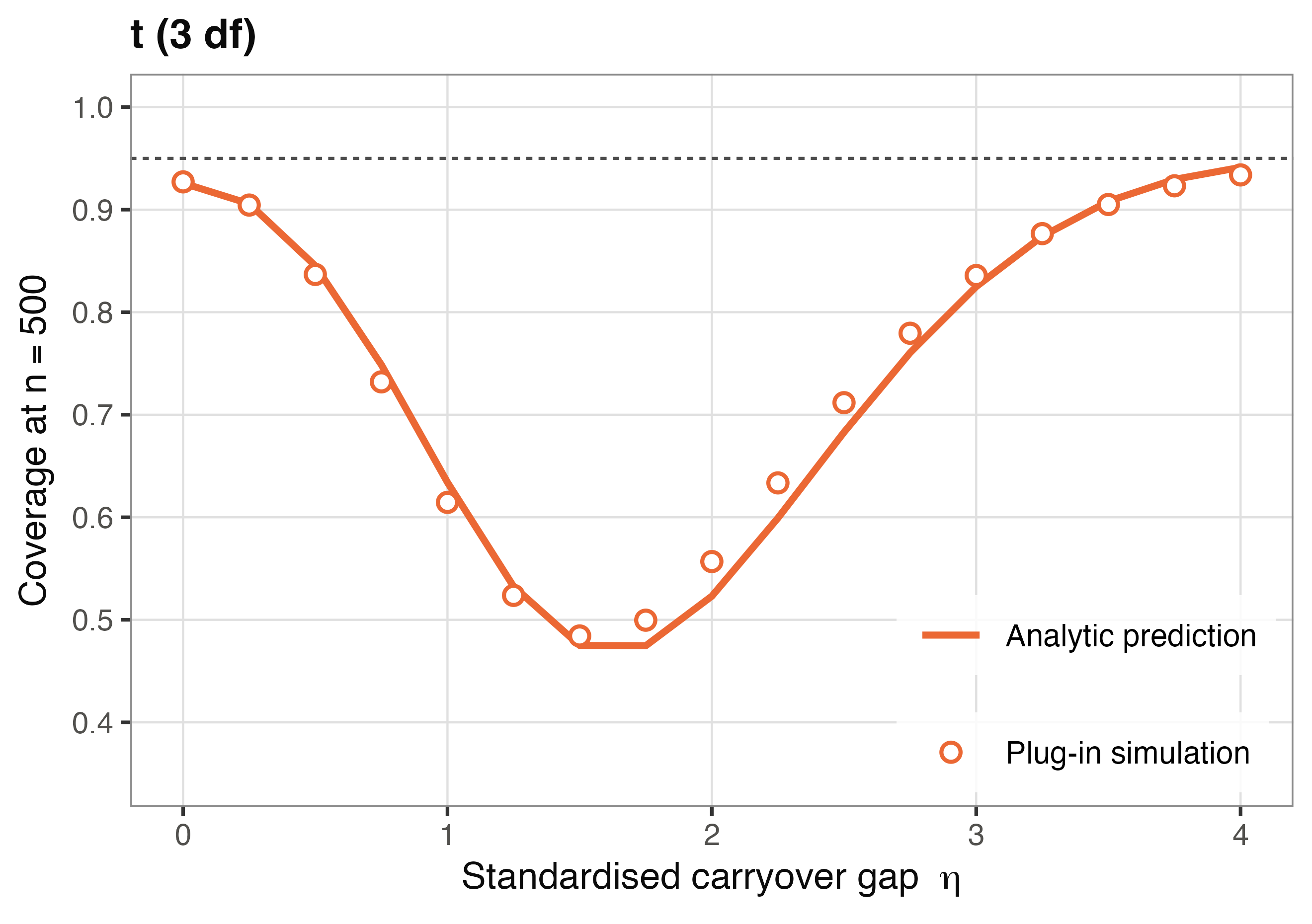}
\caption{Simulated coverage against the analytic prediction at $n=500$, for an
outcome distributed $t$ with three degrees of freedom. The largest discrepancy
is $0.034$, the worst of the three families tested.}
\label{fig:coverage-nonnormal-t3}
\end{figure}

%% file: appendix_litreview_parameters.tex
\section{Parameters recovered from the literature review}\label{app:litreview-parameters}

The sensitivity-analysis review covers 33 published papers that cite
\textcite{clifford2021increasing} and that \textcite{JORDAN_OLLERENSHAW_TREXLER_2026}
classify as adopting a within-subject design. The purpose is to locate
empirical numbers usable for this paper's within-subject sensitivity
analysis. Table~\ref{tab:litreview-paper-audit}
records one headline row for every paper, including designs, and reasons for which the
strict two-sequence estimands are not identified. We categorize the 33 papers into 4 groups. 

Group 1 contains exact two-period crossover designs, in which respondents are
randomized to treatment--control $(1,0)$ or control--treatment $(0,1)$ and the
same outcome is measured after each period. In addition, designs with additional
intervening items from which two observations satisfying this structure can be
isolated are also included here. 

Group 2 contains multiperiod or multicondition designs that we reduce via pairwise comparisons. We pair the first period with each later period. A pair contributes only when both treatment sequences are present, i.e. some respondents received treatment in Period 1 and control in the later period (sequence $(1,0)$) while others received the reverse (sequence $(0,1)$), so we can form the same two-sequence contrast used throughout the paper. We average with equal weight over all such pairs. We call these reductions history-marginal because a respondent's outcome in the later period is averaged over whatever treatments they received in between.

Group 3 contains within-subject designs whose Period 1 exposure is fixed across respondents. Everyone receives the same (or no) stimulus in Period 1, and only in Period 2 are they assigned to different treatment conditions. Group 3, further elaborated later, is not subject to the differential-carryover concern that motivates our analysis, since every respondent shares the same Period 1 exposure; it remains a within-subject design, but it lacks the two-sequence structure that our analysis addresses. 

Group 4 contains the remaining non-reducible or unresolved designs.
These fall into three kinds: (a) designs that
would otherwise support a Group 1 or Group 2 reduction but lack the
replication data needed for it, most often a covariate recording the
order in which treatments were assigned; (b) a single paper whose pairwise
reduction executes mechanically but whose paired items are answered in
one sitting rather than at two temporally separated occasions, so they
cannot support a two-sequence estimate despite the data being available;
and (c) a small number of papers that
\textcite{JORDAN_OLLERENSHAW_TREXLER_2026}'s own table misclassifies
as within-subject, which are simply out of scope for this review rather
than non-reducible on structural terms.

Table~\ref{tab:litreview-paper-audit}
identifies which of
the four specific group it belongs to, with further reasons defined in the table's own notes.
Separately, a dash anywhere in the $N$, $t_{rep}$, $t_{co}$, or
$t_{\delta}$ columns means that quantity could not be recovered for that
paper's headline row.

The three $t$-columns serve different purposes: $t_{rep}$ is the $t$-statistic
for the paper's headline result as originally reported, whatever estimand
that paper used, while $t_{co}$ and $t_{\delta}$ are the $t$-statistics this
paper computes under its own two-sequence framework, for the pooled
estimator $\hat\mu_{co}$ and the carryover gap $\hat\delta$ respectively.
The distinction matters because a paper's headline result can be
reproducible ($t_{rep}$ recovers the original finding) even when its design
does not identify the two-sequence quantities our formal analysis
requires, in which case $t_{co}$ and $t_{\delta}$ are entered as dashes.

Table~\ref{tab:litreview-paper-audit}
reports one headline row per paper, chosen before inspecting any harmonized
estimate: the result the paper or designates primary.
Absent that, the first main-text confirmatory result tied to the
within-subject design. For papers with multiple experiments, the
author-declared main experiment, else the first main-text confirmatory
experiment. 

Table~\ref{tab:litreview-recovered-parameters} then reports every
study-outcome for which the sequence sizes and all three variance components
are recoverable. There are 30 such reductions. For each, it records the pooled
and carryover test statistics, $\sigma_{1}^{2}$, $\sigma_{2}^{2}$,
$\sigma_{12}$, and their normalized versions
$r=\sigma_{2}^{2}/\sigma_{1}^{2}$ and
$g=\sigma_{12}/\sigma_{1}^{2}$. The ``Included'' column identifies the 27
Group 1 + Group 2 rows used as circles in Figure~\ref{fig:r-g-map}. Of the
remaining 3 rows, all have fewer than twenty observations in at least one
sequence (paper 10). Their parameters are retained for transparency but are
shown as crosses in Figure~\ref{fig:r-g-map} rather than used to
characterize the empirically relevant region.

\begin{landscape}
\captionsetup[longtable]{width=0.78\linewidth}
\begingroup
\footnotesize
\setlength{\tabcolsep}{3pt}
\setlength{\LTleft}{0pt}
\setlength{\LTright}{0pt}
\input{tables/litreview_paper_audit}
\endgroup
\begingroup
\scriptsize
\setlength{\tabcolsep}{2.2pt}
\setlength{\LTleft}{0pt}
\setlength{\LTright}{0pt}
\input{tables/litreview_recovered_parameters}
\endgroup
\end{landscape}

\paragraph*{Location in the undercoverage parameter space.}
Figure~\ref{fig:r-g-map} evaluates the coverage deficit at $\delta=0$, from
Equations~\eqref{eq:grizzle-coverage} and~\eqref{eq:coverage-terms}, minus
nominal, on a $241\times241$ grid of $r\in[0.05,6.00]$ and
$g\in[-2.50,2.50]$; restricting to the feasible region $g^{2}\le r$ leaves
$38{,}049$ grid points. The color passes through neutral
exactly on the boundary $g=(r-3)/2$ from
Proposition~\ref{prop:undercoverage}, and its sign agrees with the proposition
at every one of these feasible points.  Every
over-coverage cell has $r>1$, and every over-coverage cell with
$\sigma_{12}\ge0$ has $r>3$.

The \LitReviewZetaFaithfulCount{} Group 1 + Group 2 study-outcomes come from
seven papers and have $r\in[\LitReviewOmegaFaithfulMin{},\LitReviewOmegaFaithfulMax{}]$
and $g\in[\LitReviewZetaFaithfulMin{},\LitReviewZetaFaithfulMax{}]$
(nonnegative in \LitReviewZetaFaithfulNonnegCount{} of
\LitReviewZetaFaithfulCount{}). All lie in the
undercoverage region, and
their coverage at $\delta=0$ ranges from $0.920$ to $0.936$ against a nominal
$0.95$. Empirically, then, we also find that $r$ is close to one, and the slice on which
\textcite{freeman1989performance} worked is the empirically relevant one.
For comparison, Section~\ref{sec:testing-carryover}'s
covariance-plausibility claim draws on the full \LitReviewZetaCount{}
recoverable study-outcomes, for which $g$ ranges more widely,
\LitReviewZetaMin{} to \LitReviewZetaMax{} (nonnegative in
\LitReviewZetaNonnegCount{} of \LitReviewZetaCount{}). The wider range
reflects the 3 rows excluded from the Included column of
Table~\ref{tab:litreview-recovered-parameters}, whose small samples
make them noisier evidence of same-construct
correlation than the Group 1 + Group 2 subset, even though the broader set
still supports the same directional conclusion.

\begin{figure}[tbp]
\centering
\includegraphics[width=0.78\textwidth]{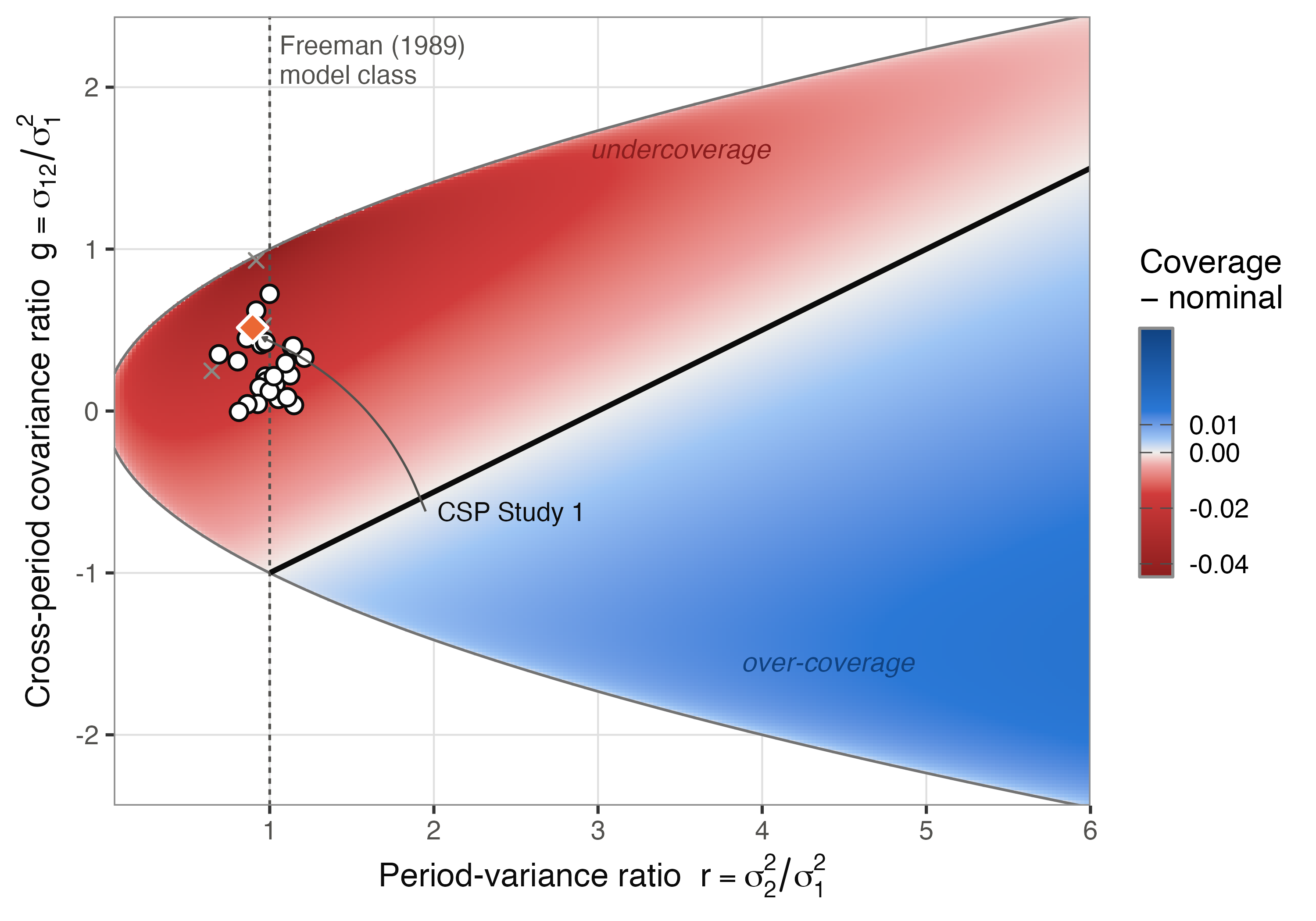}
\caption{Coverage of the nominal 95\% two-step interval at $\delta=0$, minus
nominal, over the feasible region. The heavy line is the efficiency boundary
$\tfrac{1}{4}\sigma_{co}^{2}=\sigma_{1}^{2}$; the thin curves are the
feasibility envelope $g=\pm\sqrt{r}$. Circles are the 27
Group 1 + Group 2 study-outcomes; crosses are the 3 excluded reductions retained
for transparency.}
\label{fig:r-g-map}
\end{figure}

\subsection{From the literature review to Table~\ref{tab:litreview-sensitivity-summary}}\label{app:litreview-crosswalk}
%{\color{red}\small[TK-REVIEW]}

Table~\ref{tab:litreview-recovered-parameters} and
Table~\ref{tab:litreview-sensitivity-summary} report overlapping but
distinct subsets of the literature review, because they answer different
questions rather than one being a further-filtered version of the other.
Of the 33 reviewed papers, only 8 have the two-sequence structure needed
to compute $\hat\delta$ and its variance components at all. Since some of
these 8 measured several outcomes or paired several periods, they together
yield the 30 study-outcomes in Table~\ref{tab:litreview-recovered-parameters}.
That table reports all 30 because its purpose is to characterize how
correlated repeated measurements typically are (Section~\ref{sec:testing-carryover}),
a question that does not require the paper's own pooled estimate to have
been significant. Table~\ref{tab:litreview-sensitivity-summary} instead
asks a narrower question: whether a plausible carryover gap could overturn
an originally significant result, which is only meaningful for results
that were significant to begin with. Restricting the 30 to those with
$|t_{co}|>z_{.975}$ leaves 15; we then add 10 further comparisons
reconstructed independently from exact two-sequence crossovers in
\textcite{clifford2021increasing}'s Study 1 and
\textcite{JORDAN_OLLERENSHAW_TREXLER_2026}'s Studies 4--6, chosen because
their design and data leave no ambiguity about the two sequences. The
precise chain is as follows.

Section~\ref{sec:Sensitivity-Analysis}'s calibration draws on the same 30
study-outcomes in Table~\ref{tab:litreview-recovered-parameters}, filtered
independently of that table's Included column: it
retains rows whose pooled estimate is itself significant,
$|t_{co}|>z_{.975}$. This yields 15 significant comparisons from 6 of the 8
papers represented in Table~\ref{tab:litreview-recovered-parameters} (paper
IDs 04, 08, 10, 16, 22, and 47 in Table~\ref{tab:litreview-paper-audit}). The
remaining two papers (09 and 50) contribute no significant row. We
supplement these 15 with 10 comparisons reconstructed independently from
\textcite{clifford2021increasing}'s Study 1 and
\textcite{JORDAN_OLLERENSHAW_TREXLER_2026}'s Studies 4--6 across their
AmeriSpeak, Prolific, and Lucid samples, exact two-sequence crossovers
external to the 33-paper citation review rather than additional papers
from it. All 10 are significant. Table~\ref{tab:litreview-sensitivity-summary}
reports the resulting 25 comparisons across 8 papers, equal-weighted within
paper. One paper contributes two significant outcomes for the same
treatment-by-period pair, giving 24 distinct pairs.

%% file: tables/litreview_paper_audit.tex
\begin{longtable}{@{}r p{2.6cm} p{4.6cm} p{4.0cm} r r r r p{4.2cm}@{}}
\caption[]{Paper-level audit of recovered quantities.}\label{tab:litreview-paper-audit}\\
\toprule
ID & Author (Year) & Paper & Design class & $N$ & $t_{rep}$ & $t_{co}$ & $t_{\delta}$ & Source \\
\midrule
\endfirsthead
\multicolumn{9}{l}{\tablename\ \thetable\ (continued)}\\
\toprule
ID & Author (Year) & Paper & Design class & $N$ & $t_{rep}$ & $t_{co}$ & $t_{\delta}$ & Source \\
\midrule
\endhead
\midrule \multicolumn{9}{r}{Continued on next page}\\ \endfoot
\midrule
\multicolumn{9}{@{}p{21.0cm}@{}}{\footnotesize \textit{Notes}: The first line of the design-class column gives the four-group manuscript classification. For Group 4, the second line records one of four reasons: ``Group 1 shape, no data'' -- a two-condition, randomized-order design (Group 1's pattern) whose presentation-order or replication data is unavailable; ``Group 2 shape, no data'' -- a multiperiod or multicondition design (Group 2's pattern) blocked the same way; ``JOT misclassification'' -- Jordan, Ollerenshaw, and Trexler's (2026) own source table lists the paper as within-subject, but it is not, on any contrast relevant to this review, and it is therefore out of scope rather than non-reducible on structural terms; or ``Single-occasion, multi-item design'' -- a pairwise reduction executes mechanically (both treatment orders exist across items), but the paired items are answered in one sitting with treatment fixed in advance, so there are no two temporally separated measurement occasions for a two-sequence estimate to describe. A paper's row can show dashes throughout even though a different study-outcome from the same paper contributes real recovered numbers to Table~\ref{tab:litreview-recovered-parameters}: the headline row shown here is fixed by the paper's selection rule before checking what is recoverable, precisely so recoverability cannot influence which row is chosen as headline, and every other eligible outcome for that paper is retained in the underlying detail table regardless. Papers 47 and 50 illustrate this pattern.}\\
\bottomrule \endlastfoot
02 & \textcite{annaka2022can} & Can a constitutional monarch influence democratic preferences? Japanese emperor and the regulation of public expression & 3: fixed Period 1 exposure & 755 & -3.37 & -- & -- & reconstructed from replication data \\
04 & \textcite{carnahan2022correcting} & Correcting the Misinformed: The Effectiveness of Fact-checking Messages in Changing False Beliefs & 1: exact/reducible & 348 & 5.40 & 4.70 & 0.11 & reconstructed from replication data \\
07 & \textcite{kachanoff2022equating} & Equating silence with violence: When White Americans feel threatened by anti-racist messages & 4: Group 1 shape, no data & 428 & 16.89 & -- & -- & copied from text/pdf/table \\
08 & \textcite{tappin2023estimating} & Estimating the between-issue variation in party elite cue effects & 2: multiperiod pairwise & 1240 & -- & -- & -- & copied from text/pdf/table \\
09 & \textcite{vaughn2022mass} & Mass support for proposals to reshape policing depends on the implications for crime and safety & 1: exact/reducible & 1107 & -0.33 & -0.27 & -1.75 & reconstructed from replication data \\
10 & \textcite{ozer2022partisan} & Partisan news versus party cues: The effect of cross-cutting party and partisan network cues on polarization and persuasion & 2: multiperiod pairwise & 797 & -- & -- & -- & copied from text/pdf/table \\
13 & \textcite{reeves2022unilateral} & Unilateral Inaction: Congressional Gridlock, Interbranch Conflict, and Public Evaluations of Executive Power & 4: Group 2 shape, no data & -- & -7.00 & -- & -- & copied from text/pdf/table \\
15 & \textcite{endres2023randomized} & A randomized experiment evaluating survey mode effects for video interviewing & 3: fixed Period 1 exposure & -- & 3.94 & -- & -- & copied from replication data \\
16 & \textcite{vantrappen2023biased} & Biased expectations? An experimental test of which party selectors are more likely to stereotype ethnic minority aspirants as less favorable than ethnic majority aspirants & 1: exact/reducible & 597 & -- & -8.48 & -0.68 & reconstructed from replication data \\
20 & \textcite{findor2023equality} & Equality, Reciprocity, or Need? Bolstering Welfare Policy Support for Marginalized Groups with Distributive Fairness & 4: Group 2 shape, no data & 113 & 3.60 & -- & -- & copied from text/pdf/table \\
22 & \textcite{velez2023latino} & Latino-Targeted Misinformation and the Power of Factual Corrections & 2: multiperiod pairwise & 2768 & 11.07 & -- & -- & reconstructed from replication data \\
25 & \textcite{armaly2023politicized} & Politicized Battles: How Vacancies and Partisanship Influence Support for the Supreme Court & 3: fixed Period 1 exposure & 603 & -- & -- & -- & no available replication data, and not reported in text/pdf/table \\
27 & \textcite{fortunato2023public} & Public Support for Professional Legislatures & 3: fixed Period 1 exposure & 998 & 4.39 & -- & -- & reconstructed from replication data \\
32 & \textcite{ozer2023women} & Women Experts and Gender Bias in Political Media & 4: Group 2 shape, no data & 525 & -- & -- & -- & reconstructed from replication data \\
33 & \textcite{yoon2026anger} & Anger expressions and coercive credibility in international crises & 3: fixed Period 1 exposure & -- & 10.90 & -- & -- & reconstructed from replication data \\
37 & \textcite{demarest2024bureaucracy} & Bureaucracy and Cyber Coercion & 3: fixed Period 1 exposure & -- & -- & -- & -- & copied from text/pdf/table \\
38 & \textcite{briggs2024changes} & Changes in Perceptions of Border Security Influence Desired Levels of Immigration & 4: Group 1 shape, no data & 851 & 3.88 & -- & -- & copied from text/pdf/table \\
47 & \textcite{halling2024frontline} & Frontline employees' responses to citizens' communication of administrative burdens & 2: multiperiod pairwise & -- & 9.23 & -- & -- & reconstructed from replication data \\
49 & \textcite{vanloon2024imagined} & Imagined otherness fuels blatant dehumanization of outgroups & 3: fixed Period 1 exposure & -- & 2.19 & -- & -- & reconstructed from replication data \\
50 & \textcite{clifford2024moral} & Moral Rhetoric, Extreme Positions, and Perceptions of Candidate Sincerity & 2: multiperiod pairwise & -- & -- & -- & -- & copied from text/pdf/table \\
53 & \textcite{stedtnitz2024public} & Public Reactions to Communication of Uncertainty: How Long-Term Benefits Can Outweigh Short-Term Costs & 3: fixed Period 1 exposure & -- & 2.84 & -- & -- & reconstructed from replication data \\
54 & \textcite{kotcher2024role} & Role model stories can increase health professionals' interest and perceived responsibility to engage in climate and sustainability actions & 3: fixed Period 1 exposure & 39 & -- & -- & -- & copied from text/pdf/table \\
55 & \textcite{nam2024scientific} & Scientific supremacy: How do genetic narratives relate to racism? & 3: fixed Period 1 exposure & -- & 2.51 & -- & -- & reconstructed from replication data \\
56 & \textcite{alcocer2024supplemental} & Supplemental online resources improve data literacy education: Evidence from a social science methods course & 4: Single-occasion, multi-item design & -- & 2.77 & -- & -- & copied from text/pdf/table \\
59 & \textcite{sorelle2024policy} & The policy acknowledgement gap: Explaining (mis)perceptions of government social program use & 4: JOT misclassification & -- & -- & -- & -- & no available replication data, and not reported in text/pdf/table \\
61 & \textcite{hanniman2025when} & When partisanship and technocratic credibility collide: mass attitudes and central bank endorsements of fiscal policy in Canada and the USA & 3: fixed Period 1 exposure & -- & 1.73 & -- & -- & copied from text/pdf/table \\
62 & \textcite{keim2025active} & Active Student Responding and Student Perceptions: A Replication and Extension & 4: Group 2 shape, no data & 100 & -- & -- & -- & no available replication data, and not reported in text/pdf/table \\
64 & \textcite{schmidt2025breaking} & Breaking the rules, but for whom? How client characteristics affect frontline professionals' prosocial rule-breaking behavior & 4: Group 2 shape, no data & -- & 3.53 & -- & -- & copied from text/pdf/table \\
70 & \textcite{obrochta2025language} & Language Cues and Perceptions of Nationalism & 4: JOT misclassification & -- & 5.48 & -- & -- & reconstructed from replication data \\
73 & \textcite{aydincakir2025on} & On motives and means: how approach and justification for court-curbing impact public trust & 3: fixed Period 1 exposure & 1014 & -- & -- & -- & copied from text/pdf/table \\
78 & \textcite{argyle2025testing} & Testing theories of political persuasion using AI & 3: fixed Period 1 exposure & -- & 3.94 & -- & -- & reconstructed from replication data \\
81 & \textcite{levin2025to} & To What End? Policy Objectives and US Public Support for Political Warfare & 4: JOT misclassification & -- & 2.47 & -- & -- & copied from replication data \\
82 & \textcite{haenschen2025tweet} & Tweet no harm: Offer solutions when alerting the public to voter suppression efforts & 3: fixed Period 1 exposure & 1085 & -2.80 & -- & -- & copied from text/pdf/table \\
\end{longtable}

%% file: tables/litreview_recovered_parameters.tex
\begin{longtable}{@{}l p{8.4cm} r r r r r r r r r r r r r c@{}}
\caption[]{Recovered quantities by study-outcome.}\label{tab:litreview-recovered-parameters}\\
\toprule
Row & Study and outcome & $N$ & $n_1$ & $n_2$ & $t_{co}$ & $t_{\delta}$ & $\hat\mu_{co}$ & $\hat\delta$ & $|\hat\delta/\hat\mu_{co}|$ & $\sigma_1^2$ & $\sigma_2^2$ & $\sigma_{12}$ & $r$ & $g$ & Included \\
\midrule
\endfirsthead
\multicolumn{16}{l}{\tablename\ \thetable\ (continued)}\\
\toprule
Row & Study and outcome & $N$ & $n_1$ & $n_2$ & $t_{co}$ & $t_{\delta}$ & $\hat\mu_{co}$ & $\hat\delta$ & $|\hat\delta/\hat\mu_{co}|$ & $\sigma_1^2$ & $\sigma_2^2$ & $\sigma_{12}$ & $r$ & $g$ & Included \\
\midrule
\endhead
\midrule \multicolumn{16}{r}{Continued on next page}\\ \endfoot
\midrule
\multicolumn{16}{@{}p{21.0cm}@{}}{\footnotesize \textit{Notes}: $t_{co}$ and $t_{\delta}$ are the $t$-statistics for the pooled estimate $\hat\mu_{co}$ and the carryover gap $\hat\delta$, respectively; $|\hat\delta/\hat\mu_{co}|$ is the absolute carryover-to-effect ratio; $r=\sigma_2^2/\sigma_1^2$ is the period-variance ratio and $g=\sigma_{12}/\sigma_1^2$ is the standardized cross-period covariance (Section~\ref{sec:testing-carryover}). Included indicates whether the row is one of the 27 Group 1 + Group 2 study-outcomes used in Figure~\ref{fig:r-g-map} (\ref{app:litreview-parameters}); the other 3 rows are excluded because a sequence has fewer than 20 observations (paper 10).}\\
\bottomrule \endlastfoot
04\_002 & Wave 1/Wave 2 respondents forming a genuine within-unit fact-check-status crossover (D1 != D2); Right belief x certainty, very certain (0-1) & 336 & 163 & 173 & 0.48 & 0.65 & 0.006 & 0.024 & 4.157 & 0.188 & 0.130 & 0.066 & 0.690 & 0.350 & Yes \\
04\_003 & Wave 1/Wave 2 respondents forming a genuine within-unit fact-check-status crossover (D1 != D2); Right belief x certainty, very or moderately certain (0-1) & 336 & 163 & 173 & 1.21 & 0.93 & 0.019 & 0.050 & 2.594 & 0.352 & 0.303 & 0.158 & 0.860 & 0.449 & Yes \\
04\_004 & Wave 1/Wave 2 respondents forming a genuine within-unit fact-check-status crossover (D1 != D2); Belief accuracy scale, full scale (0-1) & 348 & 168 & 180 & 4.70 & 0.11 & 0.036 & 0.004 & 0.100 & 0.121 & 0.111 & 0.075 & 0.918 & 0.620 & Yes \\
08\_007 & Original experiment; author's question-order rank 1 versus 2; Policy opinion recoded 0-1 toward in-party cue & 621 & 297 & 324 & 4.84 & -2.08 & 0.081 & -0.087 & 1.070 & 0.448 & 0.436 & 0.095 & 0.974 & 0.213 & Yes \\
08\_008 & Original experiment; author's question-order rank 1 versus 3; Policy opinion recoded 0-1 toward in-party cue & 612 & 318 & 294 & 3.83 & -1.31 & 0.068 & -0.054 & 0.793 & 0.447 & 0.449 & 0.066 & 1.005 & 0.147 & Yes \\
08\_009 & Original experiment; author's question-order rank 1 versus 4; Policy opinion recoded 0-1 toward in-party cue & 635 & 301 & 334 & 6.74 & 0.80 & 0.114 & 0.032 & 0.279 & 0.423 & 0.438 & 0.069 & 1.035 & 0.162 & Yes \\
08\_010 & Original experiment; author's question-order rank 1 versus 5; Policy opinion recoded 0-1 toward in-party cue & 570 & 274 & 296 & 5.71 & -0.08 & 0.108 & -0.003 & 0.032 & 0.429 & 0.452 & 0.032 & 1.052 & 0.075 & Yes \\
08\_011 & Original experiment; author's question-order rank 1 versus 6; Policy opinion recoded 0-1 toward in-party cue & 424 & 207 & 217 & 4.50 & -1.58 & 0.092 & -0.078 & 0.847 & 0.440 & 0.431 & 0.081 & 0.980 & 0.184 & Yes \\
08\_012 & Original experiment; author's question-order rank 1 versus 7; Policy opinion recoded 0-1 toward in-party cue & 236 & 119 & 117 & 2.82 & -1.49 & 0.085 & -0.093 & 1.096 & 0.410 & 0.471 & 0.015 & 1.148 & 0.037 & Yes \\
08\_013 & Original experiment; author's question-order rank 1 versus 8; Policy opinion recoded 0-1 toward in-party cue & 45 & 25 & 20 & -0.64 & -1.25 & -0.037 & -0.193 & 5.273 & 0.356 & 0.431 & 0.117 & 1.211 & 0.329 & Yes \\
09\_014 & October 2020 Lucid sample; first Table 1 comparison; Support for police abolition (1-5) & 1107 & 539 & 568 & -0.27 & -1.75 & -0.007 & -0.231 & 32.043 & 5.577 & 5.572 & 4.036 & 0.999 & 0.724 & Yes \\
10\_016 & Main experiment; post 1 versus post 2; Perceived pundit bias (0-1) & 12 & 9 & 3 & -0.95 & -0.82 & -0.022 & -0.311 & 14.000 & 0.367 & 0.337 & 0.342 & 0.918 & 0.930 & No \\
10\_017 & Main experiment; post 1 versus post 3; Perceived pundit bias (0-1) & 16 & 6 & 10 & -2.77 & 0.88 & -0.153 & 0.133 & 0.870 & 0.144 & 0.093 & 0.036 & 0.647 & 0.248 & No \\
10\_018 & Main experiment; post 1 versus post 4; Perceived pundit bias (0-1) & 32 & 17 & 15 & -2.28 & 0.10 & -0.085 & 0.014 & 0.167 & 0.179 & 0.172 & 0.094 & 0.963 & 0.528 & No \\
16\_021 & Flemish local party chairs; paired aspirant evaluations; Perceived competence (1-7) & 597 & 303 & 294 & 1.79 & -1.16 & 0.074 & -0.137 & 1.844 & 3.468 & 2.792 & 1.066 & 0.805 & 0.307 & Yes \\
16\_022 & Flemish local party chairs; paired aspirant evaluations; Perceived trustworthiness (1-7) & 597 & 303 & 294 & 2.04 & -0.06 & 0.066 & -0.008 & 0.118 & 2.887 & 2.651 & 1.513 & 0.918 & 0.524 & Yes \\
16\_023 & Flemish local party chairs; paired aspirant evaluations; Perceived ideological position (1-7) & 597 & 303 & 294 & -8.48 & -0.68 & -0.427 & -0.085 & 0.198 & 3.588 & 4.036 & 0.791 & 1.125 & 0.221 & Yes \\
22\_026 & Pooled trials; chronological trial-order rank 1 versus 2; Factual-belief accuracy (1-4) & 1011 & 507 & 504 & 7.22 & -0.29 & 0.257 & -0.033 & 0.128 & 4.549 & 4.311 & 1.873 & 0.948 & 0.412 & Yes \\
22\_027 & Pooled trials; chronological trial-order rank 1 versus 3; Factual-belief accuracy (1-4) & 389 & 203 & 186 & 4.14 & -0.71 & 0.237 & -0.128 & 0.540 & 4.505 & 4.379 & 1.906 & 0.972 & 0.423 & Yes \\
47\_035 & caseworker vignettes; presentation 1 versus 2; Willingness to reduce burdens (0-10) & 529 & 258 & 271 & 5.13 & 0.62 & 0.868 & 0.245 & 0.282 & 36.695 & 34.429 & 5.370 & 0.938 & 0.146 & Yes \\
47\_036 & caseworker vignettes; presentation 1 versus 3; Willingness to reduce burdens (0-10) & 515 & 248 & 267 & 5.63 & 0.49 & 0.966 & 0.191 & 0.198 & 34.321 & 34.322 & 4.182 & 1.000 & 0.122 & Yes \\
50\_039 & Study 1; issue 1 versus issue 2; Sincerity index & 96 & 49 & 47 & -0.82 & -0.22 & -0.029 & -0.023 & 0.803 & 0.348 & 0.396 & 0.138 & 1.136 & 0.396 & Yes \\
50\_040 & Study 1; issue 1 versus issue 3; Sincerity index & 113 & 59 & 54 & 1.29 & 0.15 & 0.048 & 0.015 & 0.313 & 0.416 & 0.456 & 0.123 & 1.097 & 0.295 & Yes \\
50\_041 & Study 1; issue 1 versus issue 4; Sincerity index & 97 & 48 & 49 & -0.44 & 0.72 & -0.017 & 0.079 & 4.799 & 0.390 & 0.446 & 0.156 & 1.143 & 0.400 & Yes \\
50\_042 & Study 1; issue 1 versus issue 5; Sincerity index & 96 & 48 & 48 & -0.13 & -2.28 & -0.004 & -0.251 & 57.421 & 0.401 & 0.390 & 0.171 & 0.974 & 0.427 & Yes \\
50\_043 & Study 1; issue 1 versus issue 6; Sincerity index & 96 & 48 & 48 & 0.85 & 0.73 & 0.042 & 0.076 & 1.799 & 0.499 & 0.463 & 0.022 & 0.928 & 0.044 & Yes \\
50\_044 & Study 1; issue 1 versus issue 7; Sincerity index & 100 & 50 & 50 & -1.04 & 0.41 & -0.045 & 0.038 & 0.827 & 0.420 & 0.363 & 0.017 & 0.864 & 0.042 & Yes \\
50\_045 & Study 1; issue 1 versus issue 8; Sincerity index & 103 & 53 & 50 & 0.88 & 0.63 & 0.039 & 0.055 & 1.425 & 0.431 & 0.351 & -0.002 & 0.813 & -0.005 & Yes \\
50\_046 & Study 1; issue 1 versus issue 9; Sincerity index & 114 & 57 & 57 & 1.10 & 1.14 & 0.043 & 0.111 & 2.570 & 0.431 & 0.442 & 0.093 & 1.026 & 0.215 & Yes \\
50\_047 & Study 1; issue 1 versus issue 10; Sincerity index & 93 & 46 & 47 & -0.08 & 0.61 & -0.003 & 0.061 & 17.639 & 0.397 & 0.439 & 0.034 & 1.107 & 0.085 & Yes \\
\end{longtable}

%% file: appendix_csp_calibration.tex
\subsection{Calibration to Clifford, Sheagley, and Piston Study 1}\label{app:csp-calibration}
%{\color{red}\small[TK-REVIEW]}

\begin{table}[!h]
\centering
\input{tables/clifford_calibration}
\caption{The two-step procedure applied to the within-subject arm of Study 1 of
\textcite{clifford2021increasing}, $n_{1}=227$, $n_{2}=226$.}
\label{tab:clifford-calibration}
\end{table}

This calibration applies the power and coverage results to the within-subject
arm of Study 1 of \textcite{clifford2021increasing}, using the deposited
respondent-level data. Sequence $(1,0)$ contains $n_{1}=227$ complete cases and
sequence $(0,1)$ contains $n_{2}=226$. The welfare outcome has three support
points. All variance components in Table~\ref{tab:clifford-calibration} are
calculated from the two sequence-specific pairs rather than inferred from a
parametric outcome model.

The estimated components place the study at $r=0.896$ and $g=0.514$: feasible
($g^{2}\le r$: $0.514^{2}=0.264\le0.896$) and, since
Proposition~\ref{prop:undercoverage} places the undercoverage region at
$g>(r-3)/2$ -- here $(0.896-3)/2=-1.052$ -- well inside it, with
$\tfrac{1}{4}\sigma_{co}^{2}/\sigma_{1}^{2}=0.217$. Since the pooled and
post-only intervals are $\hat\mu_{co}\pm z_{\alpha}\sqrt{\sigma_{co}^{2}/(4n)}$
and $\hat\mu_{1}\pm z_{\alpha}\sqrt{\sigma_{1}^{2}/n}$ respectively
(\ref{app:grizzle}), the ratio of their half-widths is
$\sqrt{\tfrac14\sigma_{co}^{2}/\sigma_{1}^{2}}=\sqrt{0.217}\approx0.466$.
Thus the pooled interval is
about $53\%$ shorter than the post-only interval. The same components imply
$\sigma_{\delta}/\sigma_{1}=1.71$, which produces the power comparison in
Figure~\ref{fig:power_function}: at 80\% power, the smallest detectable treatment
effect is $0.20$ scale points, whereas the smallest detectable carryover gap is
$0.34$.

Evaluating Equations~\eqref{eq:grizzle-coverage}--\eqref{eq:coverage-terms} at these components, the two-step interval covers with probability $0.924$ at $\delta=0$. Its
worst-case coverage is $0.433$, attained at a carryover gap of $0.188$ on the
three-point scale, as shown in Figure~\ref{fig:csp-coverage}. At that gap, the
study's own carryover test has power $0.348$, while the treatment-effect
test, evaluating $\psi_1$ from Section~\ref{sec:testing-carryover} at the
realized post-only estimate $\hat\mu_1=-0.289$ (se $0.070$,
$t_1\equiv\hat\mu_1/\mathrm{se}(\hat\mu_1)=-4.12$) in place of the unknown
$\mu_1$, has power $\psi_1(\hat\mu_1)=0.984$. The configuration that does
the most damage to the interval is therefore one the diagnostic would miss
about two times in three in a study amply powered for the treatment effect.

%% file: tables/clifford_calibration.tex
\begin{tabular}{lr}
\toprule
Quantity & Value \\
\midrule
Sequence sizes $(n_1, n_2)$ & $(227,\ 226)$ \\
$\sigma_1^2,\ \sigma_2^2,\ \sigma_{12}$ & $2.226,\ 1.995,\ 1.145$ \\
$r \equiv \sigma_2^2/\sigma_1^2$ & $0.896$ \\
$g \equiv \sigma_{12}/\sigma_1^2$ & $0.514$ \\
$\tfrac14\sigma_{co}^2 / \sigma_1^2$ (pooling is efficient iff $<1$) & $0.217$ \\
\addlinespace
Coverage at $\delta=0$ & $0.924$ \\
Worst-case coverage & $0.433$ \\
Carryover gap at the worst case & $0.188$ \\
Power of the carryover test at that gap & $0.348$ \\
Post-only ATE estimate $\hat\mu_1$ (se, $t$) & $-0.289$ ($0.070$, $-4.12$) \\
Power for the treatment effect at $\hat\mu_{1}$ & $0.984$ \\
$\mathrm{se}(\hat\mu_{co})/\mathrm{se}(\hat\mu_1)$ & $0.466$ \\
\bottomrule
\end{tabular}